\newcommand{\PP}{\mathcal P}
\newcommand{\sgn}{\mathrm{sgn}}
\newcommand{\q}[1]{``#1''}
\newcommand{\vc}[1]{\mathbf{#1}}
\newcommand{\RR}{\mathbf{R}}
\newcommand{\NN}{\mathbf{N}}
\newcommand{\Rk}[1]{\mathrm{Rank}(#1)}
\newcommand{\ce}[1]{\mathfrak{#1}}
\newcommand{\cec}[1]{\mathfrak{C}^{\mathrm{#1}}_G}
\newcommand{\cel}[1]{\mathfrak{L}^{\mathrm{\mathbf{#1}}}_G}
\newcommand{\celg}[2]{\mathfrak{L}^{\mathrm{\mathbf{#1}}}_{#2}}
\newcommand{\myvdots}{\raisebox{0.34\baselineskip}{\ensuremath{\vdots}}}
\newcommand{\adj}{\,\text{\textemdash}\,}
\newtheorem*{theorem-non}{Theorem}
\begin{document}
\title{Properties and Expressivity of\\Linear Geometric Centralities\thanks{This is an extended version of a paper presented at the 20th Workshop on Modelling and Mining Networks~\cite{BFFLGC}.}}
%
%\titlerunning{Abbreviated paper title}
% If the paper title is too long for the running head, you can set
% an abbreviated paper title here
%
\author{Paolo Boldi \and%\orcidID{0000-0002-8297-6255} \and
Flavio Furia \and%\orcidID{0009-0003-5265-4161}\and
Chiara Prezioso}
\authorrunning{P. Boldi et al.}

% First names are abbreviated in the running head.
% If there are more than two authors, 'et al.' is used.
%
\institute{Dipartimento di Informatica, Universit\`a degli Studi di Milano, Italy
            \\ \email{paolo.boldi@unimi.it}}
\maketitle              % typeset the header of the contribution
\begin{abstract}
Centrality indices are used to rank the nodes of a graph by importance: this is a common need in many concrete situations (social networks, citation networks, web graphs, for instance) and it was discussed many times in sociology, psychology, mathematics and computer science, giving rise to a whole zoo of definitions of centrality.
Although they differ widely in nature, many centrality measures are based on shortest-path distances: such centralities are often referred to as \emph{geometric}. Geometric centralities can use the shortest-path-length information in many different ways, but most of the existing geometric centralities can be defined as a linear transformation of the distance-count vector (that is, the vector containing, for every index $t$, the number of nodes at distance $t$).

In this paper we study this class of centralities, that we call \emph{linear (geometric) centralities}, in their full generality. In particular, we look at them in the light of the axiomatic approach, and we study their expressivity: we show to what extent linear centralities can be used to distinguish between nodes in a graph, and how many different rankings of nodes can be induced by linear centralities on a given graph. The latter problem (which has a number of possible applications, especially in an adversarial setting) is solved by means of a linear programming formulation, which is based on Farkas' lemma, and is interesting in its own right.
\keywords{network centrality, centrality axioms, linear transformation, shortest paths, Farkas' lemma}
\end{abstract}
\section{Introduction and Motivation}
\label{sec:intro}

Network science is a discipline focused on analyzing network structures in order to obtain information that can be used to understand the underlying domain of the network itself.
One key problem that network scientists often face is to determine the importance of nodes in a network: this is crucial for understanding the dynamics of the system and for making informed decisions based on the network's structure. Of course, the definition of importance is not unique, and it can vary depending on the context and the specific properties of the network being analyzed. 
Centralities have been defined and re-defined many times over the years, in different contexts and with different approaches, and they are used to rank nodes in a network by their importance~\cite{BoVAC}.

In this work, we study indices that are often referred to as \emph{geometric}, as they solely depend on how many nodes exist at every (shortest-path) distance.
Notable examples of geometric centralities include in-degree, closeness~\cite{BavMMGS,BavCPTOG}, and harmonic~\cite{BoVAC} centrality.
In particular, we propose and study a wide class of geometric measures that can be computed as linear combination of the number of nodes at every distance: these centralities will be referred to as \emph{linear geometric centralities} (or just linear centralities, for short).
The vast majority of the most used geometric centralities are linear (in particular, all centralities mentioned above are equivalent to some linear centrality).

A special case of linear centralities are those assigning larger weights to shorter distances: the intuition here is that nearby connections are more relevant than long-range ones.
These types of decaying centralities first appeared in~\cite{KisOCFG} (albeit with the equivalent but opposite postulation that smaller centrality values denote a higher importance). In~\cite{KiTTCFDG} the same class was extended to directed graphs, and several of its properties were explored.

In more recent years, decaying centralities appeared in~\cite{SkSADBC}, where they are called additive and studied through the so-called ``axiomatic approach'' (pioneered in~\cite{SabCIG}).
Another example of this approach used to study these centralities can be found in~\cite{GarAFCN} and~\cite{BDFSRSMCBDDC}.
In the latter, the impact of edge additions to the final scores and rankings yielded by these indices was investigated (obtaining, independently, results similar to those in~\cite{KisOCFG}).

In the present paper, we relax the assumption of decaying weights, and study linear geometric centralities in their full generality: our aim is to analyze their properties, expressivity and to understand whether they capture a broader range of behaviors than decaying centralities. As it often happens, studying broader and more abstract classes of centralities allows us to understand better the properties of narrower classes. Our work focuses especially on expressivity, following the path opened, for instance, by~\cite{ScBRCSN} in that it attempts to find general results for large sets of centralities.

In particular, the main goal of this work is to answer the following questions:
\begin{itemize}
    \item what can we say about general properties of linear centralities as a function of the vector of coefficients? from an axiomatic perspective: what numerical properties of the vector are related to common axioms, e.g., those discussed in~\cite{BoVAC}?
    \item given a pair of linear geometric centralities, is there always a graph on which they rank nodes differently? in other words: how different are linear centralities from one another?
    \item given a graph, what is the maximum number of rankings of its nodes that linear centralities can induce? how can we characterize precisely those rankings?
\end{itemize} 
The latter problem is particularly interesting, as it has a number of possible applications, especially in an adversarial setting: for instance, if we want to design a graph so that all linear centralities will rank some nodes in a certain way (say, $x$ is always ranked better than $y$, no matter what linear centrality is used), we can employ the results of this paper to determine whether it is possible to do so.

\section{Notations and Basic Definitions}
\label{sec:notations}

For every natural number $n$, we let $[n]=\{0,1,\dots,n-1\}$.
\paragraph{Graphs.} A \emph{graph} $G=(V_G,E_G)$ is given~\cite{BolMGT} by a finite set of nodes $V_G$ and\footnote{For this and similar notations, the subscript $G$ is dropped whenever it is clear from the context.} a set of arcs $E_G\subseteq V_G \times V_G$. 
Without loss of generality, we always assume that $V=[n]$ where $n$ is the number of nodes.

We write $x\to y$ to mean that $x,y\in V$ and $(x,y)\in E$. A graph is \emph{undirected} iff $x\to y$ implies $y\to x$. When drawing undirected graphs, pairs of opposite arcs are represented as a single undirected edge $x\adj y$.

A \emph{path} of length $k$ from $x\in V$ to $y\in V$ is a sequence $(x_0,x_1,\dots,x_k)$ of nodes such that $x=x_0$, $y=x_k$ and $x_i \to x_{i+1}$ for all $i\in [k]$. The \emph{(shortest path) distance} from $x$ to $y$ in $G$, denoted by $d_G(x,y)$, is the length of a shortest path from $x$ to $y$, or $\infty$ if no path from $x$ to $y$ exists.

Note that in this paper, unless otherwise specified, we will consider \emph{directed} graphs, where $d(x,y)$ and $d(y,x)$ can be different (it can even be the case that one is finite and the other is not!). 

\paragraph{Graph homomorphisms.} A \emph{graph homomorphism} $\varphi: G \to H$ is a function $\varphi: V_G \to V_H$ such that $x \to_G y$ iff $\varphi(x) \to_H \varphi(y)$, for all $x,y \in V_G$. A bijective homomorphism $\varphi: G \to H$ is called an \emph{isomorphism}; it is an \emph{automorphism} if $G=H$. A graph with no non-trivial automorphism (i.e., its only automorphism is the identity) is called \emph{rigid}.

\paragraph{Matrices and vectors.} In this paper, we shall use column vectors; all vectors and matrices are conveniently indexed starting from 0. So, if $\vc x \in \RR^n$ and $A\in \RR^{m \times n}$ then

{
    \begin{center}
    $\vc x = 
    \setlength{\arraycolsep}{2pt}
    \renewcommand\arraystretch{1}
    \begin{bmatrix}
        x_0\\
        x_1\\
        \vdots\\ 
        x_{n-1}
    \end{bmatrix}$
    \qquad\text{and}\qquad
    $A =
    \setlength{\arraycolsep}{2pt}
    \renewcommand\arraystretch{1}
    \begin{bmatrix}
        a_{0,0} & a_{0,1} & \dots & a_{0,n-1}\\
        a_{1,0} & a_{1,1} & \dots & a_{1,n-1}\\
        \vdots & \vdots & \ddots & \vdots \\
        a_{m-1,0} & a_{m-1,1} & \dots & a_{m-1,n-1}
    \end{bmatrix}$
    \end{center}
}

\begin{comment}
\[
    \vc x = \left(
    \begin{array}{c}
    x_0\\
    x_1\\
    \dots\\ 
    x_{n-1}
    \end{array}
    \right)
\qquad\text{and}\qquad
    A=\left(
    \begin{array}{llll}
    a_{0,0} & a_{0,1} & \dots & a_{0,n-1}\\
    a_{1,0} & a_{1,1} & \dots & a_{1,n-1}\\
    \dots & \dots \\
    a_{m-11,0} & a_{m-1,1} & \dots & a_{m-1,n-1}
    \end{array}
    \right).
\]
\end{comment}

\paragraph{Infinite vectors and matrix product.} If $\vc a \in \RR^\NN$, we view $\vc a$ as an infinite vector of real numbers. 
If $A\in \RR^{m \times n}$, we write $A\cdot \vc a$ as a shortcut for
\[
    A\cdot \vc a[:\!n]
\]
where $\vc a[:\!n]$ stands for the vector of the first $n$ entries of $\vc a$.

\paragraph{Permutations.} For every $n$, we let $S_n$ be the symmetric group of all permutations (i.e., bijections) $\pi: [n] \to [n]$. The permutation matrix $R_\pi$ is the binary $n \times n$ matrix such that
\[
    \left(R_\pi\right)_{ij}=1 \text{ iff } \pi(i)=j.
\]

\paragraph{Signum.} We define $\sgn: \RR \to \RR$ as the function defined by
\[
    \sgn(x) = \begin{cases}
        -1 & \text{if $x<0$}\\
        0  & \text{if $x=0$}\\
        1  & \text{if $x>0$}.
    \end{cases}
\]

\section{Centralities, Permutations and Agreements}
\label{sec:centr_perm_agree}
A \emph{(graph) centrality} $\ce f$~\cite{BENAMF} is a function associating with each graph $G$ a map $\ce f_G: V_G \to \RR$ such that for any two graphs $G,H$, if $\varphi: G \to H$ is an isomorphism then 
\[
    \ce f_G(x)=\ce f_H(\varphi(x))
\]
for all $x \in V_G$.

While usually centralities are restricted to having non-negative values only, here we are going to relax this requirement and allow for negative centrality values.
As usual, the value $\ce f_G(x)$ is interpreted as a score indicating how central (i.e., important) node $x$ is in $G$: larger values imply greater importance.

We say that $\ce f$ has \emph{no ties on $G$} iff $\ce f_G(x) \neq \ce f_G(y)$ whenever $x,y \in V_G$ and $x\neq y$. Note that a centrality always has ties on non-rigid graphs: if $\phi: G \to G$ is an automorphism then by definition $\ce f(\phi(x))=\ce f(x)$ for all $x$.% such that $\phi(x)\neq x$.

Here are some notable examples of centralities\footnote{Some of these definitions were originally given for undirected graphs only. Here we are providing the adaptations for the general (i.e., directed) case, with the usual proviso that incoming paths are more interesting than outgoing paths.} (see, for instance,~\cite{BoVAC} for further examples and taxonomy):
\begin{itemize}
    \item in-degree: $\cec{in}(x)=|\{y \mid y \to x\}|$;
    \item closeness~\cite{BavMMGS}: $\cec{cl}(x)=1/\sum_{y \in V, d(y,x)<\infty} d(y,x)$;
    \item Lin~\cite{LinFSR}: $\cec{Lin}(x)=|\{y \in V \mid d(y,x)<\infty\}|^2/\sum_{y \in V, d(y,x)<\infty} d(y,x)$;
    \item harmonic~\cite{BoVAC}: $\cec{harm}(x)=\sum_{y \in V} 1/d(y,x)$, with the proviso that $1/\infty=0$;
    \item Katz~\cite{KatNSIDSA}: $\cec{Katz}(x)=\sum_{\pi} \beta^{|\pi|}$, where the sum ranges over all paths $\pi$ ending in $x$, and $\beta$ is a parameter smaller than the reciprocal of the spectral radius of $G$;
    \item betweenness~\cite{AntRG}: $\cec{bet}(x)=\sum_{y,z,\sigma_{yz}\neq0} \sigma_{yz}(x)/\sigma_{yz}$, where the sum ranges over all pairs of nodes $y\neq z$, $\sigma_{yz}$ is the number of shortest paths from $y$ to $z$, and $\sigma_{yz}(x)$ is the number of such paths passing through $x$.
\end{itemize}

In most cases, more than the actual value a centrality assigns to each node, we are interested in how nodes are ranked with respect to a given centrality. This concept is captured formally by the following definition:

\begin{definition}[Respecting a permutation]
Given a graph $G$ of $n$ nodes, we say that $\ce f$ \emph{respects} the permutation $\pi \in S_n$ on $G$ iff $\ce f_G(\pi(i))\ge\ce f_G(\pi(i+1))$ for all $i \in [n-1]$ (that is, $\pi$ orders the nodes of $G$ in non-increasing order of their centrality values with respect to $\ce f$).
\end{definition}

Note that if $\ce f$ has no ties on $G$ then there is a unique permutation that $\ce f$ respects on $G$, called the \emph{permutation induced by $\ce f$}.

\begin{definition}[Agreement between centralities]
Two centralities $\ce f$ and $\ce g$ \emph{agree} on a graph $G$ iff they respect exactly the same permutations on $G$. (In particular, if they both have no ties on $G$, then they must induce the same permutation).
Two centralities that agree on all graphs are called \emph{equivalent}.
\end{definition}

For example, the following two centralities:
\begin{eqnarray*}
    \cec{ccl}(x)&=& \frac{|V|}{\sum_{y \in V, d(y,x)<\infty} d(y,x)}\qquad\text{``classical'' closeness}\\
    \cec{np}(x)&=&-\sum_{y \in V, d(y,x)<\infty} d(y,x)\qquad\text{negative peripherality}
\end{eqnarray*}
are both equivalent to $\cec{cl}$ (closeness). Classical closeness differs from closeness only by a positive multiplicative constant $|V|$, that has no impact on the ranking of nodes. For negative peripherality, instead of taking the reciprocal of the sum of distances to $x$, we take the \emph{opposite} of the sum of distances: although the actual values are no doubt different, $\cec{np}$ induces the same orders on the nodes as the one induced by $\cec{cl}$. 

\section{Geometric and Linear Centralities}

In the wide world of centralities, many depend only on the number of nodes at a(ny) given distance. As explained in Section~\ref{sec:intro}, in this paper we focus on this class.

\begin{definition}[Distance-count function]
Given a graph $G$ with $n$ nodes, and a node $i\in V$, let the \emph{distance-count function of $G$ at $i$} be the function $c_{G,i}:\NN\to\NN$ defined by
\[
    c_{G,i}(k) = |\{j\in V: d(j,i)=k\}|,
\]
that is, the number of nodes at distance $k$ to $i$.
\end{definition}
In~\cite{BH90}, the first $n$ values of the distance-count function at a node $i$ are called \emph{distance-degree sequence} of $i$:
in fact, we do not need to take arbitrary distances into account, because $c_{G,i}(k)=0$ for all $k\geq n$, because two nodes cannot have \emph{finite} distance larger than $n-1$. Moreover, note also that $c_{G,i}(0)=1$ always, because there is exactly one node at distance $0$ from $i$, that is, $i$ itself. Differently from~\cite{KiTTCFDG}, we shall not consider the distance-count function at infinity, i.e., we do not consider the number of nodes that cannot reach $i$.

A centrality is geometric if it is essentially dependent only on distance counts:

\begin{definition}[Geometric centralities]
A centrality $\ce f$ is \emph{strictly geometric} iff for all graphs $G$ and $G'$ and all nodes $i \in V_G$ and $i' \in V_{G'}$ we have
\[
    c_{G,i}=c_{G',i'}   \text{ implies } \ce f_G(i)=\ce f_{G'}(i').
\]
A centrality $\ce f$ is \emph{geometric} iff it is equivalent to one that is strictly geometric.
\end{definition}

For instance, in-degree, closeness, Lin, negative peripherality and harmonic are all strictly geometric: their value on a node $x$ only depends on how many nodes are at each distance from $x$. Classical closeness ($\cec{ccl}$) 
is not strictly geometric: to compute its value on $x$ you need to 
know also the number of nodes in the graph, which is not deducible only from the distance-count function unless the graph is strongly connected. Nonetheless, $\cec{ccl}$ is equivalent to closeness ($\cec{cl}$), so $\cec{ccl}$ is geometric.

Conversely, Katz centrality is not geometric (because it does not depend on shortest paths, but on \emph{all} paths), neither is betweenness (because it does not depend on the length of shortest paths, but on their number).

\smallskip
Geometric centralities depend only on distance counts, but they may do so in many different ways. We are now going to focus further our attention on the case where the dependence is linear. We begin by defining:

\begin{definition}[Distance-count matrix]
The \emph{distance-count matrix} of a graph $G$ of $n$ nodes is the matrix $\,C=C_G \in \RR^{n \times n}$ such that the $i$-th row of $C$ is $\vc c_{G,i}[:\!n]$. Said otherwise, $c_{i,k}$ is the number of nodes at distance $k$ to $i$, for all $i,k \in [n]$.
\end{definition}
Distance-degree sequences represented as matrices also appeared in~\cite{RGRNM} with the name of \emph{neighbor matrices}.

In Figure~\ref{fig:mdc_example}, we show an example of distance-count matrix: node $0$ (first row) has 1 node at distance $0$ (itself), and $n-1$ nodes at distance 1; node $1$ (second row) has 1 node at distance $0$, 1 node at distance 1 (node 0), and $n-2$ nodes at distance 2; and so on.

{
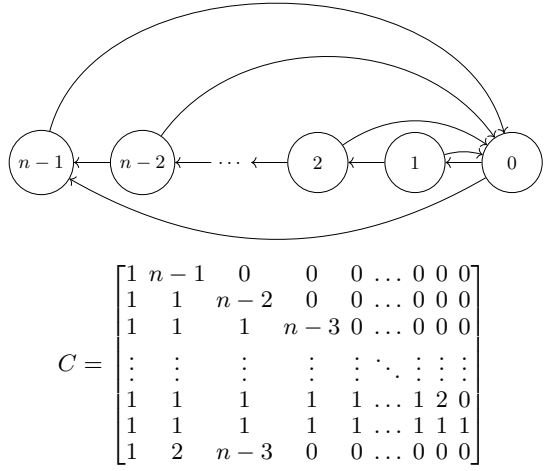
\begin{figure}
    \centering

    \scalebox{0.8}{
    \begin{tikzpicture}[main/.style = {draw, circle, minimum size=10mm}, node distance=0.6cm]
        \node[main] (x0) {$0$};
        \node[main] (x1) [left=of x0] {$1$};
        \node[main] (x2) [left=of x1] {$2$};
        \node (dots) [left=of x2] {$\dots$};
        \node[main] (x3) [left=of dots] {$n-2$};
        \node[main] (x4) [left=of x3] {$n-1$};

        \path[->]
        (x0) edge (x1)
        (x1) edge (x2)
        (x2) edge (dots)
        (dots) edge (x3)
        (x3) edge (x4)
        
        (x1) edge[bend angle=15, bend left] (x0)
        (x2) edge[bend angle=35, bend left] (x0)
        (x3) edge[bend angle=55, bend left] (x0)
        (x4) edge[bend angle=75, bend left] (x0)
        (x0) edge[bend angle=30, bend left] (x4);
    \end{tikzpicture}}
    $C = 
    \setlength{\arraycolsep}{2pt}
    \renewcommand\arraystretch{0.9}
    \begin{bmatrix}
        1 & n-1 & 0 & 0 & 0 & \dots & 0 & 0 & 0 \\
        1 & 1 & n-2 & 0 & 0 & \dots & 0 & 0 & 0 \\
        1 & 1 & 1 & n-3 & 0 & \dots & 0 & 0 & 0 \\
        \vdots & \vdots & \vdots & \vdots & \vdots & \ddots & \vdots & \vdots & \vdots \\
        1 & 1 & 1 & 1 & 1 & \dots & 1 & 2 & 0 \\
        1 & 1 & 1 & 1 & 1 & \dots & 1 & 1 & 1 \\
        1 & 2 & n-3 & 0 & 0 & \dots & 0 & 0 & 0 \\
    \end{bmatrix}$

    \caption{\label{fig:mdc_example}A graph with $n$ nodes, and its distance-count matrix $C$.}
\end{figure}
}

\begin{definition}[Linear centrality]
Given $\vc a \in \RR^\NN$ (the \emph{coefficient vector}), the centrality $\cel{a}$ is defined by
\[
    \cel{a}(i)=\left(C_G \cdot \vc a\right)_i.
\]
A centrality is \emph{strictly linear (geometric)} if it is $\cel{a}$ for some $\vc a \in \RR^\NN$.
A centrality is \emph{linear (geometric)} if it is equivalent to a strictly linear (geometric) centrality.
\end{definition}
For instance,
\begin{itemize}
    \item in-degree is strictly linear, with coefficients $(0,1,0,0,\dots)^T$: its value on $x$ is exactly the number of in-neighbors of $x$ (i.e., nodes at distance $1$ to $x$);
    \item negative peripherality is strictly linear, with coefficients $(0,-1,-2,-3,\dots)^T$;
    \item harmonic is strictly linear, with coefficients $(0,1,1/2,1/3,1/4,\dots)^T$;
    \item closeness and classic closeness are not strictly linear, but they are all linear, since they are all equivalent to negative peripherality.
\end{itemize}
Note that the value of $a_0$ is a constant additive factor that does not affect the ranking of nodes, so we can assume without loss of generality that $a_0=0$, unless otherwise stated.

A support for the computation of general linear centralities has been made available from within the WebGraph framework~\cite{BoVWFI}, a widely used library for the analysis of large graphs. The framework provides \texttt{LinearGeometricCentrality}, a class that can be used to compute any linear centrality, given the coefficient vector $\vc a$.

\subsection{Lin is not linear}
Note that a geometric centrality is not necessarily also linear.
For example, Lin centrality is equivalent to negative peripherality only when the underlying graph is strongly connected, as demonstrated by the following counterexample.

Consider the disconnected undirected graph $G$ in Figure~\ref{fig:lin_counterexample};
if we look at the Lin centrality (refer to Section~\ref{sec:centr_perm_agree} for the formula) of nodes $u$, $v$, $x$, and $y$, we have that:
\[
    \cec{Lin}(u) = \frac{8}{3} < \frac{9}{3} = \cec{Lin}(v) \quad\text{ and }\quad \cec{Lin}(x) = \frac{50}{9} > \frac{49}{9} =  \cec{Lin}(y).
\]
For Lin to be linear, we need to provide a strictly linear centrality to which it is equivalent, i.e., we need a coefficient vector $\vc a$ such that $\cel{a}(u) < \cel{a}(v)$ and $\cel{a}(x) > \cel{a}(y)$.
This means that we need a solution for the following system of inequalities:
\begin{align*}
\begin{cases}
    a_0 + a_1 + a_2 + a_3 < a_0 + a_1 + a_2 \\
    a_0 + 3(a_1 + a_2 + a_3) > a_0 + 3(a_1 + a_2),
\end{cases}
\end{align*}
which obviously has no solution, as it requires both $a_3<0$ and $a_3>0$.

Therefore, there exists no linear centrality equivalent to Lin's centrality on the graph in Figure~\ref{fig:lin_counterexample}.

\begin{figure}
    \centering
    \begin{tikzpicture}[main/.style = {draw, circle, minimum size=3.5mm}, node distance=0.5cm]
        \node[main] (x) {$u$};
        \node[main] (x1) [above=of x] {};
        \node[main] (x2) [above=of x1] {};
        \node[main] (x3) [above=of x2] {};

        \draw (x) -- (x1);
        \draw (x1) -- (x2);
        \draw (x2) -- (x3);

        \node (sp) [right=of x] {};
        \node[main] (y) [right=of sp] {$v$};
        \node[main] (y1) [above=of y] {};
        \node[main] (y2) [above=of y1] {};

        \draw (y) -- (y1);
        \draw (y1) -- (y2);

        \node (sp1) [right=of y] {};
        \node (sp2) [right=of sp1] {};
        \node[main] (u) [right=of sp2] {$x$};
        \node[main] (u12) [above=of u] {};
        \node[main] (u11) [left=of u12] {};
        \node[main] (u13) [right=of u12] {};
        \node[main] (u22) [above=of u12] {};
        \node[main] (u21) [left=of u22] {};
        \node[main] (u23) [right=of u22] {};
        \node[main] (u32) [above=of u22] {};
        \node[main] (u31) [left=of u32] {};
        \node[main] (u33) [right=of u32] {};

        \draw (u) -- (u11);
        \draw (u) -- (u12);
        \draw (u) -- (u13);
        \draw (u11) -- (u21);
        \draw (u12) -- (u22);
        \draw (u13) -- (u23);
        \draw (u21) -- (u31);
        \draw (u22) -- (u32);
        \draw (u23) -- (u33);

        \node (sp3) [right=of u] {};
        \node (sp4) [right=of sp3] {};
        \node (sp5) [right=of sp4] {};
        \node[main] (v) [right=of sp5] {$y$};
        \node[main] (v12) [above=of v] {};
        \node[main] (v11) [left=of v12] {};
        \node[main] (v13) [right=of v12] {};
        \node[main] (v22) [above=of v12] {};
        \node[main] (v21) [left=of v22] {};
        \node[main] (v23) [right=of v22] {};

        \draw (v) -- (v11);
        \draw (v) -- (v12);
        \draw (v) -- (v13);
        \draw (v11) -- (v21);
        \draw (v12) -- (v22);
        \draw (v13) -- (v23);

    \end{tikzpicture}
    \caption{\label{fig:lin_counterexample}A disconnected undirected graph showing that Lin centrality is not linear.}
\end{figure}
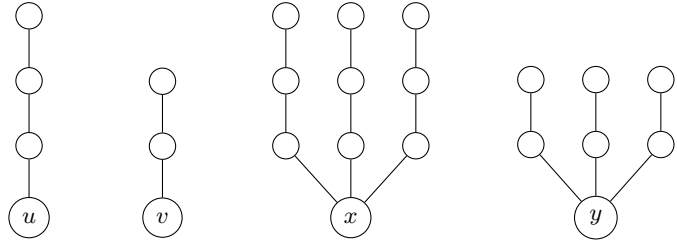

\section{Centralities, Ties and Rigidity}

By its very definition, every centrality (including, of course, all geometric centralities) has ties on all non-rigid graphs. We can say more:

\begin{theorem}
    \label{thm:rigid_noties}
    All centrality measures have ties on graphs that are not rigid. On the other hand, there exists a centrality measure $\ce f$ such that, for all rigid graphs $G$, $\ce f$  has no ties on $G$.
\end{theorem}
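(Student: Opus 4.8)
\textbf{The first statement} is immediate from the observation already recorded in Section~\ref{sec:centr_perm_agree}. If $G$ is not rigid, pick a non-trivial automorphism $\phi\colon G\to G$ and a node $x$ with $\phi(x)\neq x$; applying the defining invariance property of a centrality to the isomorphism $\phi$ yields $\ce f_G(x)=\ce f_G(\phi(x))$, so $\ce f$ has ties on $G$. Nothing more is needed here.

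\textbf{For the second statement}, the plan is to exhibit one concrete centrality $\ce f$ whose value at a node is a faithful (injective) encoding of the isomorphism type of the corresponding \emph{pointed graph}. Call $(G,x)$, with $x\in V_G$, a pointed graph, and say that $(G,x)$ and $(H,y)$ are isomorphic if there is a graph isomorphism $\varphi\colon G\to H$ with $\varphi(x)=y$. There are only countably many finite pointed graphs up to isomorphism, so the idea is to fix once and for all an injection $\mathrm{code}$ from isomorphism classes of finite pointed graphs into $\RR$ (indeed into $\NN\subseteq\RR$), and to set $\ce f_G(x):=\mathrm{code}\big([(G,x)]\big)$.

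To make $\mathrm{code}$ explicit, and to make it manifest that it is well defined on isomorphism classes, I would use a canonical form. Given $(G,x)$ with $|V_G|=n$, range over all bijections $\sigma\colon V_G\to[n]$ with $\sigma(x)=0$; for each such $\sigma$ form the pair consisting of $n$ and the $n\times n$ adjacency matrix of the relabelled graph $\sigma(G)$, encode this pair as a natural number by any fixed injective (e.g.\ prefix-free) encoding, and let $\mathrm{code}\big([(G,x)]\big)$ be the minimum of these numbers over all admissible $\sigma$. Minimizing over relabellings that keep the base point at $0$ makes the result depend only on the isomorphism class of $(G,x)$, so $\ce f$ satisfies the invariance condition in the definition of a centrality; and the map is injective on isomorphism classes, because from the minimizing number one recovers a representative labelled pointed graph, hence its class. (The size and computational cost of $\mathrm{code}$ are irrelevant: the statement only asks for the existence of such an $\ce f$.)

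\textbf{Finally}, fix a rigid graph $G$ and two distinct nodes $x\neq y$. If $(G,x)$ and $(G,y)$ were isomorphic as pointed graphs, the witnessing isomorphism would be an automorphism of $G$ sending $x$ to $y\neq x$, i.e.\ a non-trivial automorphism, contradicting rigidity. Hence $[(G,x)]\neq[(G,y)]$, and therefore $\ce f_G(x)=\mathrm{code}\big([(G,x)]\big)\neq\mathrm{code}\big([(G,y)]\big)=\ce f_G(y)$, so $\ce f$ has no ties on $G$. The only genuinely delicate point in the whole argument is the construction of $\mathrm{code}$: it must be simultaneously isomorphism-invariant and injective, and this is exactly what the ``minimum over base-point-preserving relabellings of the adjacency-matrix encoding'' canonical form guarantees.
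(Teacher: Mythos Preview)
Your argument is correct. Both you and the paper rely on lexicographic canonization, but the packaging differs. The paper works only on the rigid graph $G$: it observes that all relabelled adjacency matrices $A_\pi$ are distinct (else $\rho^{-1}\circ\pi$ would be a non-trivial automorphism), takes the unique lex-minimizing permutation $\pi$, and sets $\ce f_G(x)=\pi(x)$. You instead encode the isomorphism class of the \emph{pointed} graph $(G,x)$ by minimizing over relabellings that pin $x$ at $0$. Your route has two small advantages: isomorphism invariance of $\ce f$ is immediate by construction (the paper leaves this implicit), and your $\ce f$ is automatically defined on \emph{all} graphs, not just rigid ones, so there is no need to extend the definition off the rigid case. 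The paper's route, on the other hand, yields values in $[n]$ directly and makes the ``no ties'' conclusion a one-liner (a permutation is injective). Either way, the core mechanism---a lex-min canonical form separating orbits---is the same.
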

\begin{proof}
    We exploit the lexicographic canonization process described in~\cite{babai1983canonical}: let $G$ be a rigid graph with $n$ nodes, and $A_\pi$ be the adjacency matrix of $G$ when its nodes are permuted according to $\pi \in S_n$. Suppose that $A_\pi=A_\rho$ for some choice of permutations $\pi, \rho \in S_n$; then for all $x,y \in V_G$:
    \[
        A_{\pi(x)\pi(y)} = A_{\rho(x)\rho(y)},
    \]
    that is, $\pi(x) \to_G \pi(y)$ iff $\rho(x) \to_G \rho(y)$. But then $\rho^{-1}\circ\pi$ would be an automorphism of $G$, hence $\pi=\rho$. So, the matrices $A_\pi$ are all distinct; hence, there is a unique $\pi$ such that $A_\pi$ is minimal in lexicographic order (we can order matrices lexicographically by considering the string obtained concatenating all of their rows). Now you can define $\ce f_G$ to be such a permutation $\pi: V_G \to [n]$: since $\pi$ is a permutation, by definition it is injective.
\qed\end{proof}

Geometric centralities, though, are more limited and may give rise to unsolvable ties even on rigid graphs. 
For instance, the graph in Figure~\ref{fig:rig_nongrigid} has no non-trivial automorphisms, so according to Theorem~\ref{thm:rigid_noties} there is a centrality that produces no ties on it; nonetheless, all geometric centralities give the same value to nodes $1$ and $2$, because the corresponding rows in its distance-count matrix $C_G$ are the same. Let us provide a formal definition of this limitation:

\begin{definition}[Geometrically rigid]
A graph $G$ is \emph{geometrically rigid} if the rows of its distance-count matrix $C_G$ are all distinct.
\end{definition}

In~\cite{BH90}, geometrically rigid graphs are also called \emph{distance-degree injective}, in opposition to \emph{distance-degree regular} graphs, i.e., those with all nodes having the same distance-degree sequence.

{
\begin{figure}
    \centering
    \begin{tabular}{cccc}
    \scalebox{0.7}{
    \begin{tikzpicture}[main/.style = {draw, circle}, node distance=1cm,baseline=(current bounding box.center)]
        \node[main] (0) at (0, 3) {$0$};
        \node[main] (1) at (0, 0) {$1$};
        \node[main] (2) at (3, 0) {$2$};

        \draw[->=latex] (0) to [bend left] (1);
        \draw[->=latex] (1) to [bend left] (0);
        \draw[->=latex] (1) to (2);
        \draw[->=latex] (2) to (0);
    \end{tikzpicture}}
    & 
    $A_G = 
    \setlength{\arraycolsep}{6pt}
    \renewcommand\arraystretch{1}
    \begin{bmatrix}
        0 & 1 & 0 \\
        1 & 0 & 1 \\
        1 & 0 & 0 
    \end{bmatrix}$
    & $\quad$ &
    $C_G = 
    \setlength{\arraycolsep}{6pt}
    \renewcommand\arraystretch{1}
    \begin{bmatrix}
    1 & 2 & 0 \\
    1 & 1 & 1 \\
    1 & 1 & 1 
    \end{bmatrix}$
    \end{tabular}
    \caption{\label{fig:rig_nongrigid}A rigid graph that is not geometrically rigid, its adjacency matrix $A_G$ and its distance-count matrix $C_G$.}
\end{figure}
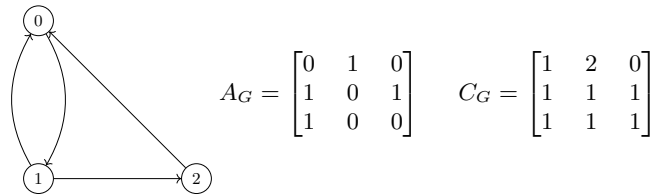
}

Clearly, every geometrically rigid graph is also rigid, but not the other way round, as the graph in Figure~\ref{fig:rig_nongrigid} shows. Nonetheless, we can relate the two notions as follows:

\begin{theorem}
    \label{thm:grigid_noties}
    All geometric centrality measures have ties on graphs that are not geometrically rigid. On the other hand, there exists a geometrical centrality measure $\ce f$ such that, for all geometrically rigid graphs $G$, $\ce f$  has no ties on $G$.
\end{theorem}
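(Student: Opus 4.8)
The plan is to treat the two claims separately; the first is almost immediate from the definitions, while the second calls for an explicit construction in the spirit of the proof of Theorem~\ref{thm:rigid_noties}.

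\emph{First claim.} Suppose $G$ is not geometrically rigid, so that by definition two rows of $C_G$ coincide: there are $i\neq j$ in $V_G$ with $\vc c_{G,i}[:\!n]=\vc c_{G,j}[:\!n]$, equivalently $c_{G,i}=c_{G,j}$. Any strictly geometric centrality $\ce g$ then satisfies $\ce g_G(i)=\ce g_G(j)$ (apply the defining implication with $G'=G$ and $i'=j$), so $\ce g$ has a tie on $G$. To upgrade this to an arbitrary geometric centrality $\ce f$ equivalent to such a $\ce g$, I would first record the elementary fact that a centrality $\ce h$ has no ties on $G$ if and only if exactly one permutation of $V_G$ is respected by $\ce h$ on $G$: distinct values force a unique sorting permutation, whereas a repeated value lets one swap the two nodes that share it within the sorted order and obtain a second respected permutation. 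Since equivalent centralities respect the same permutations on every graph, $\ce f$ must have a tie on $G$ too.

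\emph{Second claim.} Because a strictly geometric centrality is by definition a function of the distance-count function alone, it suffices to exhibit an injective map $\Phi$ from the set of all distance-count functions into $\RR$ and set $\ce f_G(i)=\Phi(c_{G,i})$. Distance-count functions are finitely supported sequences of natural numbers — indeed $c_{G,i}(k)=0$ for all $k\geq n$ when $G$ has $n$ nodes — so, writing $p_k$ for the $k$-th prime, I would take
\[
    \Phi(c)=\prod_{k\geq 1} p_k^{\,c(k)},
\]
a finite product and hence a well-defined positive integer, whose injectivity is unique factorization. One checks routinely that $\ce f$ is a genuine centrality: an isomorphism $\varphi\colon G\to H$ preserves distances, hence $c_{G,x}=c_{H,\varphi(x)}$ and $\ce f_G(x)=\ce f_H(\varphi(x))$; and $\ce f$ is strictly geometric by construction, hence geometric. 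Finally, if $G$ is geometrically rigid the rows $\vc c_{G,i}[:\!n]$ are pairwise distinct, and since all entries of index $\geq n$ vanish the functions $c_{G,i}$ themselves are pairwise distinct, so injectivity of $\Phi$ yields $\ce f_G(i)\neq\ce f_G(j)$ whenever $i\neq j$; that is, $\ce f$ has no ties on $G$.

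I do not anticipate a serious obstacle. The only points deserving care are the reduction from strictly geometric to merely geometric centralities in the first claim (handled by the ``unique respected permutation'' characterization of having no ties) and, in the second claim, making sure $\Phi$ is well defined and that the resulting $\ce f$ really is a strictly geometric centrality; the prime encoding is chosen purely for concreteness, and any injection from finitely supported integer sequences to $\RR$ would serve equally well.
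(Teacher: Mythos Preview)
Your proof is correct and, in two respects, takes a slightly different route from the paper's. For the second claim, the paper defines $\ce f_G(x)$ as the rank of $x$ in the lexicographic order of the rows of $C_G$; your prime-factorization encoding $\Phi(c)=\prod_{k\geq 1}p_k^{c(k)}$ has the advantage of being manifestly \emph{strictly} geometric (the value at $x$ depends only on $c_{G,x}$, not on the other rows of $C_G$), and it foreshadows the proof of Theorem~\ref{thm:grigid_noties_linear}, where the same prime encoding reappears in logarithmic form. For the first claim, you are actually more careful than the paper: the paper's one-sentence justification (``geometric centralities are a function of the distance-count function of the node'') is literally true only for \emph{strictly} geometric centralities, whereas your reduction---having ties is equivalent to respecting more than one permutation, and equivalent centralities respect the same permutations---closes the gap between ``strictly geometric'' and ``geometric'' explicitly.
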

\begin{proof}
    Define $\ce f_G(x)$ to be the rank of $x$ in the lexicographic order of the rows of $C_G$. Since by definition the rows of $C_G$ are all distinct, the lexicographic order has no ties. On the other hand, if two rows of $C_G$ are the same, the corresponding nodes will have the same value for all geometric
    centralities (because, by definition, geometric centralities are a function of the distance-count function of the node).
\qed\end{proof}

A similar statement can be proven also for \emph{linear} centralities; in particular, we can show that there is always a linear centrality with no ties on geometrically rigid graphs:

\begin{theorem}
    \label{thm:grigid_noties_linear}
    There exists a choice of coefficients $\vc a \in \RR^\NN$ such that, for all graphs $G$, if $G$ is geometrically rigid then $\cel{a}$ has no ties on $G$.
\end{theorem}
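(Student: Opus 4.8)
\smallskip
The plan is to reduce the ``no ties'' requirement to a condition on integer vectors, and then to satisfy all of the (infinitely many) resulting conditions at once by choosing the entries of $\vc a$ to be linearly independent over $\mathbf Q$.

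First I would reformulate the goal. Fix a candidate $\vc a\in\RR^\NN$ and a geometrically rigid graph $G$ with $n$ nodes. By definition $\cel{a}(i)=\sum_{k\in[n]}c_{i,k}a_k$, so for distinct $i,j\in V_G$,
\[
    \cel{a}(i)-\cel{a}(j)=\sum_{k\in[n]}(c_{i,k}-c_{j,k})\,a_k .
\]
Since $G$ is geometrically rigid, rows $i$ and $j$ of $C_G$ differ, so $\vc d=(c_{i,0}-c_{j,0},\dots,c_{i,n-1}-c_{j,n-1})$ is a \emph{nonzero} vector with integer entries; moreover $c_{i,0}=c_{j,0}=1$, hence $d_0=0$. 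Therefore $\cel{a}$ has no ties on $G$ if and only if $\sum_{k=1}^{n-1}d_k a_k\neq 0$ for every such difference vector $\vc d$, i.e.\ for every nonzero integer vector that occurs as a difference of two rows of $C_G$.

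The key observation is then that the difference vectors arising in this way, ranging over \emph{all} geometrically rigid graphs of \emph{all} sizes, form a subset of the (countable) set of finitely supported integer vectors whose $0$-th entry vanishes. Hence it suffices to pick $\vc a$ with $a_0=0$ (which is harmless anyway, as already noted) and with $a_1,a_2,a_3,\dots$ \emph{linearly independent over $\mathbf Q$}: for such a vector, every nontrivial finite integer combination $\sum_{k\ge 1}d_k a_k$ is nonzero, which is exactly the condition required above, uniformly over all graphs. Such a sequence exists; concretely one may take $a_k=\pi^k$ for $k\geq 1$, since $\sum_{k\geq 1}d_k\pi^k=0$ with the $d_k$ integers not all zero would exhibit $\pi$ as a root of a nonzero integer polynomial, contradicting its transcendence. (Alternatively, in every finite-dimensional truncation the set of ``bad'' coefficient vectors is a countable union of hyperplanes $\{\vc a : \vc d\cdot\vc a=0\}$, hence Lebesgue-null, so good vectors abound.)

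There is no genuinely hard step here; the point to get right --- and the reason the statement is not entirely trivial --- is that a \emph{single} vector $\vc a$ must break ties simultaneously on infinitely many graphs of unbounded size. The whole content of the argument is the passage from ``one forbidden hyperplane per unordered pair of nodes per graph'' to ``one rational-independence requirement on the coordinates of $\vc a$'', after which the construction is immediate.
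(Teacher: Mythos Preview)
Your proof is correct and rests on the same idea as the paper's: choose the $a_k$ to be linearly independent over $\mathbf Q$, so that no nonzero integer combination of them can vanish. The only difference is the concrete witness---the paper takes $a_k=\log p_k$ with $p_k$ the $k$-th prime and invokes unique factorization (equivalently, $\mathbf Q$-independence of the $\log p_k$), whereas you take $a_k=\pi^k$ and invoke the transcendence of $\pi$; both instantiate the same abstract requirement, and your added remark that the bad set is a countable union of hyperplanes (hence null) is a nice bonus showing that almost every choice works.
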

\begin{proof}
    Define $\vc a=(\log p_0,\log p_1,\log p_2,\dots)^T$, where $p_i$ is the $i$-th prime number.
    Then,
    \[
    \cel a(i) = \sum_{j}c_{i,j} \log p_j = \log\Big(\prod_{j} p_j^{c_{i,j}}\Big).
    \]
    Since $\log(x)=\log(y)$ iff $x=y$, $\cel a$ produces a tie iff there are two distinct nodes $i$ and $i'$ such that the products on the right-hand side are equal; clearly, these products are prime factorizations, and this would imply that $c_{i,j}=c_{i',j}$ for all $j$, contradicting the initial hypothesis that $G$ is geometrically rigid.
\qed\end{proof}

\section{Axioms for Linear Centralities}\label{sec:axioms}

While many special instances of linear centralities were studied before, adopting our broader view may be helpful to get a deeper understanding of the most general conditions underlying their properties. As an example in this direction, following a quite common approach, we will consider the axioms described in~\cite{BoVAC} and determine under which general circumstances a linear centrality satisfies them. 

\subsection{Size and density axioms} 

The first two axioms that we are going to study are related to two specific (families of) graphs.

\begin{figure}[htbp]
    \centering
    \begin{tikzpicture}[main/.style = {draw, circle, minimum size=4mm}, scale=1.1]
        \foreach \n in {1,...,8} {
            \ifthenelse{\n = 7}
            {\node[main] (v\n) at (45+\n*45:1.3) {\footnotesize $\,x\,$}}
            {\node[main] (v\n) at (45+\n*45:1.3) {\phantom{\footnotesize $\,x\,$}}};}
        \foreach \i in {1,...,8} {
            \foreach \j in {\i,...,8} {
                \draw (v\i) -- (v\j);}}
        \begin{scope}[shift={(4,0)}]
        \foreach \n in {1,...,8} {
            \ifthenelse{\n = 3}
            {\node[main] (u\n) at (45+\n*45:1.3) {\footnotesize $y$}}
            {\node[main] (u\n) at (45+\n*45:1.3) {\phantom{\footnotesize $y$}}};}
        \tikzset{edge/.style = ->}
        \draw[edge] (u1) -- (u2);
        \draw[edge] (u2) -- (u3);
        \draw[edge] (u3) -- (u4);
        \draw[edge] (u4) -- (u5);
        \draw[edge] (u5) -- (u6);
        \draw[edge] (u6) -- (u7);
        \draw[edge] (u7) -- (u8);
        \draw[edge] (u8) -- (u1);
        \draw[dashed] (u3) -- (v7);
        \end{scope}
        \end{tikzpicture}
    \caption{\label{fig:size_density} A $k$-clique (left) and a directed $p$-cycle (right), possibly connected through a bridge $x\adj y$. We let $S$ be the disconnected graph without the bridge and $S_{xy}$ the strongly connected graph with the bridge.}
\end{figure}
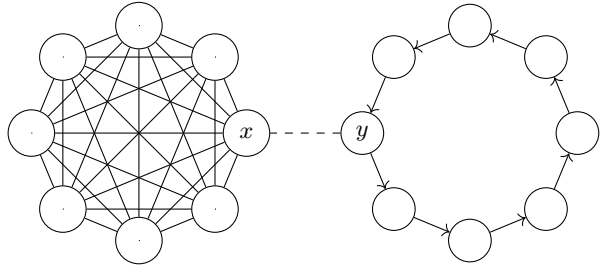

For given positive integers $k,p$, consider the graphs $S$ and $S_{xy}$ defined in Figure~\ref{fig:size_density}. Recall that~\cite{BoVAC}:

\begin{definition}[Density axiom]
    We say that a centrality $\ce f$ satisfies the \emph{density axiom} iff for every fixed $k \geq 3$ and $p=k$, $\ce f_{S_{xy}}(x) > \ce f_{S_{xy}}(y)$.
\end{definition}

\begin{definition}[Size axiom]
    We say that a centrality $\ce f$ satisfies the \emph{size axiom} iff for every fixed $p$, there exists $k_0$ such that for all $k\geq k_0$, $\ce f_S(x) > \ce f_S(y)$, and for every fixed $k$, there exists $p_0$ such that for all $p\geq p_0$, $\ce f_S(x) < \ce f_S(y)$.
\end{definition}

Given a coefficient vector $\vc a$, the geometric linear centralities of $x$ and $y$ in the two graphs $S$ and $S_{xy}$ are given by:
\begin{alignat*}{2}
    \celg{a}{S_{xy}}(x) &= a_0 + k \cdot a_1 + \sum_{i=2}^p a_i && \celg{a}{S}(x) = a_0 + (k-1) \cdot a_1 \\
    \celg{a}{S_{xy}}(y) &= a_0 + 2 \cdot a_1 + (k-1) \cdot a_2 + \sum_{i=2}^{p-1} a_i \qquad\quad && \celg{a}{S}(y) = \sum_{i=0}^{p-1} a_i.
\end{alignat*}

For the density axiom, we have that:
\begin{proposition}\label{prop:density}
$\celg{a}{}$ satisfies the density axiom iff:
\[
        a_k > k(a_2 - a_1) + (2a_1 - a_2)
\]
for all $k \geq 3$.
\end{proposition}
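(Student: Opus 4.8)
The statement is a direct unravelling of the definition of the density axiom, so the proof is a short algebraic calculation; the only thing to be careful about is that the axiom quantifies over all $k\ge 3$ (with $p=k$), so the final characterization must be an equivalence of families of inequalities, one per value of $k$.

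First I would specialize the formulas displayed just before the proposition to the case $p=k$ required by the density axiom, obtaining
\[
    \celg{a}{S_{xy}}(x) = a_0 + k\,a_1 + \sum_{i=2}^{k} a_i,
    \qquad
    \celg{a}{S_{xy}}(y) = a_0 + 2\,a_1 + (k-1)\,a_2 + \sum_{i=2}^{k-1} a_i .
\]
Then I would write the defining inequality $\celg{a}{S_{xy}}(x) > \celg{a}{S_{xy}}(y)$ and simplify: the additive constant $a_0$ cancels, and, since $\sum_{i=2}^{k} a_i = a_k + \sum_{i=2}^{k-1} a_i$, the truncated sums $\sum_{i=2}^{k-1} a_i$ cancel as well, leaving the equivalent inequality $k\,a_1 + a_k > 2\,a_1 + (k-1)\,a_2$. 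Isolating $a_k$ gives $a_k > (k-1)a_2 - (k-2)a_1$, and expanding the right-hand side yields exactly $k(a_2-a_1) + (2a_1-a_2)$, as claimed.

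Finally I would observe that for each fixed $k\ge 3$ the term $a_k$ occurs only in $\celg{a}{S_{xy}}(x)$ (and nowhere on the $y$ side, nor for any other value of $k$), so the inequalities obtained for distinct values of $k$ are logically independent; hence $\celg{a}{}$ satisfies the density axiom, i.e.\ the inequality $\celg{a}{S_{xy}}(x) > \celg{a}{S_{xy}}(y)$ holds for every $k\ge 3$, precisely when $a_k > k(a_2-a_1)+(2a_1-a_2)$ holds for every $k\ge 3$. There is no real obstacle here beyond bookkeeping the index ranges of the partial sums; the substitution $p=k$ and the cancellation of $\sum_{i=2}^{k-1}a_i$ are the steps where an off-by-one slip would be easiest to make.
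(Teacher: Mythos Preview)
Your proposal is correct and follows essentially the same approach as the paper: both reduce the density axiom to the single inequality $\celg{a}{S_{xy}}(x)>\celg{a}{S_{xy}}(y)$ with $p=k$, simplify by cancelling $a_0$ and the common partial sum, and rearrange to isolate $a_k$. The only cosmetic difference is that the paper computes the difference $\celg{a}{S_{xy}}(x)-\celg{a}{S_{xy}}(y)$ for general $p$ first and then substitutes $p=k$, whereas you substitute $p=k$ before taking the difference; your closing remark on the independence of the inequalities for distinct $k$ is extra commentary not present in the paper.
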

\begin{proof}
We start observing that     
\begin{align*}
        \celg{a}{S_{xy}}(x) - \celg{a}{S_{xy}}(y) & = (k-2) a_1 + (1-k) a_2 + a_p \\
        & = k (a_1 - a_2) + (a_2 - 2a_1) + a_p. 
\end{align*}
So, the condition
\[
        \celg{a}{S_{xy}}(x) > \celg{a}{S_{xy}}(y)
\]
for $p=k$ is equivalent to 
\[
        a_k > k(a_2 - a_1) + (2a_1 - a_2).
\]\qed
\end{proof}

We now turn to the size axiom:

\begin{proposition}\label{prop:size}
    $\cel{a}$ satisfies the size axiom iff $a_1 > 0$ and $\sum_{i} a_i$ diverges.
\end{proposition}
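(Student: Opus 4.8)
The plan is to rewrite the two clauses of the size axiom using the explicit values $\celg{a}{S}(x)=a_0+(k-1)a_1$ and $\celg{a}{S}(y)=\sum_{i=0}^{p-1}a_i$ recorded above, and then analyse the resulting strict inequalities as $k\to\infty$ (first clause) or as $p\to\infty$ (second clause). I will abbreviate $\sigma_N=\sum_{i=0}^{N}a_i$, so that $\celg{a}{S}(y)=\sigma_{p-1}$, and I will read ``$\sum_i a_i$ diverges'' as ``$\sigma_N\to+\infty$''; the argument below in fact produces exactly this, and also shows that the weaker reading (``the series does not converge'') would make the statement false, e.g.\ for an $\vc a$ with $a_1>0$ whose tail coefficients oscillate.

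For the ``if'' direction I would fix $\vc a$ with $a_1>0$ and $\sigma_N\to+\infty$ and verify the two clauses directly. For a fixed $p$, the map $k\mapsto a_0+(k-1)a_1$ is affine with positive slope, hence eventually larger than the constant $\sigma_{p-1}$, which gives the first clause. For a fixed $k$, the value $a_0+(k-1)a_1$ is a constant while $\sigma_{p-1}\to+\infty$ as $p\to\infty$, hence eventually smaller than $\sigma_{p-1}$, which gives the second clause.

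For the ``only if'' direction I would first pin down the sign of $a_1$ and then the divergence. If $a_1<0$, then for any fixed $p$ the quantity $a_0+(k-1)a_1$ tends to $-\infty$ and cannot eventually exceed $\sigma_{p-1}$, contradicting the first clause, so $a_1\ge 0$. If $a_1=0$, then $\celg{a}{S}(x)=a_0$ is constant in $k$, so the first clause forces $a_0>\sigma_{p-1}$, i.e.\ $\sum_{i=1}^{p-1}a_i<0$, for every $p$, while the second clause applied with $k=1$ forces $a_0<\sigma_{p-1}$, i.e.\ $\sum_{i=1}^{p-1}a_i>0$, for all large $p$ --- a contradiction; hence $a_1>0$. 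Finally, subtracting $a_0$, the second clause says: for every $k$ there is $p_0$ with $\sum_{i=1}^{p-1}a_i>(k-1)a_1$ for all $p\ge p_0$. Since $a_1>0$, the right-hand sides $(k-1)a_1$ are unbounded above as $k$ ranges over the positive integers, so this is precisely the assertion that $\sum_{i=1}^{p-1}a_i\to+\infty$, i.e.\ $\sigma_N\to+\infty$.

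I expect the only real obstacle to be the bookkeeping in the ``only if'' direction when $a_1=0$: that case is excluded by neither clause on its own, and one must play the two clauses against each other (the first forcing a tail partial sum to be negative, the second forcing it to be positive) to reach a contradiction. Everything else reduces to routine asymptotics of an affine function of $k$ and of the partial sums $\sigma_{p-1}$ in $p$.
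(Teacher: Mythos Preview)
Your proposal is correct and follows essentially the same route as the paper: rewrite the two clauses using the explicit values of $\celg{a}{S}(x)$ and $\celg{a}{S}(y)$, then analyse the resulting inequalities as $k\to\infty$ and $p\to\infty$. Your treatment is in fact more complete than the paper's: you spell out the ``if'' direction separately (the paper only argues necessity and leaves sufficiency implicit), and you unpack the $a_1=0$ case by playing the two clauses against each other, whereas the paper simply asserts that the two conditions become ``incompatible'' in that case. Your side remark that ``diverges'' must be read as $\sigma_N\to+\infty$ rather than merely ``does not converge'' is also a useful clarification that the paper's statement leaves ambiguous.
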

\begin{proof}
    The size axiom requires that
    \begin{equation}
        \label{eqn:sizefixp}
        \celg{a}{S}(x) > \celg{a}{S}(y)
    \end{equation}
    holds ultimately (in $k$) for every fixed $p$, whereas 
    \begin{equation}
        \label{eqn:sizefixk}
        \celg{a}{S}(x) < \celg{a}{S}(y)
    \end{equation}
    should hold ultimately (in $p$) for every fixed $k$.
    This time we have:
    \[
        \celg{a}{S}(x) - \celg{a}{S}(y) = a_0 + (k-1) a_1 - \sum_{i=0}^{p-1} a_i.
    \]
    For~\eqref{eqn:sizefixp} to hold, we must have that for every fixed $p$
    \begin{equation}
        \label{eqn:sizefixpbis}
        (k-1) a_1 > \sum_{i=1}^{p-1} a_i
    \end{equation}
    holds for sufficiently large $k$. If $a_1<0$, there is an upper bound on the largest $k$ for which~\eqref{eqn:sizefixpbis} can hold:
    \[
        k < \frac{\sum_{i=1}^{p-1} a_i}{a_1} + 1.   
    \]
    So we must have $a_1 \geq 0$.
    For~\eqref{eqn:sizefixk} to hold, we must have that for every fixed $k$
    \begin{equation}
        \label{eqn:sizefixkbis}
        (k-1) a_1 < \sum_{i=1}^{p-1} a_i
    \end{equation}
    holds for sufficiently large $p$.  If $a_1=0$, this condition is incompatible with~\eqref{eqn:sizefixpbis}.

    We conclude that $a_1>0$. On the other hand, when $a_1>0$,~\eqref{eqn:sizefixkbis} implies the statement.\qed
\end{proof}

Power-law and exponential decay centralities~\cite{BDFSRSMCBDDC} are special cases of linear centralities;
the former have coefficients of the form $a_i = 1/i^\gamma$ for some $\gamma>0$ and $i>0$, with $\gamma=1$ corresponding to harmonic centrality. 
The latter, instead, have coefficients of the form $a_i = \delta^i$ for some $\delta\in(0,1)$ and $i>0$;
in both cases, as usual, we have $a_0=0$.
From the properties above we can conclude that:

\begin{corollary}
    Power-law decay centralities always satisfy the density axiom, and they satisfy the size axiom iff $\gamma\leq1$.
\end{corollary}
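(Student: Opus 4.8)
The plan is to derive both halves from the characterizations already proved in Propositions~\ref{prop:density} and~\ref{prop:size}, specialized to the coefficient vector with $a_0=0$ and $a_i=i^{-\gamma}$ for $i\ge 1$. For the size axiom this is essentially immediate: Proposition~\ref{prop:size} says the size axiom holds iff $a_1>0$ and $\sum_i a_i$ diverges. Here $a_1=1>0$ unconditionally, and $\sum_{i\ge 1} i^{-\gamma}$ is the $p$-series, which diverges precisely when $\gamma\le 1$. Hence the ``iff $\gamma\le 1$'' clause follows directly, with no extra work.

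For the density axiom, Proposition~\ref{prop:density} requires $a_k > k(a_2-a_1)+(2a_1-a_2)$ for all $k\ge 3$. Substituting $a_1=1$, $a_2=2^{-\gamma}$, $a_k=k^{-\gamma}$ and rearranging (move a $1$ to the left and factor, noting that for $\gamma>0$ and $k\ge 2$ both sides become negative, so multiplying by $-1$ flips the inequality), I would reduce the required condition to the equivalent inequality
\[
    1-k^{-\gamma} < (k-1)\bigl(1-2^{-\gamma}\bigr)\qquad\text{for all }k\ge 3 .
\]
The key observation is that the two sides are \emph{equal} at $k=2$; so, setting $h(k)=(k-1)(1-2^{-\gamma})-(1-k^{-\gamma})$, we have $h(2)=0$ and it suffices to show $h$ is strictly increasing on $[2,\infty)$. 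Differentiating, $h'(k)=(1-2^{-\gamma})-\gamma k^{-\gamma-1}$, which is itself increasing in $k$, so it is enough to check $h'(2)>0$; after clearing denominators this amounts to $2(2^{\gamma}-1)>\gamma$, which follows from the elementary bound $2^{\gamma}=e^{\gamma\ln 2}\ge 1+\gamma\ln 2$ (strict for $\gamma>0$), giving $2(2^{\gamma}-1)\ge 2\ln 2\cdot\gamma>\gamma$. Therefore $h(k)>h(2)=0$ for all $k>2$, in particular for every integer $k\ge 3$, and the density axiom holds for every $\gamma>0$.

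The only mildly delicate point is choosing the right normalization of Proposition~\ref{prop:density}'s inequality so that the boundary case $k=2$ appears as a visible equality; that is exactly what turns the verification into a one-line monotonicity argument instead of forcing an explicit analysis of the minimum of a function of $k$. Everything else — the algebraic rearrangement and the $p$-series dichotomy — is routine.
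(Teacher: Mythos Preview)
Your proof is correct. The size-axiom half is handled identically to the paper (an immediate appeal to Proposition~\ref{prop:size} and the $p$-series dichotomy), so nothing to add there.

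For the density axiom, you and the paper in fact study the same auxiliary function: your $h(k)$ equals the paper's $f(k)=k^{-\gamma}-k(2^{-\gamma}-1)-2+2^{-\gamma}$, and both arguments hinge on the observation $f(1)=f(2)=0$. The organization, however, differs. The paper first disposes of the range $\gamma\ge 1$ by a sign check on the right-hand side, and for $\gamma\in(0,1)$ computes the unique critical point of $f$ explicitly, invokes Rolle's theorem to place it in $(1,2)$, and then verifies $f(3)>0$ together with $\lim_{k\to\infty}f(k)=\infty$ to conclude monotonicity on $[3,\infty)$. Your route is leaner: you use only the single boundary value $h(2)=0$, observe that $h'$ is increasing (convexity of $h$), and reduce the whole question to the inequality $h'(2)>0$, i.e.\ $2(2^{\gamma}-1)>\gamma$, which follows from $e^{x}>1+x$. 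This avoids the case split on $\gamma$, the explicit formula for the critical point, and the appeal to Rolle. The paper's version makes the structure of $f$ (two roots, one turning point) more visible; yours trades that for a uniform one-line estimate.
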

\begin{proof}
    The only non-trivial part of the proof is showing that the density axiom holds.

    We have that $a_k = k^{-\gamma}$, for some $\gamma>0$. Consider the inequality of Proposition~\ref{prop:density}, that in this case becomes:
    \begin{equation}
        \label{eqn:proofpld}
        k^{-\gamma} > k(2^{-\gamma}-1) + 2 - 2^{-\gamma}.
    \end{equation}
    When $k=3$, the right-hand side is $2^{1-\gamma}-1$, which is negative or null for $\gamma \geq 1$ (and a fortiori for larger values of $k$, because the coefficient of $k$ itself is negative), so~\eqref{eqn:proofpld} certainly holds for all $k \geq 3$. 
    We are left to consider the case $\gamma \in (0,1)$. Let us look at the function
    \[
        f(k) = k^{-\gamma} - k(2^{-\gamma}-1) - 2 + 2^{-\gamma}.
    \]
    for $\gamma \in (0,1)$. 
    Since 
    \[
        f'(k)=-\gamma k^{-(\gamma+1)} - (2^{-\gamma}-1),
    \]
    the only zero of the derivative $f'(k)$ is
    \[
        k_0 = \exp\left(-\frac{\log\left(\frac{1-2^{-\gamma}}{\gamma}\right)}{\gamma+1}\right) = \left( \frac\gamma{1-2^{-\gamma}}\right)^{\frac1{\gamma+1}}.
    \]
    Since $f(1)=f(2)=0$, by Rolle's theorem we have that $k_0 \in (1,2)$.
    
    As a result, $f$ is monotonic for $k\geq 3$; its value at 3 is:
    \[
            f(3)=3^{-\gamma}-3(2^{-\gamma}-1) - 2 + 2^{-\gamma} = 3^{-\gamma} - 2^{1-\gamma} + 1 > 0.
    \]
    On the other hand, 
    \[
        \lim_{k \to \infty} f(k)=\infty.
    \]
 So the function $f(k)$ is increasing, hence positive, for $k \geq 3$.\qed
\end{proof}

\begin{corollary}
    Exponential decay centralities always satisfy the density axiom, and never satisfy the size axiom.
\end{corollary}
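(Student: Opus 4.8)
The plan is to reduce everything to the two characterizations already established, namely Propositions~\ref{prop:size} and~\ref{prop:density}, using the defining coefficient vector of exponential decay centralities: $a_0=0$ and $a_i=\delta^i$ for $i\ge 1$, with $\delta\in(0,1)$.

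For the size axiom I would simply apply Proposition~\ref{prop:size}. Although $a_1=\delta>0$, the relevant series is $\sum_{i\ge 1}a_i=\sum_{i\ge 1}\delta^i=\delta/(1-\delta)$, a convergent geometric series; since it does not diverge, the hypothesis of Proposition~\ref{prop:size} fails and the size axiom does not hold. This half is immediate.

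For the density axiom I would use Proposition~\ref{prop:density}, which reduces the claim to showing
\[
    \delta^k > k(\delta^2-\delta) + (2\delta-\delta^2)\qquad\text{for all }k\ge 3.
\]
Set $g(k)=\delta^k - k(\delta^2-\delta) - (2\delta-\delta^2)$. First I would evaluate $g(3)=\delta^3-2\delta^2+\delta=\delta(\delta-1)^2$, which is strictly positive because $\delta\in(0,1)$ (so $\delta>0$ and $\delta\ne 1$). Next I would compute the forward difference $g(k+1)-g(k)=\delta^{k+1}-\delta^k-(\delta^2-\delta)=(1-\delta)(\delta-\delta^k)$, which is nonnegative for every $k\ge 1$ since $\delta^k\le\delta$ whenever $\delta\in(0,1)$. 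Hence $g$ is non-decreasing on the integers $\ge 1$, so $g(k)\ge g(3)>0$ for all $k\ge 3$, which is exactly the inequality required by Proposition~\ref{prop:density}.

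The only step requiring a little care is the sign of $g(3)$: a crude estimate of the right-hand side of the density inequality breaks down when $\delta$ is close to $1$ (there the right-hand side at $k=3$ is itself positive), so the clean factorization $g(3)=\delta(\delta-1)^2$ is what carries the argument. Everything else is routine bookkeeping with a geometric progression.
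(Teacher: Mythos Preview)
Your proof is correct. The size-axiom half is handled the same way the paper does (convergent geometric series, hence Proposition~\ref{prop:size} fails), but for the density axiom you take a genuinely different and more elementary route. The paper treats $k$ as a real variable, differentiates $f(k)=\delta^k+k(\delta-\delta^2)+\delta^2-2\delta$, invokes Rolle's theorem on $f(1)=f(2)=0$ to locate the unique critical point in $(1,2)$, and then combines $f(3)=\delta(\delta-1)^2>0$ with $\lim_{k\to\infty}f(k)=\infty$ to conclude that $f$ is increasing and positive on $[3,\infty)$. You instead stay on the integers and compute the forward difference $g(k+1)-g(k)=(1-\delta)(\delta-\delta^k)\ge 0$, which together with the same factorization of $g(3)$ gives the result without any calculus. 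Your argument is shorter and avoids the appeal to Rolle; the paper's continuous-variable approach, on the other hand, mirrors exactly the structure of the power-law case treated just before, so it keeps the two proofs parallel.
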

\begin{proof}
    Once again, the only non-trivial part of the proof is showing that the density axiom holds.
    The inequality of Proposition~\ref{prop:density} in this case is:
    \[
        \delta^k > k(\delta^2 - \delta) + 2 \delta - \delta^2;
    \]
    thus, we want to show that
    \[
        f(k) = \delta^k + k(\delta - \delta^2) + \delta^2 - 2\delta 
    \]
    is positive for all $\delta\in(0,1)$ and $k\geq 3$.
    The derivative of the function is
    \[
        f'(k) = \delta^k \log\delta + \delta - \delta^2;
    \]
    let us call $k_0$ the value in which $f'(k)$ is null, that is,
    \[
        k_0 = \frac{\log\left(\frac{\delta^2 - \delta}{\log(\delta)}\right)}{\log(\delta)}.
    \]

    Noting that $f(1)=f(2)=0$, we can state by Rolle's theorem that $k_0\in(1,2)$, since it is the only stationary point of the function $f$.
    Finally, we highlight that
    \[
        f(3) = \delta^3 + 3(\delta - \delta^2) + \delta^2 - 2\delta = \delta^3 - 2\delta^2 + \delta =\delta(\delta-1)^2 > 0,
    \]
    and
    \[
        \lim_{k\to\infty} f(k) = \infty.
    \]
    Again, the function is increasing, hence positive, for $k\geq 3$.\qed

\end{proof}

In particular, among power-law and exponential decay centralities, harmonic has the property of satisfying both axioms, and it is the one doing so with the fastest possible decay.

\subsection{Score monotonicity}

We now consider the score monotonicity axiom~\cite{BoVAC}, that is related to what happens to the centrality of node $y$ when an arc $x \to y$ is added to an arbitrary graph.

\begin{definition}[Score monotonicity]
    A centrality measure is \emph{score monotone} if for every graph $G$ and every pair of nodes $x$, $y$ such that $x\not\rightarrow y$, when we add $x\rightarrow y$ to $G$ the centrality of $y$ increases.
\end{definition}

To characterize score monotonicity, it is useful to introduce (as  we did in~\cite{BDFSRSMCBDDC}) the \emph{discrete derivative operator} $\Delta$ on sequences: given a sequence $\vc a$, we define $\Delta \vc a$ as the sequence
\[
    \left(\Delta \vc a\right)_i = a_{i+1} - a_i
\]
for all indices $i>0$, and $(\Delta \vc a)_0=0$.
It is also useful to write $\vc a \geq 0$ ($\vc a \leq 0$, respectively) to mean that $a_i \geq 0$ ($a_i \leq 0$, resp.) for all $i$.

Score monotonicity for linear centrality can then be characterized by the following:

\begin{proposition}
    \label{prop:scoremon}
    Given a coefficient vector $\vc a$, the centrality $\celg{a}{}$ is score monotone iff $\vc a \geq 0$, $\Delta \vc a \leq 0$ and $\left(\Delta \vc a\right)_1 < 0$.
\end{proposition}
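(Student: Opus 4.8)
The plan is to describe precisely how inserting the arc $x\to y$ changes $y$'s centrality, and then prove the two implications separately. Throughout I use the convention $a_0=0$ (so $a_0\ge 0$ is automatic and $a_0$ is irrelevant, since the shell at distance $0$ to $y$ always contains exactly $y$). The key observation is that adding an arc \emph{into} $y$ can only shorten paths \emph{to} $y$, while leaving all distances \emph{to} $x$ unchanged (the new arc's head is $y$): writing $d,d'$ for the distance functions before and after the insertion, one gets $d'(z,y)=\min\bigl(d(z,y),\,d(z,x)+1\bigr)\le d(z,y)$ for every node $z$. Hence the centrality of $y$ changes by exactly
\[
    \sum_{z\in V}\Bigl(a_{d'(z,y)}-a_{d(z,y)}\Bigr),
\]
with the usual reading $a_\infty=0$ for nodes that (still) cannot reach $y$.

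For the ``if'' part, assume $\vc a\ge 0$, $\Delta\vc a\le 0$ and $(\Delta\vc a)_1<0$: then $a_1\ge a_2\ge a_3\ge\cdots\ge 0$ and $a_1>a_2$, so $a_1>0$, and for all indices $p\ge q\ge 1$ (allowing $p=\infty$ with $a_\infty:=0$) we have $a_q\ge a_p$. Since every node that moves satisfies $1\le d'(z,y)\le d(z,y)$ (and $z=y$ does not move), every summand above is $\ge 0$. It remains to exhibit one strictly positive summand: take $z=x$. Because $x\not\to y$ in $G$ we have $d(x,y)\ge 2$ (possibly $\infty$), whereas $d'(x,y)=1$, so the corresponding summand is $a_1-a_{d(x,y)}\ge a_1-a_2>0$ when $d(x,y)$ is finite, and $a_1-0=a_1>0$ when $d(x,y)=\infty$. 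Thus $y$'s centrality strictly increases, so $\celg{a}{}$ is score monotone.

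For the ``only if'' part, assume $\celg{a}{}$ is score monotone and read off the three conditions from tailored graphs. The inequality $(\Delta\vc a)_1<0$, i.e.\ $a_2<a_1$, comes from the graph on $\{x,w,y\}$ with arcs $x\to w\to y$: inserting $x\to y$ moves $x$ from distance $2$ to distance $1$ and moves nothing else, changing $y$'s centrality by $a_1-a_2$, which must be positive. For $a_j\ge 0$: when $j=1$, the two-vertex graph with no arcs already forces $a_1>0$; when $j\ge 2$, I would take a directed path $u_{j-2}\to\cdots\to u_0=x$ (a single vertex when $j=2$) plus $N$ extra sources each with a single arc into $u_{j-2}$, and with no node reaching $y$ before the insertion; adding $x\to y$ places $u_i$ at distance $i+1$ and every source at distance $j$, so $y$'s centrality changes by $\sum_{i=1}^{j-1}a_i+N a_j$, which cannot stay positive as $N\to\infty$ unless $a_j\ge 0$. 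Finally, for $(\Delta\vc a)_k\le 0$ with $k\ge 2$, I would take a directed path $v_{k-1}\to\cdots\to v_0=x$ with a tail $x\to b\to y$ appended (so $d(v_i,y)=i+2$), together with $N$ sources each pointing into $v_{k-2}$ (so they sit at distance $k+1$ to $y$ and $k-1$ to $x$); inserting $x\to y$ slides every $v_i$ one shell inwards — a telescoping contribution $a_1-a_{k+1}$ — while each source slides from shell $k+1$ to shell $k$, for a total change $(a_1-a_{k+1})+N(a_k-a_{k+1})$, which fails to stay positive for large $N$ unless $a_k\ge a_{k+1}$. Together with $a_2<a_1$, this gives $\Delta\vc a\le 0$.

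The crux is this ``only if'' direction, and in particular the fact that one cannot displace a single node between two distance shells in isolation: if after the insertion some node lies at distance $c$ to $y$, then, following a shortest path from it to $x$, there are nodes at every distance $1,2,\dots,c$ to $y$, so each construction carries an unavoidable ``cascade overhead''. The device that makes the argument work is to keep that overhead a fixed quantity and then swamp it with a tunable number $N$ of parallel sources funnelled through one shared vertex of the cascade, so that the sign of a single coefficient (or of a single difference $a_k-a_{k+1}$) alone decides whether $y$'s centrality can keep increasing as $N\to\infty$. The remaining work is routine bookkeeping — verifying that the added sources spawn no further cascade, that the stated distances are exact, and that $x\not\to y$ holds before each insertion — which should also be checked in the small cases $j\le 2$ and $k\le 2$.
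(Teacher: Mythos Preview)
Your proof is correct and follows essentially the same approach as the paper's. The ``if'' direction is identical; for the ``only if'' direction you build the same three families of counterexample graphs (isolated pair, a path with many sources at the far end, and a path with an intermediate node $b$ and many sources), the only cosmetic differences being that you attach the $N$ sources one step before the tip of the path rather than at the tip itself, and you argue via $N\to\infty$ where the paper computes an explicit threshold.
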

\begin{proof}
    The condition can be restated as follows: $a_1 > a_2$ and $a_i \geq a_{i+1} \geq 0$ for all $i$.
    Given a graph $G$ and a pair of nodes $x,y$ such that $x\not\rightarrow y$, let us call $G'$ the graph obtained from $G$ by adding the arc $x\rightarrow y$.
    Let us assume that $a_1 > a_i \geq a_{i+1} \geq 0$ for all $i\geq2$, and call $d'(z)=d_{G'}(z,y)$ and $d(z)=d_{G}(z,y)$ for all $z\in V_G$.
    Adding an arc in a graph cannot increase distances, that is:
    \begin{equation*}
        d'(z) \leq d(z) \text{ for all } z\in V_G,
    \end{equation*}
    which implies that $a_{d'(z)} \geq a_{d(z)}$, with the two inequalities being strict when $z=x$.
    Noting that linear centralities can be expressed also as summation over nodes, we can state that:
    \begin{align*}
        \celg{a}{G'}(y) = \sum_{z\in V_G} a_{d'(z)} 
        > \sum_{z\in V_G} a_{d(z)} = \celg{a}{G}(y),
    \end{align*}
    where, again, the inequality is strict because it is so when $z=x$.

    For the ``only if'' part of the proof, we consider the following cases, and provide for each of them an example on which $\celg{a}{}$ fails to be score monotone.
    
    \noindent\emph{(a) Case $a_1\leq 0$.} Suppose $a_1 \leq 0$ and consider the following graph $G$: 
    \[
        \begin{tikzpicture}[main/.style = {draw, circle}, node distance=1cm]
            \node[main] (y) at (0, 0) {$y$};
            \node[main] (x) at (1.5, 0) {$x$};
            \draw[<-,dashed] (y)--(x);
        \end{tikzpicture}
    \]
    We would then have
    \[
        \celg{a}{G'}(y)=a_1 \leq 0 = \celg{a}{G}(y).
    \]

    \noindent\emph{(b) Case $a_i<0$.}
    Suppose $a_1>0$ but $a_i<0$ for some $i>1$. Let $i$ be the smallest such index, and $k$ be defined by
    \[
        k = \left\lceil\frac{a_1+\dots+a_{i-1}+1}{|a_i|}\right\rceil
    \]
    and consider the following graph $G$:
    \[
        \begin{tikzpicture}[main/.style = {draw, circle}, node distance=1cm]
            \node[main] (z0) at (0, 0) {$y$};
            \node[main] (z1) at (1.5, 0) {$x$};
            \node[main] (z2) at (3, 0) {$z_2$};
            \node (z3) at (4.5, 0) {\dots};
            \node[main] (zim) at (6, 0) {$z_{i-1}$};
            \node[main] (zi1) at (7.5, 1.5) {$z_i^1$};
            \node[main] (zi2) at (7.5, 0.5) {$z_i^2$};
            \node (zi3) at (7.5, -0.5) {$\vdots$};
            \node[main] (zik) at (7.5, -1.5) {$z_i^k$};
            \draw[<-,dashed] (z0)--(z1);
            \draw[<-] (z1)--(z2);
            \draw[<-] (z2)--(z3);
            \draw[<-] (z3)--(zim);
            \draw[<-] (zim)--(zi1);
            \draw[<-] (zim)--(zi2);
            \draw[<-] (zim)--(zik);
        \end{tikzpicture}
    \]
    We have that:
    \begin{multline*}
        \celg{a}{G'}(y)=a_1+a_2+\dots+a_{i-1}+k \cdot a_i =\\
        =a_1+a_2+\dots+a_{i-1}-k \cdot |a_i| \leq\\ 
        \leq a_1+a_2+\dots+a_{i-1}-(a_1+\dots+a_{i-1}+1) < 0 = \celg{a}{G}(y).
    \end{multline*}

    \noindent\emph{(c) Case $a_i<a_{i+1}$.}
    To complete our proof, suppose that $a_1>0$ and $a_i \geq 0$ for all $i>1$, but
    that $a_i < a_{i+1}$ for some $i \geq 1$; let $i$ be the smallest such index $i$, and $k$ be defined by:
    \[
        k = \left\lceil\frac{a_1+1}{a_{i+1}-a_i}\right\rceil
    \]
    Consider the following graph $G$.
    \[
        \begin{tikzpicture}[main/.style = {draw, circle}, node distance=1cm]
            \node[main] (y) at (0, 0) {$y$};
            \node[main] (z) at (1.5, 1) {$z$};
            \node[main] (x) at (3, 0) {$x$};
            \node[main] (w2) at (4.5, 0) {$w_3$};
            \node (w3) at (6, 0) {\dots};
            \node[main] (wi) at (7.5, 0) {$w_i$};
            \node[main] (wip11) at (9, 1) {$w_{i+1}^1$};
            \node (wip13) at (9, 0) {$\vdots$};
            \node[main] (wip1k) at (9, -1) {$w_{i+1}^k$};
            \draw[<-,dashed] (y)--(x);
            \draw[<-] (y)--(z);
            \draw[<-] (z)--(x);
            \draw[<-] (x)--(w2);
            \draw[<-] (w2)--(w3);
            \draw[<-] (w3)--(wi);
            \draw[<-] (wi)--(wip11);
            \draw[<-] (wi)--(wip1k);
        \end{tikzpicture}
    \]
    We have:
    \begin{eqnarray*}
        \celg{a}{G}(y) &=& a_1+a_2+\dots+k\cdot a_{i+1}\\
        \celg{a}{G'}(y) &=& 2a_1+a_2+\dots+k\cdot a_{i}.
    \end{eqnarray*}
    Hence
    \[
        \celg{a}{G}(y)-\celg{a}{G'}(y) = -a_1+a_i+k(a_{i+1}-a_i) > 0,
    \]
    that is, $\celg{a}{G'}(y)<\celg{a}{G}(y)$.\qed
\end{proof}

\subsection{Rank monotonicity}

The rank monotonicity axiom~\cite{BLVRMCM} considers the same scenario as for score monotonicity, but this time looking at what happens to the relation between the centrality of $y$ and that of \emph{other} nodes.

\begin{definition}[Rank monotonicity]
    A centrality measure is \emph{rank monotone}\footnote{This is called ``strict'' rank monotonicity in~\cite{BLVRMCM}, but in this paper we are not going to treat the weak case.} if for every graph $G$ and every pair of nodes $x$, $y$ such that $x\not\rightarrow y$, when we add $x\rightarrow y$ to $G$ the following happens for all nodes $w$:
    \begin{itemize}
        \item if the score of $w\neq y$ was smaller than or equal to the score of $y$, after adding $x\rightarrow y$ the score of $w$ is smaller than the score of $y$.
    \end{itemize}
\end{definition}

The first property we prove is that score monotonicity is necessary for rank monotonicity:

\begin{lemma}
    \label{lemma:twocopies}
    Given a coefficient vector $\vc a$, if $\celg{a}{}$ is rank monotone, then it is score monotone.
\end{lemma}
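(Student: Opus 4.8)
The plan is to prove the contrapositive: if $\celg{a}{}$ is \emph{not} score monotone, then it is not rank monotone. Unfolding the definition of score monotonicity (or invoking Proposition~\ref{prop:scoremon}), the failure of score monotonicity hands us a graph $G$ and two distinct nodes $x,y$ with $x\not\to y$ such that, writing $G'$ for the graph obtained by adding the arc $x\to y$ to $G$, we have
\[
    \celg{a}{G'}(y)\le\celg{a}{G}(y).
\]
From this single witness I want to manufacture a witness to the failure of rank monotonicity, namely: a graph, a non-arc to be added, and a node $w\ne y$ whose score is at most that of $y$ before the addition but not strictly below it afterwards.

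The construction — which is what the name of the lemma alludes to — is to take $\hat G$ to be the disjoint union of $G$ with an isomorphic copy $G^\ast$ of itself, and to let $w$ be the vertex of $G^\ast$ corresponding to $y$. Because the two components share no arc, every node of $\hat G$ has the same distance-count function as it has inside its own component; in particular $c_{\hat G,y}=c_{G,y}$ and $c_{\hat G,w}=c_{G,y}$, so that
\[
    \celg{a}{\hat G}(w)=\celg{a}{\hat G}(y)=\celg{a}{G}(y).
\]
Adding $x\to y$ to $\hat G$ produces a graph $\hat G'$ in which only distances within the first component change, so $\celg{a}{\hat G'}(y)=\celg{a}{G'}(y)$, while the copy $G^\ast$ is untouched and $\celg{a}{\hat G'}(w)=\celg{a}{G}(y)$. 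Hence in $\hat G$ the node $w\neq y$ has score equal to (so $\le$) that of $y$, but in $\hat G'$ its score $\celg{a}{G}(y)$ is still $\ge\celg{a}{G'}(y)=\celg{a}{\hat G'}(y)$, i.e.\ it is not strictly smaller than the score of $y$. This is precisely a violation of rank monotonicity, so $\celg{a}{}$ is not rank monotone, which establishes the contrapositive.

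I do not foresee a genuine difficulty: the whole argument rests on the elementary observation that a linear centrality, being a function of the distance-count vector, cannot see nodes sitting in a different, unreachable connected component, so the disjoint copy "freezes" a twin of $y$ at the old score $\celg{a}{G}(y)$ while the original $y$ absorbs the effect of the new arc. The only point requiring a little care is checking that the counterexample to score monotonicity can indeed be taken with $x\ne y$ and $x\not\to y$, so that the construction is legitimate; this is clear from the graphs exhibited in the proof of Proposition~\ref{prop:scoremon}.
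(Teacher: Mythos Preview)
Your proof is correct and essentially identical to the paper's: both take the disjoint union of the offending graph $G$ with an isomorphic copy, and use the clone $\overline y$ of $y$ as the witness node $w$ whose score stays put while $y$'s fails to strictly overtake it. The only cosmetic difference is that you phrase it as a contrapositive while the paper phrases it as a proof by contradiction; the construction and the key observation (distance-count vectors do not see the other component) are the same.
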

\begin{proof}
    Suppose, by contradiction, that $\celg{a}{}$ is rank monotone but not score monotone, and let $G$ be a graph for which score monotonicity fails.
    Let $\overline G$ be an identical copy of $G$ (the nodes of $\overline G$ have the same names as those of $G$, with an $\overline{\cdot}$ to distinguish them).
    For the moment, assume that $a_1 > 0$, and consider the graph:
    \[
        H = G \oplus \overline G 
    \]
    where $\oplus$ denotes disjoint union.
    Of course
    \[
        \celg{a}{H}(y) = \celg{a}{G}(y) = \celg{a}{\overline G}(\overline y)=\celg{a}{H}(\overline y).
    \]
    If we add $x \to y$ to the graph $H$, without any other modification, and let $H'$ be the new graph, we have (because score monotonicity does not hold):
    \[
        \celg{a}{H'}(y) = \celg{a}{G'}(y) \le \celg{a}{G}(y) = \celg{a}{H}(\overline y).
    \]
    So the score of $y$ in $H$ used to be $\geq$ (in fact: equal) to the score of $\overline y$, but it should become strictly higher once the arc $x \to y$ is added, which does not happen if score monotonicity fails.\qed
\end{proof}

Hence, if we want to look for vectors $\vc a$ for which rank monotonicity holds, we can limit ourselves to those for which score monotonicity holds as well.

Like in~\cite{BDFSRSMCBDDC}, we will show that rank monotonicity is related to the second-order discrete derivative $\Delta^2 \vc a$ of the coefficient vector $\vc a$ (obtained from $\vc a$ by applying the operator $\Delta$ twice).
To characterize rank monotonicity, we will need the following lemma:

\begin{lemma}[Lemma 5.1 of~\cite{BDFSRSMCBDDC}]
    \label{lemma:convexity}
    Given a coefficient vector $\vc a$, if $\Delta^2 \vc a \geq 0$ then for all $i,j,k,\ell$ with $i \leq j$, $k \geq 0$, $0\leq \ell \leq j-i$ we have:
    \[
        a_{i+k} - a_{j+k-\ell} \leq a_i - a_j
    \]
    with the inequality being strict if $i<j$, $k>0$ and $\left(\Delta^2 \vc a\right)_i >0$.
\end{lemma}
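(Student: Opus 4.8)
The statement is a discrete-convexity inequality: $\Delta^2\vc a \geq 0$ says the sequence $\vc a$ is (discretely) convex, i.e., its first differences $(\Delta\vc a)_m = a_{m+1}-a_m$ are non-decreasing in $m$, and the claim compares two "telescoping gaps" of $\vc a$. My plan is to reduce everything to monotonicity of the first-difference sequence and then sum. First I would rewrite both sides as sums of consecutive first differences: $a_i - a_j = -\sum_{m=i}^{j-1}(\Delta\vc a)_m$ (when $i\le j$), and similarly express $a_{i+k}-a_{j+k-\ell}$. The index range on the right is from $i+k$ up to $j+k-\ell$, which has $j-i-\ell \geq 0$ terms (using $0\le \ell\le j-i$), and every such term $(\Delta\vc a)_m$ has index $m \geq i+k \geq i \geq$ (the indices appearing on the left), so by $\Delta^2\vc a\ge 0$ each is at least the corresponding term on the left-hand sum.

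More concretely, I would write the desired inequality $a_{i+k}-a_{j+k-\ell}\le a_i-a_j$ as
\[
  \bigl(a_i - a_{i+k}\bigr) - \bigl(a_j - a_{j+k-\ell}\bigr) \ge 0 .
\]
Here $a_i - a_{i+k} = -\sum_{m=i}^{i+k-1}(\Delta\vc a)_m$ and $a_j - a_{j+k-\ell} = -\sum_{m=j}^{j+k-\ell-1}(\Delta\vc a)_m$, so the left side equals $\sum_{m=j}^{j+k-\ell-1}(\Delta\vc a)_m - \sum_{m=i}^{i+k-1}(\Delta\vc a)_m$. The cleanest route is to exhibit an injection from the index set of the (shorter or equal) first sum into that of the second that is pointwise non-decreasing on indices: since the second sum has $k$ terms starting at $i$ and the first has $k-\ell$ terms starting at $j\ge i$, pair $(\Delta\vc a)_{j+t}$ with $(\Delta\vc a)_{i+t}$ for $t=0,\dots,k-\ell-1$; because $j+t\ge i+t$ and $\Delta\vc a$ is non-decreasing, $(\Delta\vc a)_{j+t}\ge(\Delta\vc a)_{i+t}$, and the remaining $\ell$ terms $(\Delta\vc a)_{i+k-\ell},\dots,(\Delta\vc a)_{i+k-1}$ of the second sum are non-negative once we know $\Delta\vc a$ is non-decreasing and... — wait, non-negativity of those leftover terms is not automatic, so I would instead keep the leftover terms paired differently: match $(\Delta\vc a)_{j+t}$ for $t=0,\dots,k-\ell-1$ against $(\Delta\vc a)_{i+t}$ for $t=\ell,\dots,k-1$, i.e., shift by $\ell$, using $j+t \ge i+t+\ell$ (which holds since $j\ge i+\ell$ is exactly $\ell\le j-i$). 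Then the $k-\ell$ paired differences are each $\ge 0$, and the leftover $\ell$ terms on the left sum, namely $(\Delta\vc a)_i,\dots,(\Delta\vc a)_{i+\ell-1}$, are subtracted with the surplus — I need to recheck signs, so the bookkeeping of which pairing yields an all-non-negative telescoped difference is the one genuinely fiddly point.

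The strictness claim then follows from the same pairing: if $i<j$, $k>0$ and $(\Delta^2\vc a)_i>0$, then at least one of the pointwise comparisons $(\Delta\vc a)_{j+t}\ge(\Delta\vc a)_{i+t}$ (or its shifted analogue) is strict — specifically because $j\ge i+1$ forces the difference sequence to have strictly increased by the time we move from index $i$ to index $j$, thanks to $(\Delta^2\vc a)_i = (\Delta\vc a)_{i+1}-(\Delta\vc a)_i>0$ — so the overall sum is strictly positive. The only real obstacle is arranging the index pairing so that it is both injective and index-monotone while covering exactly the right multiset of terms; once the correct shift is pinned down, the inequality is an immediate term-by-term comparison, and this is precisely why the hypothesis is stated as $\Delta^2\vc a\ge 0$ rather than something stronger. (Since this is quoted as Lemma 5.1 of~\cite{BDFSRSMCBDDC}, I would, in the interest of space, either reproduce this short argument or simply cite it; I would include the argument for completeness.)
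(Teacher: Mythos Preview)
The paper does not prove this lemma; it is imported from \cite{BDFSRSMCBDDC} and used as a black box in Proposition~\ref{prop:rankmon}. Your closing suggestion---simply cite it---is precisely what the paper does.

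Your telescoping-and-pairing approach is the natural one, but the hesitation you flag about the ``leftover'' $\ell$ terms is not mere bookkeeping: it is a genuine obstruction. Writing the claim (for $k\ge\ell$) as
\[
\sum_{m=i}^{i+k-1}(\Delta\vc a)_m \;\le\; \sum_{m=j}^{j+k-\ell-1}(\Delta\vc a)_m,
\]
your shifted pairing $(\Delta\vc a)_{i+\ell+t}\leftrightarrow(\Delta\vc a)_{j+t}$ for $t=0,\dots,k-\ell-1$ (using $j\ge i+\ell$) handles those terms by monotonicity of $\Delta\vc a$, but the residue $\sum_{m=i}^{i+\ell-1}(\Delta\vc a)_m$ on the left has no reason to be $\le 0$ from $\Delta^2\vc a\ge 0$ alone. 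Concretely, $\vc a=(0,0,1,2,3,\dots)$ has $\Delta^2\vc a=\vc 0$, yet $(i,j,k,\ell)=(1,2,1,1)$ gives $a_2-a_2\le a_1-a_2$, i.e.\ $0\le -1$. So the lemma as transcribed here is missing a hypothesis such as $\Delta\vc a\le 0$---which \emph{is} among the standing assumptions in Proposition~\ref{prop:rankmon}, the only place the lemma is invoked. With $\Delta\vc a\le 0$ added, the residue is immediately $\le 0$, your pairing argument goes through cleanly (the case $k<\ell$ reduces to $[i+k,j+k-\ell]\subseteq[i,j]$ and needs only $\Delta\vc a\le 0$), and strictness follows because $(\Delta^2\vc a)_i>0$ forces $(\Delta\vc a)_i<(\Delta\vc a)_{i+1}\le 0$, making either the residue (if $\ell>0$) or the $t=0$ pair (if $\ell=0$, $k>0$, $i<j$) strictly negative.
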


We now present our conditions for rank monotonicity:
\begin{proposition}
    \label{prop:rankmon}
    Suppose that the coefficient vector $\vc a$ satisfies the following conditions:
    \begin{enumerate}
        \item $\celg{a}{}$  is score monotone, i.e., $\vc a \geq 0$, $\Delta \vc a \leq 0$ and $\left(\Delta \vc a\right)_1 < 0$;
        \item $\Delta^2 \vc a \geq 0$ and $\left(\Delta^2 \vc a\right)_1 > 0$.
    \end{enumerate}
    Then, the centrality $\celg{a}{}$ is rank monotone.
\end{proposition}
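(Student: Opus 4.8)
The plan is to fix $G$, a pair of distinct nodes $x,y$ with $x\not\to y$, and the graph $G'$ obtained from $G$ by adding the arc $x\to y$, and to prove the following (stronger) statement: for \emph{every} node $w\neq y$,
\[
    \celg{a}{G'}(y)-\celg{a}{G'}(w)\;>\;\celg{a}{G}(y)-\celg{a}{G}(w).
\]
Rank monotonicity is immediate from this, since if $\celg{a}{G}(w)\le\celg{a}{G}(y)$ then the right-hand side is nonnegative, forcing $\celg{a}{G'}(w)<\celg{a}{G'}(y)$. Writing a linear centrality as a sum over nodes, $\celg{a}{G}(v)=\sum_{z\in V}a_{d_G(z,v)}$ with the convention that a node at infinite distance from $v$ contributes $0$, the displayed inequality becomes $\sum_{z\in V}S(z)>0$, where
\[
    S(z)=\bigl(a_{d_{G'}(z,y)}-a_{d_G(z,y)}\bigr)-\bigl(a_{d_{G'}(z,w)}-a_{d_G(z,w)}\bigr).
\]
It therefore suffices to prove that $S(z)\ge0$ for every $z$, and that $S(x)>0$.

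Next I would record how the relevant distances change when the single arc $x\to y$ is inserted. Since a shortest path may be taken without repeated vertices, it uses the new arc at most once; this gives $d_{G'}(z,x)=d_G(z,x)$, $d_{G'}(z,y)=\min\!\bigl(d_G(z,y),\,d_G(z,x)+1\bigr)$ and $d_{G'}(z,w)=\min\!\bigl(d_G(z,w),\,d_G(z,x)+1+d_G(y,w)\bigr)$ (for the last two: a path through the new arc splits into a $G$-path to $x$, the arc $x\to y$, and a $G$-path from $y$ onward). Write $\delta=d_G(z,x)$ and $\beta=d_G(y,w)$. Then $S(z)\ge0$ by a three-way case analysis. (i) If $d_G(z,y)\le\delta+1$ then, by the triangle inequality in $G$, $d_G(z,w)\le d_G(z,y)+\beta\le\delta+1+\beta$, so neither distance changes and $S(z)=0$. (ii) If $d_G(z,y)>\delta+1$ but $d_G(z,w)\le\delta+1+\beta$, then only the distance to $y$ drops (to $\delta+1$) and $S(z)=a_{\delta+1}-a_{d_G(z,y)}\ge0$ since $\vc a\ge0$ and $\Delta\vc a\le0$. (iii) If $d_G(z,y)>\delta+1$ and $d_G(z,w)>\delta+1+\beta$ (which forces $\delta,\beta<\infty$), then $S(z)=\bigl(a_{\delta+1}-a_{d_G(z,y)}\bigr)-\bigl(a_{\delta+1+\beta}-a_{d_G(z,w)}\bigr)$; here I would invoke Lemma~\ref{lemma:convexity} with $i=\delta+1$, $j=d_G(z,y)$, $k=\beta$ and $\ell=d_G(z,y)+\beta-d_G(z,w)$ — its hypotheses $i\le j$, $k\ge0$, $\ell\ge0$ (triangle inequality) and $\ell\le j-i$ (exactly the case assumption $d_G(z,w)>\delta+1+\beta$) all hold, and its conclusion $a_{i+k}-a_{j+k-\ell}\le a_i-a_j$ is precisely $S(z)\ge0$. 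The degenerate sub-case $d_G(z,y)=\infty$ of (iii) needs only monotonicity: then $a_{d_G(z,y)}=0$ and $a_{\delta+1}\ge a_{\delta+1+\beta}$.

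Strictness comes from the node $z=x$. There $\delta=d_G(x,x)=0$ and $d_G(x,y)\ge2$ (as $x\neq y$ and $x\not\to y$), so $z=x$ falls into branch (ii) or (iii) with $d_{G'}(x,y)=1<d_G(x,y)$. In branch (ii), $S(x)=a_1-a_{d_G(x,y)}\ge a_1-a_2=-(\Delta\vc a)_1>0$ by condition~(1). In branch (iii) we have $\beta<\infty$ and $\beta\ge1$ (because $w\neq y$), so the Lemma~\ref{lemma:convexity} instance above satisfies $i=1<j$, $k=\beta>0$ and $(\Delta^2\vc a)_i=(\Delta^2\vc a)_1>0$ by condition~(2); its strictness clause then gives $S(x)>0$ (and the sub-case $d_G(x,y)=\infty$ of (iii) gives directly $S(x)=a_1-a_{1+\beta}+a_{d_G(x,w)}\ge a_1-a_2>0$). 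Since every $S(z)\ge0$, we obtain $\sum_{z\in V}S(z)\ge S(x)>0$, which completes the argument.

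The step I expect to be the main obstacle is case (iii): choosing the correct parameters for Lemma~\ref{lemma:convexity} — especially $\ell$ — and verifying $0\le\ell\le j-i$ from the $G$-triangle inequality and the case assumption; together with the routine but fiddly task of checking that whenever one of $d_G(z,x),d_G(z,y),d_G(z,w),d_G(y,w)$ is infinite the argument collapses into an easy case or follows from plain monotonicity of $\vc a$. The rest is bookkeeping, and the strict hypotheses $(\Delta\vc a)_1<0$ and $(\Delta^2\vc a)_1>0$ enter only through the node $z=x$.
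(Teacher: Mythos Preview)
Your proof is correct and follows essentially the same approach as the paper: both reduce rank monotonicity to the termwise inequality $a_{d'(z,y)}-a_{d(z,y)}\ge a_{d'(z,w)}-a_{d(z,w)}$, apply Lemma~\ref{lemma:convexity} with the same parameters (your $i=\delta+1=d'(z,y)$, $k=\beta=d'(z,w)-d'(z,y)$, and $\ell=j+k-d(z,w)$ coincide with the paper's choices), and obtain strictness at $z=x$ from $(\Delta\vc a)_1<0$ in the ``unchanged $w$-distance'' branch and from the strict clause of the lemma via $(\Delta^2\vc a)_1>0$ in the other branch. Your three-way case split and explicit treatment of infinite distances are a bit more detailed than the paper's two-case argument, but the underlying ideas are identical.
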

\begin{proof}
    Suppose that the two conditions hold, let $G$ be a graph and $G'$ the graph obtained after the addition of $x \to y$ to $G$, as usual. Let us use $d(-,-)$
    and $d'(-,-)$ to denote distances in $G$ and $G'$, respectively. 

    Consider any two nodes $z$ and $w$: of course $d'(z,w) \leq d(z,w)$ (because adding an arc cannot make distances larger); suppose that $d'(z,w)<d(z,w)$. 
    This means that there is a new shortest path from $z$ to $w$ using the new arc $x \to y$. Hence
    \begin{equation}
        \label{eqn:dleq}
        d'(z,w) = d(z,x) + 1 + d(y,w) \geq d'(z,y) + d(y,w).
    \end{equation}
    On the other hand, because of the triangular inequality
    \begin{equation}
        \label{eqn:triangular}
        d(z,w) \leq d(z,y)+d(y,w);
    \end{equation}
    hence, applying~\eqref{eqn:dleq} first and then~\eqref{eqn:triangular}, we have
    \begin{equation}
        \label{eqn:dleq2}
        d'(z,y) \leq  d'(z,w)-d(y,w) \leq d'(z,w)-d(z,w)+d(z,y).
    \end{equation}
    Let
    \begin{eqnarray*}
        i &=& d'(z,y)\\
        j &=& d(z,y)\\
        k &=& d'(z,w) - i = d'(z,w)-d'(z,y)\\
        \ell &=& j + k - d(z,w) = d'(z,w)-d(z,w)+d(z,y)-d'(z,y).
    \end{eqnarray*}
    Let us check that all conditions of Lemma~\ref{lemma:convexity} hold: $i\leq j$ holds because distances cannot become larger; $k\geq0$ holds because of~(\ref{eqn:dleq}); finally $\ell\geq 0$ (because of~\eqref{eqn:dleq2}) and
    \[
        \ell+i-j =i+k-d(z,w)=d'(z,w)-d(z,w) \leq 0;
    \]
    therefore, $\ell\leq j-i$.
    Hence, applying the lemma we have:
    \begin{eqnarray}
        \label{eqn:score_increase}
        %a_{i+k} - a_{j+k-\ell} &\leq& a_{i} - a_{j}\\
        a_{d'(z,w)} - a_{d(z,w)} &\leq& a_{d'(z,y)} - a_{d(z,y)}.        
    \end{eqnarray}
    This relation holds for all $z$ such that $d'(z,w)<d(z,w)$, but it is trivially true also when $d'(z,w)=d(z,w)$, because the sequence $\vc a$ is non-increasing.

    Moreover, when $z=x$ and $w \neq y$, if $d'(x,w)<d(x,w)$ we have $i=d'(x,y)=1$, $j=d(x,y)>1=i$, $k=d'(x,w)-1>0$. Hence, the inequality~\eqref{eqn:score_increase} holds strictly for $z=x$. If otherwise $d'(x,w)=d(x,w)$, then still the inequality holds strictly, because $a_1>a_2 \geq a_j$, since $\left(\Delta \vc a\right)_1<0$.
    
    Recalling that 
    \[
        \celg{a}{G}(w)  = \sum_{z} a_{d(z,w)}  \qquad
        \celg{a}{G'}(w)  = \sum_{z} a_{d'(z,w)},
    \]
    we can state the following for all $w \neq y$:
    \begin{align*}
        \celg{a}{G'}(y)-\celg{a}{G}(y) & = \sum_{z} (a_{d'(z,y)}-a_{d(z,y)}) \\
        & > \sum_{z} (a_{d'(z,w)}-a_{d(z,w)}) = \celg{a}{G'}(w)-\celg{a}{G}(w).
    \end{align*}
    This means that the addition of the arc $x\to y$ increases the score of $y$ more than the score of any other node $w$, and this is enough to conclude that rank monotonicity holds for $\celg{a}{}$.\qed
\end{proof}

Observe that the conditions of Proposition~\ref{prop:rankmon} are sufficient for rank monotonicity, but not necessary (although score monotonicity is). We conjecture that they are also necessary: a clue in this direction is given in~\cite{BDFSRSMCBDDC}, where convexity is proven to be equivalent in the undirected case to a property known as ``basin dominance'', which implies rank monotonicity. But also in that case we were unable to prove that convexity is necessary for rank monotonicity.

\section{Distinguishable Coefficients}
\label{sec:distinguish}

A natural question we may want to ask is the following: how does $\celg{a}{G}$ change as a function of $\vc a$?
Is it possible that using different coefficients we end up defining equivalent linear centralities?
The short answer is no, unless the coefficient vectors differ by a positive multiplicative constant.
Let us provide a specific definition for this case.

\begin{definition}[Proportionality]
    \label{def:proportionality}
    We say that $\vc a, \vc b \in \RR^\NN$ are \emph{proportional} iff there is a $\lambda>0$ such that
    \begin{equation}
    \label{eqn:prop}
        b_k = \lambda \cdot a_k   \text{ for all $k>0$}.
    \end{equation}
\end{definition}

Observe that in the definition the 0-th coefficient has no role. For this reason, and for what we have said before, in the rest of this section we can safely assume that the $0$-th coefficient of all sequences is $0$.
Some observations are in order:
\begin{itemize}
    \item if $\vc a$ and $\vc b$ are proportional then for all $k>0$, either $a_k$ and $b_k$ are both zero, or they are both non-zero (and have the same sign);
    \item no sequence is proportional to $\vc 0=(0,0,0,\dots)$, except $\vc 0$ itself;
    \item if $\vc a$ and $\vc b$ are proportional, and $\vc a \neq \vc 0$, then the constant $\lambda>0$ of Definition~\ref{def:proportionality} is unique;
    %\item if $\vc a$ and $\vc b$ are not proportional, there exist an index $k>1$ and $\lambda>0$ such that $a_i=\lambda\cdot b_i$ for all $0 <i<k$, but $a_k\neq \lambda\cdot b_k$.
    \item if $\vc a$ and $\vc b$ are not proportional, one of the following mutually exclusive statements holds:
    \begin{itemize}
        \item[(a)] there is some $k>0$ such that $a_kb_k=0$ and $a_k\neq b_k$ (for example, $\vc a=(0,1,2,\dots)$ and $\vc b=(0,1,0,\dots)$, with $k=2$);
        \item[(b)] for all $i>0$, $a_i=0 \iff b_i=0$ and the minimum index $h$ such that $a_h\neq 0$ satisfies $b_h=\lambda\cdot a_h$ for some $\lambda<0$ (for example, $\vc a=(0,0,2,3,0,\dots)$ and $\vc b=(0,0,-4,-6,0,\dots)$, with $h=2$ and $\lambda=-2$); 
        \item[(c)] for all $i>0$, $a_i=0 \iff b_i=0$ and there exists an index $k>1$ and $\lambda>0$ such that $b_i=\lambda\cdot a_i$ for all $0 <i<k$, but $b_k\neq \lambda\cdot a_k$, and moreover $a_h\neq 0$ for some $0<h<k$ (for example, $\vc a=(0,1,2,3,0,\dots)$ and $\vc b=(0,2,4,9,0,\dots)$, with $k=3$ and $\lambda=2$).
    \end{itemize}
\end{itemize}

The only non-trivial observation is the last one, which we are now going to prove.
If there is any index for which either sequence is non-zero and the other is zero (case (a) above), there cannot exist a $\lambda>0$ for Definition~\ref{def:proportionality}. Now, let us assume that the zeroes are aligned (i.e., that they appear in the same positions in both sequences): there must be an index at which the sequences are both non-zero (for otherwise both sequences would be $\vc 0$ and proportional). Consider the minimum index $h$ at which they are both non-zero, and let $\lambda=a_h/b_h$. If $\lambda<0$, then case (b) of the last observation holds. If $\lambda>0$, since the sequences are non-proportional there must be a larger index $k$ such that $a_k \neq \lambda b_k$; the smaller such $k>h$ makes the statement of case (c) above true.

It is worth noting that case (c) above has some interesting implications:

\begin{lemma}
\label{lemma:altc}
    Given $\vc a$ and $\vc b$, if case (c) of the previous observation holds, then the values $a_h, a_k, b_h, b_k$ are all non-zero and $b_h a_k \neq b_k a_h$.
    Moreover, the inequality
    \begin{equation}
    \label{eqn:nonprop_cond}
        (a_h-a_k)(b_{h+1}-b_{k-1}) \neq (a_{h+1}-a_{k-1})(b_h-b_k)
    \end{equation}
    holds iff $a_{h+1} \neq a_{k-1}$ and $h+1<k$.
\end{lemma}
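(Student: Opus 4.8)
The plan is to unwind the hypotheses of case~(c) and carry out one short algebraic simplification; essentially no ``idea'' is needed, but two points require care: the identity $b_i=\lambda a_i$ is available only at indices $i<k$, and the index configuration $k=h+1$ is genuinely degenerate and must be treated separately.

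Throughout, $h$ denotes (as in case~(c)) the least index with $a_h\neq 0$, so $a_h\neq 0$, $b_h\neq 0$ and $1\le h<k$; $\lambda>0$ satisfies $b_i=\lambda a_i$ for all $0<i<k$, while $b_k\neq\lambda a_k$. For part~1 I would first note $a_k\neq 0$: otherwise, since the zeros of $\vc a$ and $\vc b$ are aligned, $b_k=0=\lambda a_k$, contradicting case~(c); and then $b_k\neq 0$ by the same alignment. For $b_ha_k\neq b_ka_h$: assuming equality and substituting $b_h=\lambda a_h$ (legitimate since $0<h<k$), cancellation of $a_h\neq 0$ gives $b_k=\lambda a_k$, again a contradiction.

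For part~2, when $k=h+1$ we have $k-1=h$, hence $a_h-a_k=-(a_{h+1}-a_{k-1})$ and $b_{h+1}-b_{k-1}=-(b_h-b_k)$; the product of these two sign flips shows that the two sides of~\eqref{eqn:nonprop_cond} coincide, so the strict inequality fails, in agreement with the claimed equivalence (whose right-hand side forces $h+1<k$). When $k>h+1$, all of $h$, $h+1$, $k-1$ lie strictly between $0$ and $k$, so $b_h=\lambda a_h$, $b_{h+1}=\lambda a_{h+1}$, $b_{k-1}=\lambda a_{k-1}$; substituting and collecting the common factor $a_{h+1}-a_{k-1}$ yields
\[
(a_h-a_k)(b_{h+1}-b_{k-1})-(a_{h+1}-a_{k-1})(b_h-b_k)=(a_{h+1}-a_{k-1})\,(b_k-\lambda a_k).
\]
Since $b_k-\lambda a_k\neq 0$, the right-hand side --- hence the difference of the two sides of~\eqref{eqn:nonprop_cond} --- vanishes exactly when $a_{h+1}=a_{k-1}$. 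Combining the two cases gives the stated ``iff''.

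The computation is routine; the only thing I expect to require attention is the bookkeeping: making sure $b_i=\lambda a_i$ is invoked only at indices below $k$ (which is precisely why the hypothesis $h+1<k$ appears), and not overlooking that $h+1<k$ still forces $a_{h+1}=a_{k-1}$ when $k=h+2$, so the equivalence is automatically consistent in that sub-case as well.
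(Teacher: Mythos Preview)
Your proof is correct and follows essentially the same route as the paper: both arguments handle part~1 by the zero-alignment and cancellation of $a_h$, and part~2 by substituting $b_i=\lambda a_i$ for $i<k$ and reducing to the factor $b_k-\lambda a_k$. Your factored form $(a_{h+1}-a_{k-1})(b_k-\lambda a_k)$ is a slightly cleaner packaging of the same chain of equalities the paper writes out line by line, but the content is identical.
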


\begin{proof}
    We start by proving the first part of the statement.
    By assumption $a_h\neq 0$ and $b_h=\lambda a_h \neq 0$.
    Moreover, either $a_k$ and $b_k$ are both zero, or none of them is;
    if they are, then $b_k= \lambda a_k$, contradicting the hypothesis.
    Finally,
    \[
        b_h a_k = \lambda a_h a_k \neq a_h b_k.
    \]

    For the second part of the statement, assume that $a_{h+1} \neq a_{k-1}$ and $h+1<k$, and, by contradiction, consider the following chain of equalities:
    \begin{align*}
        (a_h-a_k)(b_{h+1}-b_{k-1}) & = (a_{h+1}-a_{k-1})(b_h-b_k)\\
        (a_h-a_k)(\lambda a_{h+1}-\lambda a_{k-1}) & = (a_{h+1}-a_{k-1})(\lambda a_h-b_k) \\
        \lambda a_h a_{h+1} - \lambda a_h a_{k-1} - \lambda a_{h+1} a_k + \lambda a_{k-1} a_k & = \lambda a_h a_{h+1} - a_{h+1} b_k - \lambda a_h a_{k-1} + a_{k-1} b_k \\
        - \lambda a_{h+1} a_k + \lambda a_{k-1} a_k & = - a_{h+1} b_k + a_{k-1} b_k \\
        \lambda a_k (a_{k-1} - a_{h+1}) & = b_k (a_{k-1} - a_{h+1}) \\
        \lambda a_k & = b_k,
    \end{align*}
    contradicting the assumptions.
    For the \q{only if} part, if $h+1=k$, we obtain
    \[
        (a_h-a_k)(b_k-b_h) = (a_k-a_h)(b_h-b_k),
    \]
    contradicting~\eqref{eqn:nonprop_cond}.
    On the other hand, if $h+1<k$ and $a_{h+1} = a_{k-1}$, then we have
    $a_{h+1} - a_{k-1}=0$ but also
    $b_{h+1} - b_{k-1} =\lambda (a_{h+1}-a_{k-1})=0$, again contradicting~\eqref{eqn:nonprop_cond}.\qed
\end{proof}

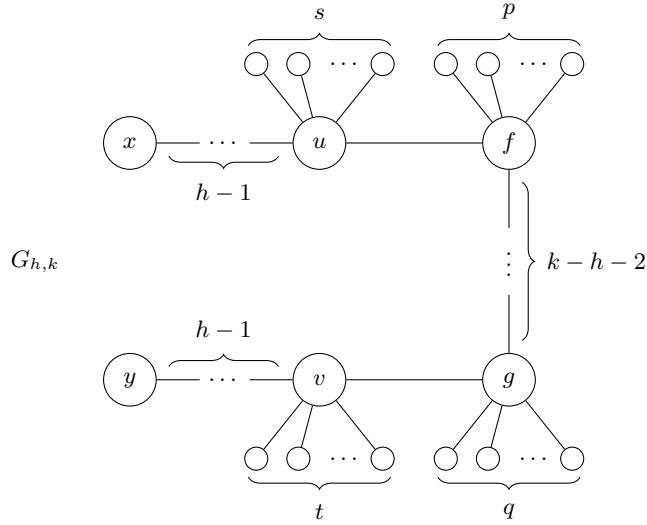
\begin{figure}[htbp]
    \centering
    \begin{tikzpicture}[main/.style = {draw, circle, minimum size=7mm}, node distance=1cm, scale=.7]

        \tikzmath{\d1=2.0; \d2=1.5; \s1=1.0; \d3=-4.5;}

        \node[] (G3) at (-9,\d2+\d3/2) {$G_{h,k}$};
        \node[main] (x1) at (\d1*-3.6,\d2) {$x$};
        \node (xh1) at (\d1*-2.7,\d2) {$\dots$};
        \node[main] (h1) at (\d1*-1.8,\d2) {$u$};
        %\node (hf1) at (\d1*-0.9,\d2) {$\dots$};
        \node[main] (f1) at (\d1*0,\d2) {{\small$f$}};

        \draw (x1) -- (xh1);
        \draw (xh1) -- (h1);
        \draw (h1) -- (f1);

        \node[main, minimum size=3mm] (s1) at (\d1*-1.8-\s1*1.2,\d2*2) {};
        \node[main, minimum size=3mm] (s2) at (\d1*-1.8-\s1*0.4,\d2*2) {};
        \node (s3) at (\d1*-1.8+\s1*0.5,\d2*2) {$\dots$};
        \node[main, minimum size=3mm] (s4) at (\d1*-1.8+\s1*1.2,\d2*2) {};

        \draw (h1) -- (s1);
        \draw (h1) -- (s2);
        \draw (h1) -- (s4);

        \draw [decoration = {raise=-3pt,brace,amplitude=0.5em},decorate,black] (-0.15+\d1*-1.8-\s1*1.2,\d2*2+0.5) -- (+0.15+\d1*-1.8+\s1*1.2,\d2*2+0.5) node[pos=0.5,above=4pt,black] {$s$};

        \node[main, minimum size=3mm] (p1) at (\d1*0-\s1*1.2,\d2*2) {};
        \node[main, minimum size=3mm] (p2) at (\d1*0-\s1*0.4,\d2*2) {};
        \node (p3) at (\d1*0+\s1*0.5,\d2*2) {$\dots$};
        \node[main, minimum size=3mm] (p4) at (\d1*0+\s1*1.2,\d2*2) {};

        \draw (f1) -- (p1);
        \draw (f1) -- (p2);
        \draw (f1) -- (p4);

        \draw [decoration = {raise=-3pt,brace,amplitude=0.5em},decorate,black] (-0.15+\d1*0-\s1*1.2,\d2*2+0.5) -- (+0.15+\d1*0+\s1*1.2,\d2*2+0.5) node[pos=0.5,above=4pt,black] {$p$};
        \draw [decoration = {raise=-3pt,brace,mirror,amplitude=0.5em},decorate,black] (\d1*-3.23,\d2-0.4) -- (\d1*-2.19,\d2-0.4) node[pos=0.5,below=4pt,black] {$h-1$};
        %\draw [decoration = {raise=-3pt,brace,amplitude=0.5em},decorate,black] (\d1*-0.4,\d2-0.4) -- (\d1*-1.4,\d2-0.4) node[pos=0.5,below=4pt,black] {$k-h$};

        % second graph

        \node[main] (x2) at (\d1*-3.6,\d2+\d3) {$y$};
        \node (xh2) at (\d1*-2.7,\d2+\d3) {$\dots$};
        \node[main] (h2) at (\d1*-1.8,\d2+\d3) {$v$};
        %\node (hf2) at (\d1*-0.9,\d2+\d3) {$\dots$};
        \node[main] (f2) at (\d1*0,\d2+\d3) {{\small$g$}};

        \node (dots) at (\d1*0,\d2+\d3/2) {$\myvdots$};
        \draw (f1) -- (dots);
        \draw (dots) -- (f2);
        \draw [decoration = {raise=0-3pt,brace,amplitude=0.5em},decorate,black] (\d1*0+0.4,\d2-0.75) -- (\d1*0+0.4,\d2+\d3+0.75) node[pos=0.5,right=3pt,black] {$k-h-2$};

        \draw (x2) -- (xh2);
        \draw (xh2) -- (h2);
        \draw (h2) -- (f2);

        \node[main, minimum size=3mm] (s12) at (\d1*-1.8-\s1*1.2,\d3) {};
        \node[main, minimum size=3mm] (s22) at (\d1*-1.8-\s1*0.4,\d3) {};
        \node (s32) at (\d1*-1.8+\s1*0.5,\d3) {$\dots$};
        \node[main, minimum size=3mm] (s42) at (\d1*-1.8+\s1*1.2,\d3) {};

        \draw (h2) -- (s12);
        \draw (h2) -- (s22);
        \draw (h2) -- (s42);

        \draw [decoration = {raise=0-3pt,brace,amplitude=0.5em},decorate,black] (+0.15+\d1*-1.8+\s1*1.2,-0.5+\d3) -- (-0.15+\d1*-1.8-\s1*1.2,-0.5+\d3) node[pos=0.5,below=4pt,black] {$t$};

        \node[main, minimum size=3mm] (p12) at (\d1*0-\s1*1.2,\d3) {};
        \node[main, minimum size=3mm] (p22) at (\d1*0-\s1*0.4,\d3) {};
        \node (p32) at (\d1*0+\s1*0.5,\d3) {$\dots$};
        \node[main, minimum size=3mm] (p42) at (\d1*0+\s1*1.2,\d3) {};

        \draw (f2) -- (p12);
        \draw (f2) -- (p22);
        \draw (f2) -- (p42);

        \draw [decoration = {raise=-3pt,brace,amplitude=0.5em},decorate,black] (+0.15+\d1*0+\s1*1.2,-0.5+\d3) -- (-0.15+\d1*0-\s1*1.2,-0.5+\d3) node[pos=0.5,below=4pt,black] {$q$};
        
        \draw [decoration = {raise=-3pt,brace,mirror,amplitude=0.5em},decorate,black] (\d1*-2.19,\d2+0.4+\d3) -- (\d1*-3.23,\d2+0.4+\d3) node[pos=0.5,above=4pt,black] {$h-1$};
        %\draw [decoration = {raise=-3pt,brace,amplitude=0.5em},decorate,black] (\d1*-1.4,\d2+0.4+\d3) -- (\d1*-0.4,\d2+0.4+\d3) node[pos=0.5,above=4pt,black] {$k-h$};
    \end{tikzpicture}
    \caption{The (family of) graphs $G_{h,k}$ used to distinguish non-proportional sequences when inequality~\eqref{eqn:nonprop_cond} holds. The horizontal dotted line between $x$ and $u$ (resp., $y$ and $v$) represents a path of length $h-1$.
    Analogously, the vertical dotted line between $f$ and $g$ represents a path of length $k-h-2$.
    Note that when $h=1$, $u$ gets to coincide with $x$, hence the $s$ nodes in such a case are connected to $x$ (everything holds symmetrically for $y$).}\label{fig:inf_seq_graph}
\end{figure}

Based on these observations, we are ready to prove the following:
\begin{theorem}
    \label{thm:distinguish}
    For $\vc a, \vc b \in \RR^\NN$ the following holds:
    \begin{enumerate}
        \item if $\vc a$ and $\vc b$ are proportional, then $\celg{\vc a}{}$ and $\celg{\vc b}{}$ are equivalent centralities;
        \item if $\vc a$ and $\vc b$ are not proportional, then there exists a graph $G$ on which  $\cel{a}$ and $\cel{b}$ do not agree; in other words, there is a graph $G$ and two nodes $x,y \in V$ such that $\cel{a}(x)\geq \cel{a}(y)$ but $\cel{b}(x)<\cel{b}(y)$. 
        %In such a case, we say that $G$ \emph{distinguishes} $\vc a$ and $\vc b$.
    \end{enumerate}
\end{theorem}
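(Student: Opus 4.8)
The plan is to prove the two items separately; the first is a one-line computation, and the second, where all the work lies, is organised around the trichotomy for non-proportional vectors recalled just before Lemma~\ref{lemma:altc}. For the first item, recall that $c_{i,0}=1$ for every node of every graph and that the $0$-th coefficient influences neither rankings nor proportionality, so we may assume $a_0=b_0=0$. If $\vc a$ and $\vc b$ are proportional with constant $\lambda>0$, then for every graph $G$ and node $i$ we have $\celg{b}{G}(i)=\sum_{k\ge 1}c_{i,k}b_k=\lambda\sum_{k\ge 1}c_{i,k}a_k=\lambda\,\celg{a}{G}(i)$. Since scaling by a positive constant is order-preserving, $\celg{a}{}$ and $\celg{b}{}$ rank the nodes of every $G$ in the same way, hence respect the same permutations on every $G$; that is, they are equivalent.

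For the second item I would first reduce the goal: it is enough to produce a graph $G$ and nodes $x,y$ such that either $\celg{a}{G}(x)-\celg{a}{G}(y)=0$ and $\celg{b}{G}(x)-\celg{b}{G}(y)\ne 0$, or the two differences have strictly opposite signs; interchanging $x$ and $y$ if necessary then gives exactly $\celg{a}{G}(x)\ge\celg{a}{G}(y)$ with $\celg{b}{G}(x)<\celg{b}{G}(y)$. All graphs needed are disjoint unions of two \emph{gadgets}, a gadget being a directed path pointing into a root together with some pendant vertices — pendants meant to lie at distance $t$ from the root are attached to the path vertex at distance $t-1$ — so that the distance-count vectors, and hence the linear-centrality values, of the roots are read off by inspection. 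Now split by which of (a),(b),(c) holds. In case (a) there is a $k>0$ with exactly one of $a_k,b_k$ equal to zero, say $a_k=0\ne b_k$ (the mirror sub-case handled analogously, exploiting some index at which $\vc b$ is nonzero): let $x$ be a bare path of length $k-1$ into $x$, and $y$ the same path with $m\ge 1$ extra pendants placed at distance $k$. The two distance-count vectors agree below index $k$ and differ there by $m$, so $\celg{a}{}(y)-\celg{a}{}(x)=m a_k=0$ while $\celg{b}{}(y)-\celg{b}{}(x)=m b_k\ne 0$, as wanted. In case (b) the zeros of $\vc a$ and $\vc b$ coincide and the least index $h$ with $a_h\ne 0$ satisfies $b_h=\lambda a_h$ with $\lambda<0$: take $x$ a path of length $h$ into $x$ and $y$ isolated; since $a_1=\dots=a_{h-1}=0$ we get $\celg{a}{}(x)=a_h$, $\celg{a}{}(y)=0$, $\celg{b}{}(x)=b_h$, $\celg{b}{}(y)=0$, and $a_h$ and $b_h$ have opposite signs.

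Case (c) is the substantial one, and I would handle it with the family $G_{h,k}(s,t,p,q)$ of Figure~\ref{fig:inf_seq_graph}. Replacing $\vc b$ by $\vc b/\lambda$ — proportional to $\vc b$, hence equivalent by the first item — I may assume $a_i=b_i$ for $0<i<k$, $a_k\ne b_k$, with $0<h<k$ the least index for which $a_h\ne 0$. Because in $G_{h,k}$ the four groups of pendant vertices sit at distances $h$, $k$, $h+1$, $k-1$ from $x$ (and at $k$, $h$, $k-1$, $h+1$ from $y$) while the rest of the graph is symmetric under exchanging the two rows, a direct distance count yields
\[
  \celg{a}{}(x)-\celg{a}{}(y)=(s-t)(a_h-a_k)+(p-q)(a_{h+1}-a_{k-1}),
\]
and the same identity with $b$ in place of $a$; here $s-t$ and $p-q$ may be made any integers by a non-negative choice of pendant counts. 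So I must choose integers making the $\vc a$-expression $\ge 0$ and the $\vc b$-expression $<0$, and such a choice exists precisely when the vectors $(a_h-a_k,\,a_{h+1}-a_{k-1})$ and $(b_h-b_k,\,b_{h+1}-b_{k-1})$ are linearly independent — which, by Lemma~\ref{lemma:altc}, is exactly inequality~\eqref{eqn:nonprop_cond}, equivalently the case $h+1<k$ together with $a_{h+1}\ne a_{k-1}$. When \eqref{eqn:nonprop_cond} fails the two vectors are parallel; if they are anti-parallel the same graph still works, and in the one remaining configuration — $a_k$ and $b_k$ on the same side of $a_h$ — I would fall back on a plain two-gadget graph hanging pendants at distances $h$ and $k$ off a single length-$k$ path into $x$, versus a bare length-$k$ path into $y$, tuning the two pendant counts so that $\celg{a}{}$ comes out balanced while $\celg{b}{}$ does not, using $a_k\ne b_k$ and $a_h\ne 0$.

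The hard part is case (c): getting the distance-count bookkeeping of $G_{h,k}$ exactly right (especially the degeneration when $h=1$, and the contribution of the connecting path of length $k-h-2$ to the distances from $x$ and $y$), precisely isolating the sub-cases in which \eqref{eqn:nonprop_cond} fails, and supplying an explicit graph that still works there. The first item and cases (a) and (b) are routine by comparison.
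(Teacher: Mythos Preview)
Your overall strategy matches the paper's: prove item~1 by direct computation, and for item~2 split into the three cases (a), (b), (c) described before Lemma~\ref{lemma:altc}, using the family $G_{h,k}$ of Figure~\ref{fig:inf_seq_graph} for case~(c) when inequality~\eqref{eqn:nonprop_cond} holds. The gap is in your fallback when \eqref{eqn:nonprop_cond} fails and the two coefficient vectors $(a_h-a_k,a_{h+1}-a_{k-1})$ and $(b_h-b_k,b_{h+1}-b_{k-1})$ point in the same direction. Your proposed two-gadget graph hangs pendants only on the $x$ side, so the centrality differences are $s\,a_h + p\,a_k$ and $s\,b_h + p\,b_k$ with $s,p\ge 0$ only. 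When $a_h$, $a_k$, $b_k$ all share a sign this cannot work: take $h=1$, $k=2$, $a_1=b_1=1$, $a_2=2$, $b_2=3$ (a legitimate instance of case~(c) with $h+1=k$, hence \eqref{eqn:nonprop_cond} failing, and with $a_k,b_k$ on the same side of $a_h$). Both differences are then non-negative for every non-negative $s,p$, and neither can be forced strictly below zero, so ``balancing $\celg{a}{}$ while unbalancing $\celg{b}{}$'' is impossible with this gadget.

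The paper's remedy is to put pendants on \emph{both} components: $s$ at distance~$h$ and $p$ at distance~$k$ on the $x$-side, and $t,q$ on the $y$-side (the disconnected family $\widehat G_{h,k}$ of Figure~\ref{fig:inf_seq_graphk}). The differences become $(s-t)a_h+(p-q)a_k$ and $(s-t)b_h+(p-q)b_k$, where now $s-t$ and $p-q$ range over all integers; a solution with the first $\ge 0$ and the second $<0$ exists precisely when $a_h b_k\neq a_k b_h$, which holds throughout case~(c) by the first part of Lemma~\ref{lemma:altc}. This single construction disposes of the whole ``\eqref{eqn:nonprop_cond} fails'' regime uniformly, so your anti-parallel/same-direction sub-split is unnecessary once you allow both gadgets to carry pendants.
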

\begin{proof}
    Given a graph $G=(V,E)$, if $\vc a$ and $\vc b$ are proportional, then $\cel{b}(x) = \lambda \cdot \cel{a}(x)$ for some $\lambda>0$ and all $x \in V$, and it is easy to see that in such a case the two centralities are equivalent.

    Let us now assume that $\vc a$ and $\vc b$ are not proportional, and look at the observations before Lemma~\ref{lemma:altc}.
    If we are in case (a), and $k$ is the first index at which either sequence is zero and the other is not, let us consider a graph made of two bidirectional paths of length $k-1$ and $k$, respectively.
    Assuming w.l.o.g.~that $b_k \neq 0$, if $b_k > 0$ the statement is proven by letting $x$ ($y$, respectively) be at the extreme end of the path of length $k-1$ ($k$, respectively).
    %If $b_k < 0$, let $x$ ($y$, respectively) be at the extreme end of the path of length $k$ ($k-1$, respectively).
    If $b_k < 0$, just flip $x$ and $y$, and everything holds thanks to the negative contribution of $b_k$ to $\cel{b}(x)$.
    If we are in case (b), instead, we just need a single bidirectional path of length $h$ with $x$ and $y$ being the two consecutive nodes at one extreme end of the path, with their position depending on the sign of $a_h$ and $b_h$.
    
    We are left to show the statement for two non-proportional $\vc a$ and $\vc b$ both different from $\vc 0$ that satisfy case (c) of the observation above.
    We can distinguish two subcases, depending on whether inequality~\eqref{eqn:nonprop_cond} of Lemma~\ref{lemma:altc} holds or not.

    The first subcase we examine is when the inequality holds.
    Consider the undirected graphs $G_{h,k}$ in Figure~\ref{fig:inf_seq_graph}.
    The linear centralities of $x$ and $y$ with respect to $\vc a$ and $\vc b$ in this graph are
    \begin{align*}
        \celg{\vc a}{G_{h,k}}(x) & = e_a + a_h s + a_{h+1} p + a_{k-1} q + a_k t \\
        \celg{\vc b}{G_{h,k}}(x) & = e_b + b_h s + b_{h+1} p + b_{k-1} q + b_k t \\
        \celg{\vc a}{G_{h,k}}(y) & = e_a + a_h t + a_{h+1} q + a_{k-1} p + a_k s \\
        \celg{\vc b}{G_{h,k}}(y) & = e_b + b_h t + b_{h+1} q + b_{k-1} p + b_k s,
    \end{align*}
    where $e_{a} = a_1 + a_2 + \dots + a_{k+h-2}$ and $e_{b} = b_1 + b_2 + \dots + b_{k+h-2}$.
    Then, the linear centralities $\celg{a}{G_{h,k}}$ and $\celg{b}{G_{h,k}}$ do not agree on $G_{h,k}$ if there is an integer nonnegative solution $(s,p,q,t)$ to the following system of inequalities:
    \begin{align*}
    \begin{cases}
         e_a + a_h s + a_{h+1} p + a_{k-1} q + a_k t \geq e_a + a_h t + a_{h+1} q + a_{k-1} p + a_k s \\
        e_b + b_h s + b_{h+1} p + b_{k-1} q + b_k t < e_b + b_h t + b_{h+1} q + b_{k-1} p + b_k s,
    \end{cases}
    \end{align*}
    which is equivalent to
    \begin{align}\label{eq:system}
    \begin{cases}
        \alpha (s-t) + \alpha' (p-q) \geq 0 \\
        \beta (s-t) + \beta' (p-q) < 0,
    \end{cases}
    \end{align}
    where we have set
    %\begin{align*}
        $\alpha  = a_h - a_k$,
        $\beta  = b_h - b_k$,
        $\alpha' = a_{h+1} - a_{k-1}$ and
        $\beta' = b_{h+1} - b_{k-1}$.
    %\end{align*}
    The two inequalities in~\eqref{eq:system} define two halfplanes (in the coordinate system $X=s-t$ and $Y=p-q$), whose intersection is unbounded and nonempty if $\alpha \beta' \neq \alpha' \beta$, which is exactly inequality~\eqref{eqn:nonprop_cond} of Lemma~\ref{lemma:altc}.
    %\eqref{eq:prop_cond}.
    As a result, the system~\eqref{eq:system} has an integral solution $(X,Y)$, whence nonnegative $s,t,p,q$ can be always found.

    \medskip
    We are left to consider the subcase when inequality~\eqref{eqn:nonprop_cond} does not hold.
    In such a case, we can apply the same line of reasoning as above to the graphs $\widehat G_{h,k}$ in Figure~\ref{fig:inf_seq_graphk}, obtaining the following system of inequalities:
    \begin{align*}
        \begin{cases}
            e_a + a_h s + a_k p \geq e_a + a_h t + a_k q \\
            e_b + b_h s + b_k p < e_b + b_h t + b_k q,
        \end{cases}
    \end{align*}
    where $e_{a} = a_1 + a_2 + \dots + a_{k-1}$ and $e_{b} = b_1 + b_2 + \dots + b_{k-1}$.
    This is equivalent to
    \begin{align*}
        \begin{cases}
            a_h (s-t) - a_k (p-q) \geq 0 \\
            b_h (s-t) - b_k (p-q) < 0.
        \end{cases}
    \end{align*}
    This system has integer nonnegative solution $(s,t,p,q)$ when $a_h b_k \neq b_h a_k$, which holds because of Lemma~\ref{lemma:altc}.
\qed\end{proof}

It is a noteworthy fact that all the graphs used to prove the second item of the statement above are undirected, and connected as long as $a_{h+1} \neq a_{k-1}$ and $h+1<k$ (except for case (a)).

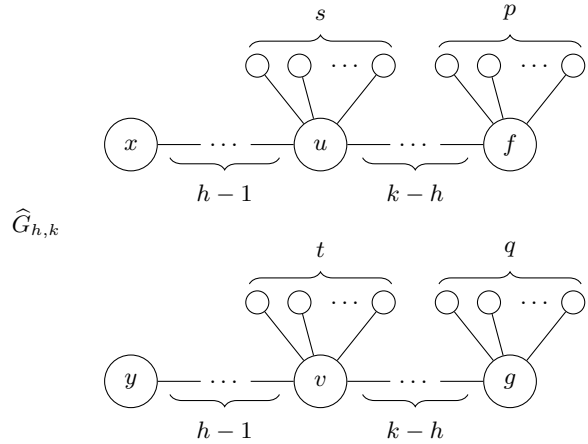
\begin{figure}[htbp]
    \centering
    \begin{tikzpicture}[main/.style = {draw, circle, minimum size=7mm}, node distance=1cm, scale=.7]

        \tikzmath{\d1=2.0; \d2=1.5; \s1=1.0; \d3=-4.5;}

        \node[] (G3) at (-9, \d2-1.5) {$\widehat G_{h,k}$};
        \node[main] (x1) at (\d1*-3.6,\d2) {$x$};
        \node (xh1) at (\d1*-2.7,\d2) {$\dots$};
        \node[main] (h1) at (\d1*-1.8,\d2) {$u$};
        \node (hf1) at (\d1*-0.9,\d2) {$\dots$};
        \node[main] (f1) at (\d1*0,\d2) {{\small$f$}};

        \draw (x1) -- (xh1);
        \draw (xh1) -- (h1);
        \draw (h1) -- (hf1);
        \draw (hf1) -- (f1);

        \node[main, minimum size=3mm] (s1) at (\d1*-1.8-\s1*1.2,\d2*2) {};
        \node[main, minimum size=3mm] (s2) at (\d1*-1.8-\s1*0.4,\d2*2) {};
        \node (s3) at (\d1*-1.8+\s1*0.5,\d2*2) {$\dots$};
        \node[main, minimum size=3mm] (s4) at (\d1*-1.8+\s1*1.2,\d2*2) {};

        \draw (h1) -- (s1);
        \draw (h1) -- (s2);
        \draw (h1) -- (s4);

        \draw [decoration = {raise=-3pt,brace,amplitude=0.5em},decorate,black] (-0.15+\d1*-1.8-\s1*1.2,\d2*2+0.5) -- (+0.15+\d1*-1.8+\s1*1.2,\d2*2+0.5) node[pos=0.5,above=4pt,black] {$s$};

        \node[main, minimum size=3mm] (p1) at (\d1*0-\s1*1.2,\d2*2) {};
        \node[main, minimum size=3mm] (p2) at (\d1*0-\s1*0.4,\d2*2) {};
        \node (p3) at (\d1*0+\s1*0.5,\d2*2) {$\dots$};
        \node[main, minimum size=3mm] (p4) at (\d1*0+\s1*1.2,\d2*2) {};

        \draw (f1) -- (p1);
        \draw (f1) -- (p2);
        \draw (f1) -- (p4);

        \draw [decoration = {raise=-3pt,brace,amplitude=0.5em},decorate,black] (-0.15+\d1*0-\s1*1.2,\d2*2+0.5) -- (+0.15+\d1*0+\s1*1.2,\d2*2+0.5) node[pos=0.5,above=4pt,black] {$p$};
        \draw [decoration = {raise=-3pt,brace,mirror,amplitude=0.5em},decorate,black] (\d1*-3.23,\d2-0.4) -- (\d1*-2.19,\d2-0.4) node[pos=0.5,below=4pt,black] {$h-1$};
        \draw [decoration = {raise=-3pt,brace,amplitude=0.5em},decorate,black] (\d1*-0.4,\d2-0.4) -- (\d1*-1.4,\d2-0.4) node[pos=0.5,below=4pt,black] {$k-h$};

        % second graph

        \node[main] (x2) at (\d1*-3.6,\d2+\d3) {$y$};
        \node (xh2) at (\d1*-2.7,\d2+\d3) {$\dots$};
        \node[main] (h2) at (\d1*-1.8,\d2+\d3) {$v$};
        \node (hf2) at (\d1*-0.9,\d2+\d3) {$\dots$};
        \node[main] (f2) at (\d1*0,\d2+\d3) {{\small$g$}};

        \draw (x2) -- (xh2);
        \draw (xh2) -- (h2);
        \draw (h2) -- (hf2);
        \draw (hf2) -- (f2);

        \node[main, minimum size=3mm] (s12) at (\d1*-1.8-\s1*1.2,\d2*2+\d3) {};
        \node[main, minimum size=3mm] (s22) at (\d1*-1.8-\s1*0.4,\d2*2+\d3) {};
        \node (s32) at (\d1*-1.8+\s1*0.5,\d2*2+\d3) {$\dots$};
        \node[main, minimum size=3mm] (s42) at (\d1*-1.8+\s1*1.2,\d2*2+\d3) {};

        \draw (h2) -- (s12);
        \draw (h2) -- (s22);
        \draw (h2) -- (s42);

        \draw [decoration = {raise=-3pt,brace,amplitude=0.5em},decorate,black] (-0.15+\d1*-1.8-\s1*1.2,\d2*2+0.5+\d3) -- (+0.15+\d1*-1.8+\s1*1.2,\d2*2+0.5+\d3) node[pos=0.5,above=4pt,black] {$t$};

        \node[main, minimum size=3mm] (p12) at (\d1*0-\s1*1.2,\d2*2+\d3) {};
        \node[main, minimum size=3mm] (p22) at (\d1*0-\s1*0.4,\d2*2+\d3) {};
        \node (p32) at (\d1*0+\s1*0.5,\d2*2+\d3) {$\dots$};
        \node[main, minimum size=3mm] (p42) at (\d1*0+\s1*1.2,\d2*2+\d3) {};

        \draw (f2) -- (p12);
        \draw (f2) -- (p22);
        \draw (f2) -- (p42);

        \draw [decoration = {raise=-3pt,brace,amplitude=0.5em},decorate,black] (-0.15+\d1*0-\s1*1.2,\d2*2+0.5+\d3) -- (+0.15+\d1*0+\s1*1.2,\d2*2+0.5+\d3) node[pos=0.5,above=4pt,black] {$q$};
        
        \draw [decoration = {raise=-3pt,brace,mirror,amplitude=0.5em},decorate,black] (\d1*-3.23,\d2-0.4+\d3) -- (\d1*-2.19,\d2-0.4+\d3) node[pos=0.5,below=4pt,black] {$h-1$};
        \draw [decoration = {raise=-3pt,brace,amplitude=0.5em},decorate,black] (\d1*-0.4,\d2-0.4+\d3) -- (\d1*-1.4,\d2-0.4+\d3) node[pos=0.5,below=4pt,black] {$k-h$};
    \end{tikzpicture}
    \caption{The (family of) graphs $\widehat G_{h,k}$ used to distinguish non-proportional sequences when $a_{h+1} = a_{k-1}$ or $h+1 = k$. The horizontal dotted lines at the bottom of the two connected components represent paths of length $h-1$ (and $k-h$, respectively). Note that when $h=1$, $u$ gets to coincide with $x$, hence the $s$ nodes in such a case are connected to $x$ (everything holds symmetrically for $y$).}\label{fig:inf_seq_graphk}
\end{figure}

As an example, consider the linear centralities $\celg{a}{}$ and $\celg{b}{}$, with
\[
    \vc a=\left(0,1,\frac{1}{2},\frac{1}{3},\dots\right)^T \quad\text{and}\quad \vc b=\left(0,\frac{1}{2},\frac{1}{4},\frac{1}{8}\dots\right)^T,
\]
i.e., harmonic centrality and exponential decay centrality with $\delta=1/2$.
These two centralities can be distinguished on the graph $\widehat{G}_{1,3}$ with $s=q=0$, $p=6$, and $t=2$.
In fact, we have:
\begin{align*}
    \celg{\vc a}{\widehat{G}_{1,3}}(x) = \frac{7}{2} & \geq \frac{7}{2} = \celg{\vc a}{\widehat{G}_{1,3}}(y) \\
    \celg{\vc b}{\widehat{G}_{1,3}}(x) = \frac{3}{2} & < \frac{7}{4} = \celg{\vc b}{\widehat{G}_{1,3}}(y).
\end{align*}

\section{Graphs with Many Representable Permutations}

The previous section aimed at discussing when and on which graphs two sequences of coefficients yield different centralities. We now ask a dual question: given a graph $G$, which permutations of its nodes can be induced by linear centralities? In principle, we may have graphs where all linear centralities agree, regardless of how you choose the coefficients: the fact that there is a graph on which each pair of non-proportional centralities can be distinguished does not rule out the possibility that \emph{on some graph} they may all agree.

On the other hand, it may be possible to have graphs that allow for many different permutations of their nodes, depending on the choice of coefficients. To avoid the complications introduced by nodes having the same centrality value, let us stick to the case with no ties.

\begin{definition}[Representable permutation]
We say that a permutation $\pi \in S_n$ is \emph{representable} by a graph $G$
with $n$ nodes iff there is a choice of coefficients $\vc a \in \RR^\NN$ such that $\cel{a}$ has no ties and induces the permutation $\pi$ on $G$.
\end{definition}

Of course, we may classify graphs depending on how many permutations they can represent. More precisely, let us define the \emph{representativeness} of a graph $G$ with $n$ nodes as follows:
\[
    R(G)=\frac{|\{\pi \in S_n \mid \text{$\pi$ is representable by $G$}\}|}{n!} \leq 1.
\]
This ratio quantifies how expressive linear centralities are over a rigid graph $G$. If $R(G)=1$, linear centralities cannot be less expressible than any other centrality: by choosing coefficients suitably, we can induce any permutation of the nodes, hence achieving what any other centrality can achieve.
On the other hand, of course, if $R(G)<1$ there are permutations that cannot be obtained using a linear centrality, although they may be obtained by some other centrality. 
In the rest of this section we show that at least in some cases linear centralities are maximally expressive (i.e., they have maximum representativeness); if we require the graph to be strongly connected, we will be able to obtain the same result but only asymptotically.

From now on, we will drop the assumption that $a_0=0$, as it will play a role in quantifying representativeness.

\subsection{Rouché-Capelli Theorem for the Non-Connected Case}

Given a graph $G$ of $n$ nodes, and a vector $\vc v \in \RR^n$, determining if there exists a coefficient vector $\vc a \in \RR^\NN$ such that $\cel{a}(i)=v_i$ for all $i \in [n]$ can be seen as trying to find a solution $\vc a$ for the following linear system of equations:
\begin{equation}
\label{eqn:finda}
    C_G \cdot \vc a = \vc v.
\end{equation}
This is a system of $n$ linear equalities with $n$ unknowns, hence by the Rouché-Capelli theorem~\cite{shafarevich2012linear}, it has a solution if and only if 
\begin{equation}
\label{eqn:rouche-capelli}
    \Rk{C_G}=\Rk{C_G|\vc v},
\end{equation}
where $\Rk A$ is the rank of matrix $A$, and $A|\vc x$ is the matrix $A$ augmented with the  column $\vc x$.

Now, if $\Rk{C_G}=n$, equation~\eqref{eqn:rouche-capelli} always holds, so if we find a graph $G$ whose distance-count matrix has full rank, then the graph has representativeness equal to $1$. Much more than this, in fact: not only can we find linear centralities that rank the nodes in any order we please, but we may even choose the \emph{actual centrality values} for each single node. Such a graph exists:

\begin{theorem}
    For every $n \geq 3$, the graph $G_n$ in Figure~\ref{fig:discon_graph} has representativeness $R(G_n)=1$.
\end{theorem}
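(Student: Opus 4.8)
The plan is to reduce the statement to a single rank computation. By the Rouch\'e--Capelli discussion just given, it suffices to show that the distance-count matrix $C_{G_n}$ has full rank $n$: if $\Rk{C_{G_n}}=n$, then the system $C_{G_n}\cdot\vc a=\vc v$ is solvable for \emph{every} right-hand side $\vc v\in\RR^n$. Given any permutation $\pi\in S_n$, we then pick $\vc v$ with $v_{\pi(0)}>v_{\pi(1)}>\dots>v_{\pi(n-1)}$ (say $v_{\pi(i)}=n-i$) and solve for $\vc a\in\RR^\NN$; the resulting $\cel{a}$ has no ties and induces exactly $\pi$ on $G_n$, so every permutation is representable and $R(G_n)=1$. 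Note that this argument in fact also yields the stronger claim announced in the text: one can prescribe the actual centrality values node by node.

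The second step is to write down $C_{G_n}$ explicitly. For each node $i$ one reads off from the figure which nodes reach $i$ and at what distance, obtaining the $i$-th row $\vc c_{G_n,i}[:\!n]$ of $C_{G_n}$. The point of the construction in Figure~\ref{fig:discon_graph} is that these rows, after a suitable reordering of rows and of columns, take a ``staircase'' form; for intuition, the plain directed path $0\to1\to\dots\to(n-1)$ already has $(C)_{i,k}=1$ iff $k\le i$, i.e.\ a lower unitriangular matrix, so that the zero pattern is completely transparent.

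The third and only substantive step is to prove $\det C_{G_n}\neq0$. I would do this by permuting rows and columns so that $C_{G_n}$ becomes lower (block\nobreakdash-)triangular with nonzero diagonal, whence its determinant is the product of the diagonal entries; equivalently, one may run Gaussian elimination exploiting the zero pattern, or expand by cofactors along the sparsest rows. This is also where the hypothesis $n\ge3$ enters, since for smaller $n$ the graph in the figure degenerates. The main obstacle is precisely this nonsingularity: one has to make sure that $G_n$ is designed so that its distance-count vectors are not merely pairwise distinct (which would only give geometric rigidity) but genuinely linearly independent, and then organize the row/column operations carefully enough to see that the surviving diagonal is nonzero.
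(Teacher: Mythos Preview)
Your proposal is correct and takes the same approach as the paper: the proof there simply displays $C_{G_n}$ explicitly and observes that it is already lower triangular with diagonal entries $1, n-1, n-2, \dots, 2, 1$, all nonzero, so no row or column permutation (or Gaussian elimination) is actually required.
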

\begin{proof}
The graph in Figure~\ref{fig:discon_graph} has the following distance-count matrix
{
\begin{equation*}
    C = 
    \setlength{\arraycolsep}{3pt}
    \renewcommand\arraystretch{0.9}
    \begin{bmatrix}
        1 & 0 & 0 & 0 & 0 & \dots & 0 & 0 & 0 \\
        1 & n-1 & 0 & 0 & 0 & \dots & 0 & 0 & 0 \\
        1 & 1 & n-2 & 0 & 0 & \dots & 0 & 0 & 0 \\
        1 & 1 & 1 & n-3 & 0 & \dots & 0 & 0 & 0 \\
        \vdots & \vdots & \vdots & \vdots & \vdots & \ddots & \vdots & \vdots & \vdots \\
        1 & 1 & 1 & 1 & 1 & \dots & 1 & 2 & 0 \\
        1 & 1 & 1 & 1 & 1 & \dots & 1 & 1 & 1 \\
    \end{bmatrix}
\end{equation*}
}
which is easily seen to have rank $n$ (it is lower-triangular with no zeros on its diagonal).
\qed\end{proof}

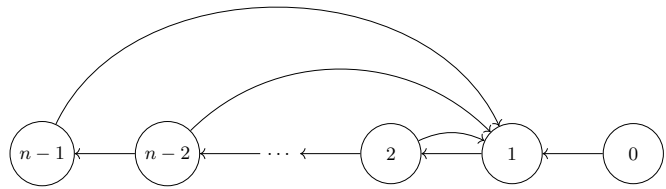
\begin{figure}[h]
    \centering
    \scalebox{0.8}{
    \begin{tikzpicture}[main/.style = {draw, circle, minimum size=10mm}, node distance=1cm]

        \node[main] (x1) {$0$};
        \node[main] (x2) [left=of x1] {$1$};
        \node[main] (x3) [left=of x2] {$2$};
        \node (xk) [left=of x3] {$\dots$};
        \node[main] (xn-1) [left=of xk] {$n-2$};
        \node[main] (xn) [left=of xn-1] {$n-1$};

        \path[->]
        (x1) edge (x2)
        (x2) edge (x3)
        (x3) edge (xk)
        (xk) edge (xn-1)
        (xn-1) edge (xn)
        (x3) edge[bend angle=25,bend left] (x2)
        (xn-1) edge[bend angle=45,bend left] (x2)
        (xn) edge[bend angle=65,bend left] (x2);
    \end{tikzpicture}}
    \caption{A graph $G_n$ with representativeness $R(G_n)=1$.}\label{fig:discon_graph}
\end{figure}

\subsection{Farkas' Lemma for the General Case}

The main drawback of the solution in Figure~\ref{fig:discon_graph} is that the graph $G_n$ is not strongly connected, albeit almost so.
Therefore, we still wonder if a \emph{strongly connected} graph can be permuted in all possible ways by linear centralities.

To tackle this problem (or, more in general, to have a tool to decide if a given permutation is representable by a graph), we shall use the following well-known result~\cite{farkas1902theorie,gale1951linear}:

\begin{lemma}[Farkas' Lemma]
\label{lemma:farkas}
    Given $A \in \RR^{m \times n}$ and $\vc b \in \RR^m$, the following statements are equivalent:
    \begin{enumerate}
        \item there exists $\vc x \in \RR^n$ such that $A\vc x \leq \vc b$;
        \item there exists \emph{no} $\vc y \in \RR^m$ such that $A^T\vc y=\vc 0$, $\vc y \geq \vc 0$ and $\vc b^T\vc y<\vc 0$.
    \end{enumerate}
\end{lemma}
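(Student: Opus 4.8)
The plan is to establish the stated equivalence by proving the dichotomy that exactly one of the following holds: (a)~the system $A\vc x\leq\vc b$ is feasible, and (b)~there is a $\vc y\geq\vc 0$ with $A^{T}\vc y=\vc 0$ and $\vc b^{T}\vc y<0$; since statement~(2) of the lemma is the negation of~(b) and statement~(1) is~(a), this dichotomy is exactly the claimed equivalence $(1)\Leftrightarrow(2)$. The ``not both'' half is the routine direction: if $\vc x$ satisfies $A\vc x\leq\vc b$ and $\vc y$ is as in~(b), then left-multiplying the componentwise inequality $A\vc x\leq\vc b$ by the row vector $\vc y^{T}\geq\vc 0$ gives $0 = (A^{T}\vc y)^{T}\vc x = \vc y^{T}A\vc x \leq \vc y^{T}\vc b = \vc b^{T}\vc y < 0$, a contradiction. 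So the substance is the ``at least one'' half: if $A\vc x\leq\vc b$ is infeasible then a certificate $\vc y$ as in~(b) exists.

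For this I would run Fourier--Motzkin elimination, which is elementary and produces the certificate explicitly. Starting from $A\vc x\leq\vc b$, eliminate the unknowns $x_{0},\dots,x_{n-1}$ one at a time. To eliminate a variable, split the current inequalities according to the sign of that variable's coefficient into sets $P$ (positive), $N$ (negative) and $Z$ (zero); replace all of $P\cup N$ by the pairwise positive combinations, one row of $P$ against one of $N$, scaled so the variable cancels; keep $Z$ unchanged. Two invariants are maintained across each step: first, the new system in the remaining variables is feasible iff the old one was, because its solution set is precisely the coordinate projection of the old solution set; second, every row of the current augmented matrix is a \emph{nonnegative} combination $\vc w^{T}[\,A\mid\vc b\,]$ of rows of $[\,A\mid\vc b\,]$ with $\vc w\geq\vc 0$. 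After all $n$ eliminations, no variables remain, so the surviving system is a list of trivial constraints $0\leq \vc y_{k}^{T}\vc b$, where each $\vc y_{k}\geq\vc 0$ and $\vc y_{k}^{T}A = \vc 0^{T}$ (the vanishing of the variable-coefficients is exactly $A^{T}\vc y_{k}=\vc 0$).

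To conclude: by the first invariant this reduced system is infeasible exactly when the original one is; and a variable-free system $\{\,0\leq \vc y_{k}^{T}\vc b\,\}_{k}$ is infeasible precisely when some $\vc y_{k}^{T}\vc b<0$, and that $\vc y_{k}$ is the desired certificate. (Note that under the infeasibility hypothesis the reduced system cannot have become constraint-free, since an empty list of constraints is trivially feasible.) An alternative, less computational route is geometric: $A\vc x\leq\vc b$ is feasible iff $\vc b$ lies in the cone $K=\{A\vc x+\vc s : \vc x\in\RR^{n},\ \vc s\geq\vc 0\}$; if $\vc b\notin K$, separate $\vc b$ from the closed convex set $K$ by a hyperplane through the origin to obtain $\vc y$ with $\vc y^{T}\vc b<0\leq\vc y^{T}\vc z$ for all $\vc z\in K$, and then read off $A^{T}\vc y=\vc 0$ by testing $\vc z=\pm A\vc x$ and $\vc y\geq\vc 0$ by testing $\vc z=\vc e_{i}$.

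I expect the delicate part to be bookkeeping rather than ideas. In the Fourier--Motzkin proof the care goes into checking that the composed multipliers stay $\geq\vc 0$ through all $n$ rounds, that ``all variable-coefficients vanish'' is literally $A^{T}\vc y_{k}=\vc 0$, and into handling the degenerate rounds where $P$, $N$ or $Z$ is empty (a variable appearing with only one sign is simply projected away and contributes nothing to the certificate). In the geometric proof the crux is instead the auxiliary fact that a finitely generated cone is closed, so that the separating hyperplane theorem applies; once that is granted, the remainder of that argument is immediate.
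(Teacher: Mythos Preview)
Your proof is correct, but note that the paper does not actually prove this lemma: it is stated as a ``well-known result'' with citations to Farkas (1902) and Gale (1951), and is used without proof as a tool for the subsequent Corollary~\ref{cor:farkas} and Theorem~\ref{thm:reprfark}. So there is no paper proof to compare against.

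That said, both of the arguments you sketch are standard and sound. The Fourier--Motzkin route is fully constructive and your two invariants (projection preserves feasibility; all derived rows are nonnegative combinations of the original rows of $[A\mid\vc b]$) are exactly the right bookkeeping to track. The geometric alternative via separation of $\vc b$ from the finitely generated cone $K=\{A\vc x+\vc s:\vc x\in\RR^n,\ \vc s\geq\vc 0\}$ is also correct, and you rightly flag that closedness of $K$ is the only nontrivial ingredient there. Either argument would be acceptable as a self-contained proof, though for the purposes of this paper a citation suffices.
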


A special version of Farkas' Lemma is the following:
\begin{corollary}
\label{cor:farkas-orig}
    Given $A \in \RR^{m \times n}$, the following statements are equivalent:
    \begin{enumerate}
        \item there exists $\vc x \in \RR^n$ such that $A\vc x < \vc 0$;
        \item there exists \emph{no} $\vc y \in \RR^m$ such that $A^T\vc y=\vc 0$, $\vc y \geq \vc 0$ and $\vc y \neq \vc 0$.
    \end{enumerate}
\end{corollary}

    \begin{proof}
    For every $\varepsilon>0$, let ($1_\varepsilon$) and ($2_\varepsilon$) be the two equivalent statements of Lemma~\ref{lemma:farkas} when $\vc b=-\varepsilon \vc 1$. Clearly, the first statement in the corollary is equivalent to $\exists \varepsilon >0$ such that ($1_\varepsilon$). By Farkas' Lemma, the latter is equivalent to $\exists \varepsilon >0$ such that ($2_\varepsilon$) holds, that is:
    \[
        \exists \varepsilon>0 \; \nexists \vc y\in \RR^m \text{ such that } A^T\vc y=\vc 0, \quad \vc y \geq \vc 0, \quad -\varepsilon \vc 1^T \vc y<0.
    \]
    Note that if $\varepsilon>0$ and $\vc y\geq \vc 0$ then $-\varepsilon \vc 1^T \vc y<\vc 0$ unless $\vc y=\vc 0$. So we can express the above condition as:
    \[
        \exists \varepsilon>0 \; \nexists \vc y\in \RR^m \text{ such that } A^T\vc y=\vc 0, \quad \vc y \geq \vc 0, \quad \vc y\neq \vc 0,
    \]
    or simply (since the latter condition does not depend on $\varepsilon$ anymore)
    \[
        \nexists \vc y\in \RR^m \text{ such that } A^T\vc y=\vc 0, \quad \vc y \geq \vc 0, \quad \vc y\neq \vc 0.
    \]
    \qed\end{proof}

Define the matrix $\Xi \in \RR^{(n-1)\times n}$ as follows
{
\begin{equation*}
    \Xi = 
    \setlength{\arraycolsep}{3pt}
    \renewcommand\arraystretch{0.9}
    \begin{bmatrix}
    -1 & 1 & 0 & 0 & \dots & 0 & 0 & 0\\
    0 & -1 & 1 & 0 & \dots & 0 & 0 & 0\\
    0 & 0 & -1 & 1 & \dots & 0 & 0 & 0\\
    \vdots & \vdots & \vdots & \vdots & \ddots & \vdots & \vdots & \vdots \\
    0 & 0 & 0 & 0 & \dots & -1 & 1 & 0\\
    0 & 0 & 0 & 0 & \dots & 0 & -1 & 1
    \end{bmatrix}.
\end{equation*}
}

Then, we can use Corollary~\ref{cor:farkas-orig} to determine if a given graph can represent a permutation as follows:

\begin{theorem}\label{thm:reprfark}
    A permutation $\pi$ is representable by a graph $G$ if and only if there does not exist $\vc y\in \RR^{n-1}$ such that 
    \begin{equation}
    \label{eqn:farkas}
        \left(\Xi R_\pi C_G\right)^T \vc y=0
    \end{equation}
    with $\vc y\geq \vc 0$ and $\vc y \neq \vc 0$, where $R_\pi$ is the permutation matrix of $\pi$ (see Section~\ref{sec:notations}).
\end{theorem}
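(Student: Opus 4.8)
The plan is to rewrite \q{$\pi$ is representable by $G$} as the solvability of a strict homogeneous linear inequality and then apply Corollary~\ref{cor:farkas}. First I would unwind the definition: $\pi$ is representable by $G$ iff there is $\vc a \in \RR^\NN$ such that $\cel{a}$ has no ties on $G$ and induces $\pi$ there. Since $c_{G,i}(k)=0$ for every $k\geq n$, the value $\cel{a}_G(i) = (C_G\cdot\vc a)_i$ depends only on the first $n$ entries of $\vc a$, and conversely any $\vc a\in\RR^n$ can be padded with zeros into an element of $\RR^\NN$; hence we may quantify over $\vc a\in\RR^n$ and treat the centrality vector as $C_G\vc a\in\RR^n$ (the choice of $a_0$ is harmless here, since the first column of $C_G$ is all ones and so $a_0$ merely shifts all entries by the same constant).

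Next I would observe that \q{$\cel{a}$ has no ties on $G$ and induces $\pi$} is equivalent to the strict chain $\cel{a}_G(\pi(0)) > \cel{a}_G(\pi(1)) > \dots > \cel{a}_G(\pi(n-1))$: if this chain holds, all values are pairwise distinct (no ties) and $\pi$ lists the nodes in strictly decreasing centrality order (so $\pi$ is the induced permutation); conversely, absence of ties forces every inequality $\cel{a}_G(\pi(i))\ge\cel{a}_G(\pi(i+1))$ witnessing that $\pi$ is respected to be strict. Using the convention $(R_\pi\vc v)_i = v_{\pi(i)}$, this strict chain says exactly that the vector $R_\pi C_G\vc a$ is strictly decreasing, which by the elementary observation preceding the statement is the same as $\Xi R_\pi C_G\vc a < \vc 0$.

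Finally I would invoke Corollary~\ref{cor:farkas} with $A = \Xi R_\pi C_G \in \RR^{(n-1)\times n}$: there exists $\vc a\in\RR^n$ with $A\vc a < \vc 0$ if and only if there is no $\vc y\in\RR^{n-1}$ with $A^T\vc y = \vc 0$, $\vc y\ge\vc 0$ and $\vc y\neq\vc 0$. Substituting $A$ back in gives precisely the system~\eqref{eqn:farkas}, which completes the argument.

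I expect no genuine obstacle: the proof is essentially bookkeeping. The two points needing care are (i) the passage between $\RR^\NN$ and $\RR^n$, where one must note both that truncation to the first $n$ coordinates is harmless (all higher distance counts vanish) and that any finite coefficient vector extends to an infinite one; and (ii) matching the orientation of the permutation matrix, so that applying $R_\pi$ to the centrality vector produces the $\pi$-reordered vector (and hence the relevant matrix is $\Xi R_\pi C_G$, not $\Xi R_\pi^{T} C_G$ or $\Xi C_G R_\pi$). With those conventions pinned down, the theorem is a direct instance of Corollary~\ref{cor:farkas}.
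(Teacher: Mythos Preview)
Your proposal is correct and follows exactly the approach of the paper, which simply says the result is obtained by applying Corollary~\ref{cor:farkas} with $A=\Xi R_\pi C_G$. You have merely unpacked the details (the passage between $\RR^\NN$ and $\RR^n$, the equivalence between inducing $\pi$ without ties and the strict chain, and the permutation-matrix convention) that the paper leaves implicit.
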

\begin{proof}
    For every $\vc v \in \RR^n$, the vector is monotonically decreasing (i.e., $v_0>v_1>v_2>\dots>v_{n-1}$) if and only if $\Xi \vc v < \vc 0$, because the first inequality is $-v_0+v_1<0$, that is, $v_0>v_1$, and so on.

    Recall that $\pi$ is representable by $G$ iff there exists $\vc a \in \RR^n$ such that $\cel{\vc a}$ induces $\pi$ on $G$.
    This means that
    \[ \cel{\vc a}(\pi(i)) > \cel{\vc a}(\pi(i+1)) \]
    for all $i \in [n-1]$.
    In other words, $\pi$ is representable by $G$ iff there exists $\vc a \in \RR^n$ such that for all $i \in [n-1]$
    \[
        \left(C_G \cdot \vc a\right)_{\pi(i)} > \left(C_G \cdot \vc a\right)_{\pi(i+1)},
    \]
    or, equivalently,
    \[
        \left(R_\pi C_G \cdot \vc a\right)_i > \left(R_\pi C_G \cdot \vc a\right)_{i+1}.
    \]
    Applying the observation above to $\vc v=R_\pi C_G \cdot \vc a$, this is equivalent to saying that there exists $\vc a \in \RR^n$ such that
    \[
        \Xi R_\pi C_G \cdot \vc a < 0.
    \]
    The result is obtained by applying Corollary~\ref{cor:farkas-orig} with $A=\Xi R_\pi C_G$.\qed
\end{proof}

In other words, to see if $\pi$ is representable by $G$ we can look at the system~\eqref{eqn:farkas} of $n$ equations with $n-1$ indeterminates: $\pi$ is representable if and only if no non-negative non-trivial solution exists for the system.

More precisely, we will be using the following:
\begin{corollary}\label{cor:farkas}
    For a given graph $G$ and permutation $\pi$,
    let $X \in \RR^{n \times k}$ be any matrix whose columns are linear combinations of the columns of $C_G$. Let $\rho=\pi^{-1}$.
    Then every $\vc y \in \RR^{n-1}$ satisfying~\eqref{eqn:farkas} also satisfies
    \begin{equation}
        \label{eqn:farkas-ter}
        \sum_{h:\ 0<\rho(h)} x_{h,i}y_{\rho(h)-1}  - \sum_{h:\ \rho(h)<n-1} x_{h,i}y_{\rho(h)} = 0 \qquad \forall i \in [k].
    \end{equation}
\end{corollary}
\begin{proof}
Let us rewrite the $i$-th element of the vector on the LHS of the system~\eqref{eqn:farkas} in a more intelligible form:
\begin{align*}
    \Big( (\Xi R_\pi C)^T \vc y\Big)_i =
    \sum_{j<n-1} \left(\Xi R_\pi C\right)^T_{i,j} y_j & =
    \sum_{j<n-1} \left(\Xi R_\pi C\right)_{j,i} y_j = \\
    & =\sum_{j<n-1}\sum_{h_1,h_2} \Xi_{j,h_1} (R_\pi)_{h_1,h_2} c_{h_2,i} y_j
\end{align*}
where the variables $h_1,h_2$ range over $[n]$. Now, recalling the definition of $\Xi$ and $R_\pi$, we can rewrite~\eqref{eqn:farkas} as:
\begin{equation*}
    %\label{eqn:farkas-bis}
    \sum_{j<n-1} \left(c_{\pi(j+1),i}-c_{\pi(j),i}\right) y_j = 0 \qquad \forall i \in [n].
\end{equation*}
If we let $\rho=\pi^{-1}$, we can re-factor these equations as:
\begin{equation}
    \label{eqn:farkas-ter-pre}
    \sum_{h:\ 0<\rho(h)} c_{h,i}y_{\rho(h)-1}  - \sum_{h:\ \rho(h)<n-1} c_{h,i}y_{\rho(h)} = 0 \qquad \forall i \in [n].
\end{equation}
So $\vc y$ satisfies each of the $n$ equalities~\eqref{eqn:farkas-ter-pre}. There is one such equality for every column of $C_G$. If all such equalities are satisfied, then \emph{a fortiori} all their linear combinations are also satisfied.\qed
\end{proof}

A useful observation is that, if we look at a column $i$ of a matrix $X$ having only one single non-zero entry $x_{h,i}$, then condition~\eqref{eqn:farkas-ter} becomes
\[
    y_{\rho(h)}=y_{\rho(h)-1}
\]
if $0<\rho(h)<n-1$; if, instead, $\rho(h)=0$ ($\rho(h)=n-1$, respectively) then the condition implies $y_0=0$ ($y_{n-2}=0$, resp.).

\medskip
The idea of representable permutations, besides being useful to estimate expressivity of linear centralities on a given graph, has another opposite application that we may call \emph{robustness}. Suppose that you know that an adversary will rank the nodes of your graph using \emph{some unknown} linear centrality (it can be closeness, harmonic, or anything else), and suppose that you want to be sure, say, that a certain node (or set of nodes) is ranked higher than another node (or set of nodes). If you can fiddle a bit with the arcs of the graph, you may want to try to impose a structure which guarantees that only the permutations that you like are representable, thus ruling out the possibility that the event you dislike happens, \emph{whatever linear centrality will be applied by the adversary}. The usage of equation~\eqref{eqn:farkas-ter} is a very powerful tool to guarantee this kind of properties.

\subsection{Representativeness of Strongly Connected Graphs}
\label{sec:repstr}

As a first attempt at providing a sequence of strongly connected graphs with large representativeness, we have the following result:

\begin{theorem}
    \label{thm:rhomid}
    For $n>3$, consider the graph $G_n$ in Figure~\ref{fig:mdc_example}, and let $\pi \in S_n$ and $\rho=\pi^{-1}$. The following two statements are equivalent:
    \begin{enumerate}
        \item $\rho(0)<\rho(n-1)<\rho(1)$ or $\rho(1)<\rho(n-1)<\rho(0)$;
        \item the system~\eqref{eqn:farkas} has no non-negative non-trivial solution.
    \end{enumerate}
\end{theorem}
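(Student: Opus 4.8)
The plan is to invoke Theorem~\ref{thm:reprfark}, which turns statement (2) into the non-existence of a vector $\vc y\in\RR^{n-1}$ with $\vc y\ge\vc 0$, $\vc y\ne\vc 0$ and $(\Xi R_\pi C_{G_n})^T\vc y=\vc 0$, and then to simplify this system by a change of variables. Writing the system in the per-node form~\eqref{eqn:farkas-ter} and collecting, for each row index $h$, the coefficient of $c_{h,i}$ across all the equations, one sees that the whole system is equivalent to $\sum_h z_{\rho(h)}\,(\text{row }h\text{ of }C_{G_n})=\vc 0$, where $z_0:=-y_0$, $z_{n-1}:=y_{n-2}$ and $z_\ell:=y_{\ell-1}-y_\ell$ for $0<\ell<n-1$; equivalently, the vector $\vc w$ defined by $w_h:=z_{\rho(h)}$ lies in the left kernel of $C_{G_n}$, i.e.\ $C_{G_n}^T\vc w=\vc 0$. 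The point of the substitution is that $\vc y\mapsto\vc z$ is a linear bijection between $\RR^{n-1}$ and $\{\vc z\in\RR^n:\sum_\ell z_\ell=0\}$ (the condition $\sum_\ell z_\ell=0$ being automatic once $C_{G_n}^T\vc w=\vc 0$, since the $0$-th column of $C_{G_n}$ is all-ones), and under it $y_j=-(z_0+\dots+z_j)$ telescopes. Hence $\vc y\ge\vc 0$ becomes ``$z_0+\dots+z_j\le 0$ for all $0\le j\le n-2$'', and $\vc y\ne\vc 0$ becomes $\vc w\ne\vc 0$.

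The second step is to compute the left kernel of $C_{G_n}$. Reading the columns off Figure~\ref{fig:mdc_example}: for $3\le k\le n-1$, column $k$ has its only large entry ($n-k$) in row $k-1$ and then $1$'s exactly in rows $k,k+1,\dots,n-2$. Processing these columns in the order $k=n-1,n-2,\dots,3$ forces, one after another, $w_{n-2}=w_{n-3}=\dots=w_2=0$, the last step using $n>3$. The remaining equations, from columns $0$, $1$ and $2$, read $w_0+w_1+w_{n-1}=0$, $(n-1)w_0+w_1+2w_{n-1}=0$ and $(n-2)w_1+(n-3)w_{n-1}=0$; this system has a one-dimensional solution space spanned by $\vc w^{*}:=(-1,\ -(n-3),\ 0,\ \dots,\ 0,\ n-2)^T$. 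Thus $C_{G_n}^T\vc w=\vc 0$ with $\vc w\ne\vc 0$ holds iff $\vc w=t\,\vc w^{*}$ for some $t\ne 0$.

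Combining the two steps: with $\vc w=t\,\vc w^{*}$ the sequence $(z_0,\dots,z_{n-1})=(t\,w^{*}_{\pi(0)},\dots,t\,w^{*}_{\pi(n-1)})$ vanishes everywhere except at the three positions $\rho(0)$, $\rho(1)$, $\rho(n-1)$, where it equals $-t$, $-(n-3)t$ and $(n-2)t$. So the system has a non-negative non-trivial solution iff some $t\ne 0$ makes every partial sum $z_0+\dots+z_j$ ($0\le j\le n-2$) nonpositive, i.e.\ iff the partial sums of $(w^{*}_{\pi(0)},\dots,w^{*}_{\pi(n-1)})$ up to index $n-2$ are all $\le 0$ or all $\ge 0$. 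Since those partial sums are $0$ below $\min(\rho(0),\rho(1),\rho(n-1))$ and above $\max(\rho(0),\rho(1),\rho(n-1))$, a short case analysis on which of $\rho(0),\rho(1),\rho(n-1)$ is least/middle/greatest finishes it: if $\rho(n-1)$ is greatest the two negative coefficients accumulate first, so all partial sums are $\le 0$; if $\rho(n-1)$ is least they are all $\ge 0$; and if $\rho(n-1)$ lies strictly between $\rho(0)$ and $\rho(1)$ the partial sums reach a strictly negative value ($-1$ or $-(n-3)$, after the earlier of the two negative coefficients) and later a strictly positive one ($n-3$ or $1$, after $(n-2)t$, both positive since $n>3$). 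Hence no non-negative non-trivial solution exists exactly when $\rho(0)<\rho(n-1)<\rho(1)$ or $\rho(1)<\rho(n-1)<\rho(0)$, which is (1).

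I expect the main obstacle to be the bookkeeping in the first step: pinning down the telescoping identity $y_j=-(z_0+\dots+z_j)$ and the boundary roles of $z_0$ and $z_{n-1}$ so that ``$\vc y\ge\vc 0$'' becomes precisely ``all partial sums of $\vc z$ nonpositive'', together with making the peeling argument for the one-dimensionality of $\ker C_{G_n}^T$ fully rigorous (including the small case $n=4$, where the column staircase degenerates). Once the problem is reduced to whether the partial sums of a length-$n$ sequence with only three nonzero entries change sign, the remaining case analysis is elementary.
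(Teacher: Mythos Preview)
Your proposal is correct and takes a genuinely different route from the paper's proof.

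The paper treats the two directions separately. For (1)$\Rightarrow$(2) it column-reduces $C_{G_n}$ to an equivalent matrix with many one-hot columns, then walks through the resulting equalities $y_t=y_{t-1}$ and a case split on whether $0\in\{\rho(2),\dots,\rho(n-2)\}$, eventually forcing $\vc y=\vc 0$. For (2)$\Rightarrow$(1) it argues by contraposition at the level of centrality values: if $\rho(n-1)$ is not strictly between $\rho(0)$ and $\rho(1)$, the inequalities $\celg{a}{G_n}(n-1)>\celg{a}{G_n}(0)$ and $\celg{a}{G_n}(n-1)>\celg{a}{G_n}(1)$ force $a_1>a_2$ and $a_2>a_1$ simultaneously.

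Your argument is more structural. The substitution $z_\ell=y_{\ell-1}-y_\ell$ (with the two boundary conventions) and $w_h=z_{\rho(h)}$ decouples the permutation from the matrix: the Farkas system becomes ``$\vc w$ lies in the left kernel of $C_{G_n}$'' together with the sign condition ``all partial sums of the permuted entries are $\le 0$''. Because that left kernel is one-dimensional (your peeling of columns $k=n-1,\dots,3$ is correct, including at $n=4$), the whole question collapses to whether a three-term sequence with values $-1$, $-(n-3)$, $n-2$ placed at positions $\rho(0)$, $\rho(1)$, $\rho(n-1)$ has partial sums of constant sign, which is the short case analysis you give. This handles both implications at once and would transfer verbatim to any graph whose $C$ has a one-dimensional left kernel; the paper's column-reduction-and-chase argument is more ad hoc but avoids explicitly identifying the kernel.
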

\begin{proof} 
The proof is rather long, hence we prefer to place it in the appendix.
\qed
\end{proof}

The set of permutations satisfying the statement of Lemma~\ref{thm:rhomid} has cardinality $n!/3$, as shown by:

\begin{lemma}
    If $n\geq3$, there are $n!/3$ permutations $\rho \in S_n$, such that $\rho(0)<\rho(n-1)<\rho(1)$ or $\rho(1)<\rho(n-1)<\rho(0)$.
\end{lemma}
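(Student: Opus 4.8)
The plan is to exhibit an order-$3$ permutation of the set $S_n$ itself whose orbits all have size exactly $3$ and each contain precisely one of the permutations in question; the count $n!/3$ then drops out immediately. (A brute-force count works too: choose the three values occupying positions $0,1,n-1$ in $\binom{n}{3}$ ways, note that the one at position $n-1$ is forced to be their median while the other two may be swapped, and distribute the remaining values in $(n-3)!$ ways, giving $\binom{n}{3}\cdot 2\cdot(n-3)! = n!/3$; but the orbit argument is cleaner and I would present that.)

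First I would record two elementary observations. Since $n\geq 3$, the positions $0$, $1$ and $n-1$ are pairwise distinct, so for every $\rho\in S_n$ the three real numbers $\rho(0)$, $\rho(1)$, $\rho(n-1)$ are pairwise distinct. Moreover, the condition ``$\rho(0)<\rho(n-1)<\rho(1)$ or $\rho(1)<\rho(n-1)<\rho(0)$'' is exactly the assertion that $\rho(n-1)$ is the \emph{median} of the three-element set $\{\rho(0),\rho(1),\rho(n-1)\}$.

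Next I would define $\phi\colon S_n\to S_n$ by $\phi(\rho)=\rho\circ\tau$, where $\tau=(0\ 1\ n-1)$ is the fixed $3$-cycle of $[n]$ sending $0\mapsto 1\mapsto n-1\mapsto 0$; concretely, $\phi(\rho)$ agrees with $\rho$ off $\{0,1,n-1\}$ and satisfies $\phi(\rho)(0)=\rho(1)$, $\phi(\rho)(1)=\rho(n-1)$, $\phi(\rho)(n-1)=\rho(0)$. Then $\phi^3=\mathrm{id}$, and because $\rho(0),\rho(1),\rho(n-1)$ are distinct we have $\phi(\rho)\neq\rho$; since an orbit size must divide $3$, every $\phi$-orbit has size exactly $3$. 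Within one orbit $\{\rho,\phi(\rho),\phi^2(\rho)\}$ the underlying set of values at positions $0,1,n-1$ is constant, equal to $\{\rho(0),\rho(1),\rho(n-1)\}$, whereas the value placed at position $n-1$ is $\rho(n-1)$, then $\rho(0)$, then $\rho(1)$ — each occurring exactly once. Hence exactly one of the three permutations in the orbit has the median of that set at position $n-1$, i.e.\ exactly one satisfies the stated condition.

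Summing over all orbits then yields that precisely one third of the $n!$ permutations of $S_n$ satisfy the condition, giving the claimed value $n!/3$. There is no genuinely hard step; the only points that warrant a line of care are verifying that each $\phi$-orbit has size $3$ rather than $1$ — which is exactly where distinctness of $\rho(0),\rho(1),\rho(n-1)$, hence the hypothesis $n\geq 3$, is used — and checking that ``median sits at position $n-1$'' is selected exactly once per orbit, which follows from the value at position $n-1$ cycling through all three values in the orbit.
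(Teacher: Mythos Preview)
Your proof is correct and takes a genuinely different route from the paper. The paper argues by induction on $n$: from each $\rho'\in S_{n-1}$ satisfying the condition it builds the permutations $\rho\in S_n$ whose restriction to positions $\{0,1,\dots,n-3,n-1\}$ has the same relative order as $\rho'$, observing that the value $\rho(n-2)$ is then unconstrained and contributes a factor of $n$. Your orbit argument via the right action of the $3$-cycle $\tau=(0\ 1\ n-1)$ is more conceptual: it explains the factor $1/3$ directly as ``one element per size-$3$ orbit has the median at position $n-1$'', with no induction needed. The parenthetical direct count you give is likewise valid and arguably the shortest of the three. Either of your arguments would be a drop-in replacement for the paper's proof; the orbit version has the added virtue of making clear that the result is really a statement about the symmetric group $S_3$ acting on the three distinguished positions, independent of the remaining $n-3$ coordinates.
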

\begin{proof}
    We prove this result by induction on $n$. For $n=3$, there are only two permutations satisfying the constraint, and $3!/3=2$.
    For the inductive step, take any permutation $\rho'\in S_{n-1}$ satisfying the conditions in the statement, and build another permutation $\rho\in S_{n}$ so that:
    \begin{itemize}
        \item the relative order of $\rho(i)$ and $\rho(j)$ is the same as that of $\rho'(i)$ and $\rho'(j)$, for $i,j \in [n-2]$;
        \item the relative order of $\rho(i)$ and $\rho(n-1)$ is the same as that of $\rho'(i)$ and $\rho'(n-2)$, for $i \in [n-2]$.
    \end{itemize}
    Naturally $\rho(n-2)$ can be defined in $n$ ways (because it is not involved in any of the above conditions). By inductive hypothesis, there are $(n-1)!/3$ permutations $\rho'\in S_{n-1}$ satisfying the hypothesis; we have just built the (only) possible $n(n-1)!/3=n!/3$ permutations $\rho\in S_{n}$ with the same property.
\qed\end{proof}

%It is easy to see (by induction on $n$) that there are $n!/3$ permutations $\pi$ satisfying the condition of the above theorem. 

From what we have shown so far, we can conclude that:

\begin{corollary}
    For every $n>3$, the graph $G_n$ in Figure~\ref{fig:mdc_example} is strongly connected and $R(G_n)=1/3$.
\end{corollary}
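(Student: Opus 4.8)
The plan is to derive the corollary by combining three things already in hand: the Farkas-type characterization of representability (Theorem~\ref{thm:reprfark}), the order-theoretic description of the representable permutations of this particular graph (Theorem~\ref{thm:rhomid}), and the count of permutations satisfying that order condition (the lemma just proved). Strong connectivity will be checked directly from the picture.

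First I would settle strong connectivity. Reading the arcs of $G_n$ off Figure~\ref{fig:mdc_example}, node $0$ reaches every node along the directed path $0\to 1\to 2\to\cdots\to n-1$, while each node $i\neq 0$ carries an arc $i\to 0$ (as witnessed by the first row $(1,n-1,0,\dots,0)$ of $C_{G_n}$). Composing a path into $0$ with a path out of $0$ joins every ordered pair of nodes by a directed path, so $G_n$ is strongly connected; equivalently, every row of $C_{G_n}$ sums to $n$, so all distances in $G_n$ are finite.

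For the representativeness, I would fix $n>3$, let $\pi\in S_n$ and put $\rho=\pi^{-1}$. By Theorem~\ref{thm:reprfark}, $\pi$ is representable by $G_n$ precisely when the system~\eqref{eqn:farkas} admits no $\vc y\geq\vc 0$ with $\vc y\neq\vc 0$; and by Theorem~\ref{thm:rhomid} this is the case precisely when $\rho(0)<\rho(n-1)<\rho(1)$ or $\rho(1)<\rho(n-1)<\rho(0)$. Hence the set of permutations representable by $G_n$ is exactly the image under inversion $\rho\mapsto\rho^{-1}$ of the set of $\rho\in S_n$ obeying that order condition. Since inversion is a bijection of $S_n$, these two sets have the same size, and by the preceding lemma that common size is $n!/3$; dividing by $n!$ gives $R(G_n)=1/3$.

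I do not expect a genuine obstacle here: the whole argument is a chain of biconditionals plus the trivial remark that inversion preserves cardinalities. The one point that wants a line of care is to note that the system named in Theorem~\ref{thm:rhomid} is literally~\eqref{eqn:farkas} with $G=G_n$, so that Theorems~\ref{thm:reprfark} and~\ref{thm:rhomid} can be glued without a gap. It is worth remarking, finally, that the same argument shows $G_n$ to be geometrically rigid for $n>3$: were two rows of $C_{G_n}$ equal, no linear centrality could be tie-free and no permutation would be representable, contradicting the positive count $n!/3$.
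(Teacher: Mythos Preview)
Your proposal is correct and follows exactly the route the paper intends: the paper states the corollary without proof, immediately after Theorem~\ref{thm:rhomid} and the counting lemma, under the phrase ``From what we have shown so far, we can conclude that'', so the argument is precisely the chain Theorem~\ref{thm:reprfark} $\Rightarrow$ Theorem~\ref{thm:rhomid} $\Rightarrow$ counting lemma that you spell out. Your explicit check of strong connectivity and the remark that inversion is a bijection of $S_n$ fill in the only details the paper leaves implicit.
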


\subsection{A graph that represents almost all permutations}

In this section, we assume that $n \ge 4$, and we let $\PP_n$ be the set of permutations $\pi\in S_n$, with $\rho=\pi^{-1}$ such that one of the following (mutually exclusive) conditions holds:
\begin{enumerate}
    \item\label{enu:ppone} $2 \le \rho(n-1) \le n-3$;
    \item\label{enu:pptwo} $\rho(n-1) = 1 \land \rho(n-2) \neq 0$;
    \item\label{enu:ppthree} $\rho(n-1) = n-2 \land \rho(n-2) \neq n-1$.
\end{enumerate}

The following preliminary statement shows that almost all permutations belong to this set (in an asymptotic sense):
\begin{lemma}
    The set $\PP_n$ has cardinality $n!(n-3)/(n-1)$.
\end{lemma}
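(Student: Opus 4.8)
The plan is to count directly the permutations $\rho \in S_n$ satisfying one of the three conditions, since the map $\pi \mapsto \pi^{-1}$ is a bijection of $S_n$ and hence $|\PP_n|$ equals the number of such $\rho$. The three cases are distinguished purely by the value of $\rho(n-1)$: it lies in $\{2,\dots,n-3\}$ in case~\ref{enu:ppone}, equals $1$ in case~\ref{enu:pptwo}, and equals $n-2$ in case~\ref{enu:ppthree}. Since $n\geq 4$, these three value sets are pairwise disjoint (in particular $1$ and $n-2$ are not in $\{2,\dots,n-3\}$, and $1\neq n-2$), so the cases genuinely partition $\PP_n$ and we may count them separately and add.

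For case~\ref{enu:ppone} there are $n-4$ admissible values for $\rho(n-1)$, and once $\rho(n-1)$ is fixed the remaining $n-1$ values fill the remaining $n-1$ positions in $(n-1)!$ ways, giving $(n-4)\,(n-1)!$ permutations (this is vacuous, and the formula gives $0$, exactly when $n=4$). For case~\ref{enu:pptwo}, the permutations with $\rho(n-1)=1$ number $(n-1)!$, and those among them additionally satisfying $\rho(n-2)=0$ number $(n-2)!$; hence case~\ref{enu:pptwo} contributes $(n-1)!-(n-2)!=(n-2)\,(n-2)!$. Case~\ref{enu:ppthree} is counted in exactly the same way, with the roles of $1,0$ replaced by $n-2,n-1$, and also contributes $(n-2)\,(n-2)!$.

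It then remains to sum the three contributions and simplify:
\[
    (n-4)(n-1)! + 2(n-2)(n-2)! = \big[(n-4)(n-1)+2(n-2)\big](n-2)! = (n^2-3n)(n-2)! = n(n-3)(n-2)!,
\]
and since $n(n-2)! = n!/(n-1)$, this equals $n!(n-3)/(n-1)$, as claimed. This argument is entirely elementary; the only points needing a little care are the verification that the three value sets for $\rho(n-1)$ are disjoint (so that no permutation is double-counted) and the small algebraic simplification at the end, and there is no real obstacle beyond this bookkeeping.
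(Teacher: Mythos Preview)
Your proof is correct and follows essentially the same approach as the paper: both count the three disjoint cases directly as $(n-4)(n-1)!$, $(n-1)!-(n-2)!$, and $(n-1)!-(n-2)!$, then sum and simplify. You are slightly more explicit in justifying disjointness and in invoking the bijection $\pi\mapsto\pi^{-1}$, but the argument is the same.
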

\begin{proof}
    Note that in $\PP_n$ there is no permutation such that $n-1$ is in the first (i.e., $\rho(n-1)=0$) or last (i.e., $\rho(n-1)=n-1$) position.

    We start with the set of permutations satisfying~\eqref{enu:ppone}:
    it is easy to see that these are the permutations where $n-1$ is in one of the $(n-4)$ \q{middle} positions of the permutation, and the other $n-1$ elements can be permuted freely.
    Therefore, there are in total $(n-4)(n-1)!$ such permutations.

    For the second constraint, we are considering the $(n-1)!$ permutations where $n-1$ is in the second position, and from these we have to cut off the $(n-2)!$ permutations of $(n-1)$ elements where $n-2$ comes first.
    Hence, there are $(n-1)!-(n-2)!$ permutations satisfying~\eqref{enu:pptwo}.
    The third constraint is analogous.

    Hence, the total number of permutations in $\PP_n$ is:
    \begin{align*}
        |\PP_n| & = (n-4)(n-1)! + 2 \big( (n-1)! - (n-2)! \big) = (n-1)! \Big( n-2-\frac{2}{n-1} \Big) \\
            & = (n-1)! \; \frac{n^2 - 3n}{n-1} = n!\;\frac{n-3}{n-1}.
    \end{align*}
\qed\end{proof}

\begin{figure}
    \begin{minipage}[c]{1\linewidth}
    \centering
    \scalebox{0.8}{
    \begin{tikzpicture}[main/.style = {draw, circle, minimum size=10mm}, node distance=0.8cm]
        \node[main] (x0) {$n-3$};
        \node[main] (x1) [left=of x0] {$0$};
        \node[main] (x2) [left=of x1] {$1$};
        \node (dots) [left=of x2] {$\dots$};
        \node[main] (x3) [left=of dots] {$n-4$};
        \node[main] (x4) [left=of x3] {$n-1$};
        \node[main] (x5) [left=of x4] {$n-2$};

        \path[->]
        (x0) edge (x1)
        (x1) edge (x2)
        (x2) edge (dots)
        (dots) edge (x3)
        (x3) edge (x4)
        (x4) edge(x5)

        (x3) edge[bend angle=25, bend right] (x0)
        (x5) edge[bend angle=20, bend right] (x4)
        
        (x2) edge[bend angle=15, bend left] (x1)
        (x3) edge[bend angle=35, bend left] (x1)
        (x4) edge[bend angle=55, bend left] (x1)
        (x5) edge[bend angle=75, bend left] (x1);
    \end{tikzpicture}}
    \end{minipage}\\\vspace{10pt}
    \begin{minipage}[c]{1\linewidth}
    \centering
    $C_{G'_n} = 
    \setlength{\arraycolsep}{3pt}
    \renewcommand\arraystretch{0.9}
    \begin{bmatrix}
        1 & n-1 & 0 & 0 & 0 & \dots & 0 & 0 & 0 \\
        1 & 1 & n-2 & 0 & 0 & \dots & 0 & 0 & 0 \\
        1 & 1 & 1 & n-3 & 0 & \dots & 0 & 0 & 0 \\
        \vdots & \vdots & \vdots & \vdots & \vdots & \ddots & \vdots & \vdots & \vdots \\
        1 & 1 & 1 & 1 & 1 & \dots & 1 & 2 & 0 \\
        1 & 1 & 1 & 1 & 1 & \dots & 1 & 1 & 1 \\
        1 & 2 & 1 & 1 & 1 & \dots & 1 & 1 & 0 \\
    \end{bmatrix}$
    \end{minipage}
    \caption{\label{fig:conn_graph}The graph $G'_n$ with $R(G'_n)=(n-3)/(n-1)$ and its distance-count matrix $C_{G'_n}$.}
\end{figure}
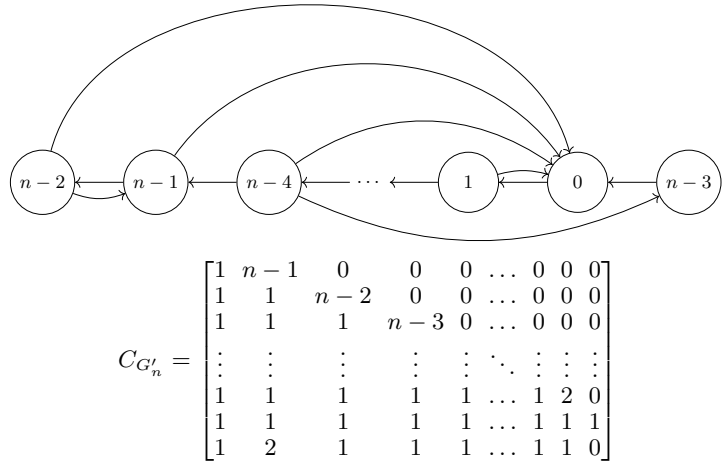

The next result connects $\PP_n$ to the graph $G'_n$ of Figure~\ref{fig:conn_graph}:

\begin{theorem}
    \label{thm:hardfarkas}
    Consider the graph $G'_n$ of Figure~\ref{fig:conn_graph} and let $\pi \in S_n$. The following two statements are equivalent:
    \begin{enumerate}
        \item\label{enu:hardfarkasone} $\pi$ belongs to $\PP_n$;
        \item\label{enu:hardfarkastwo} the permutation $\pi$ is representable by the graph $G'_n$.
    \end{enumerate}
\end{theorem}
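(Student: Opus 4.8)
The plan is to apply the Farkas-type characterization of Theorem~\ref{thm:reprfark}: $\pi$ is representable by $G'_n$ if and only if the system~\eqref{eqn:farkas}, equivalently~\eqref{eqn:farkas-ter}, admits no non-negative non-trivial $\vc y\in\RR^{n-1}$. First I would set $\vc z := R_\pi^{T}\Xi^{T}\vc y\in\RR^n$, so that~\eqref{eqn:farkas} becomes $C_{G'_n}^{T}\vc z=\vc 0$, i.e.\ $\vc z$ orthogonal to every column of $C_{G'_n}$. The columns of $C_{G'_n}$ (Figure~\ref{fig:conn_graph}) are highly structured: column~$0$ is $\vc 1$; for $2\le j\le n-2$, column~$j$ has $n-j$ in row~$j-1$ and $1$ in rows $j,\dots,n-1$; column~$n-1$ is the unit vector on row~$n-2$; and only column~$1$ has an atypical entry ($2$) in row~$n-1$. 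Subtracting consecutive column equations, $C_{G'_n}^{T}\vc z=\vc 0$ collapses to the chain $z_{n-2}=0$, $z_0=(n-3)z_1$, $z_{n-1}=-(n-2)z_0$, and $(n-j)z_{j-1}=(n-j-2)z_j$ for $2\le j\le n-3$; these are consistent and leave one free parameter. Hence $C_{G'_n}$ has rank $n-1$ and $\ker C_{G'_n}^{T}=\mathrm{span}(\vc z^{*})$ for an explicit $\vc z^{*}$ with $z^{*}_i>0$ for $0\le i\le n-3$, $z^{*}_{n-2}=0$ and $z^{*}_{n-1}=-(n-2)(n-3)<0$; and since $\vc 1$ is a column of $C_{G'_n}$ we also have $\sum_i z^{*}_i=0$, so $-z^{*}_{n-1}=\sum_{j=0}^{n-3}z^{*}_j$.

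Because $\Xi^{T}$ is injective with image the sum-zero hyperplane and $R_\pi^{T}$ merely permutes coordinates, $\vc y\mapsto\vc z$ is injective with image exactly $\vc 1^{\perp}\supseteq\ker C_{G'_n}^{T}$; so every admissible $\vc z$ is $\lambda\vc z^{*}$ for some $\lambda\in\RR$, and unwinding the definitions the matching $\vc y$ is $y_k=-\lambda\,S_k$, where $S_k:=\sum_{j=0}^{k}z^{*}_{\pi(j)}$. Thus $\vc y\ne\vc 0$ iff $\lambda\ne0$, and a valid $\vc y\ge\vc 0$ exists iff the partial sums $S_0,\dots,S_{n-2}$ are all $\le 0$ (then take $\lambda>0$) or all $\ge 0$ (then take $\lambda<0$). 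So $\pi$ is representable iff $S_0,\dots,S_{n-2}$ attains both a strictly positive and a strictly negative value. Since $S_{n-1}=\sum_i z^{*}_i=0$ and $S_k-S_{k-1}=z^{*}_{\pi(k)}$ is strictly positive except for one flat step at $k=\rho(n-2)$ and one downward jump of size $(n-2)(n-3)$ at $k=\rho(n-1)$, we get $S_k\ge 0$ for $k<\rho(n-1)$ and $S_k\le 0$ for $k\ge\rho(n-1)$.

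What remains is to pin down when both signs occur for some $k\in\{0,\dots,n-2\}$. A strictly positive $S_k$ occurs iff $\rho(n-1)\ge 1$ and $S_{\rho(n-1)-1}>0$; as $S_{\rho(n-1)-1}$ is a sum of $\rho(n-1)$ of the non-negative numbers $z^{*}_0,\dots,z^{*}_{n-2}$, of which only $z^{*}_{n-2}$ is zero, this fails exactly when $\rho(n-1)=1$ and $\rho(n-2)=0$. A strictly negative $S_k$ occurs iff $\rho(n-1)\le n-2$ and $S_{\rho(n-1)}=S_{\rho(n-1)-1}-(n-2)(n-3)<0$; since $z^{*}_0,\dots,z^{*}_{n-2}$ total $(n-2)(n-3)$ while $S_{\rho(n-1)-1}$ omits at least one of them, the borderline case $S_{\rho(n-1)-1}=(n-2)(n-3)$ happens only when the single omitted entry is the zero one $z^{*}_{n-2}$, i.e.\ $\rho(n-1)=n-2$ and $\rho(n-2)=n-1$. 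Putting the two occurrence conditions together yields exactly: $\rho(n-1)\in\{2,\dots,n-3\}$, or $\rho(n-1)=1\wedge\rho(n-2)\ne 0$, or $\rho(n-1)=n-2\wedge\rho(n-2)\ne n-1$ --- that is, $\pi\in\PP_n$, which proves both implications simultaneously. For the handful of \q{not representable} cases one could alternatively mimic the $(2)\Rightarrow(1)$ part of Theorem~\ref{thm:rhomid}, writing $\celg{a}{G'_n}$ on the nodes $0,1,n-3,n-2,n-1$ (whose rows of $C_{G'_n}$ are the simplest) and exhibiting a pair of incompatible inequalities on the $a_i$.

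I expect the main obstacle to be the bookkeeping that reduces $C_{G'_n}^{T}\vc z=\vc 0$ to the one-dimensional description above --- especially tracking how the atypical last row $(1,2,1,\dots,1,0)$ of $C_{G'_n}$ couples $z_{n-1}$ to the other coordinates; this is precisely the feature distinguishing $G'_n$ from the graph of Figure~\ref{fig:mdc_example}, and is why $R(G'_n)=(n-3)/(n-1)$ rather than $1/3$. Once $\vc z^{*}$ and its sign pattern $(+,\dots,+,0,-)$ are known, the partial-sum monotonicity argument and the case split on $\rho(n-1),\rho(n-2)$ are routine.
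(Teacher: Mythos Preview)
Your proposal is correct and takes a genuinely different route from the paper. The paper, for the direction $\pi\in\PP_n\Rightarrow$ representable, works with the column-reduced matrix $X$ and the equations~\eqref{eqn:farkas-terappliedbis}; it extracts \emph{sign} equalities among the differences $y_{\rho(h)-1}-y_{\rho(h)}$, encodes them in an auxiliary graph $H$ (a chain plus two isolated vertices), and finishes with a three-case analysis on where the two ``special'' vertices $(-1,0)$ and $(n-2,n-1)$ sit relative to the chain. The converse is handled separately by writing out $\celg{a}{G'_n}$ on a few nodes and deriving an inconsistency $a_1>a_1$.

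You instead compute $\ker C_{G'_n}^{T}$ explicitly, observe it is one-dimensional with generator $\vc z^{*}$ of sign pattern $(+,\dots,+,0,-)$, and then use the bijection $\vc y\leftrightarrow\vc z$ to rewrite the Farkas condition as a statement about the partial sums $S_k=\sum_{j\le k} z^{*}_{\pi(j)}$: non-representability becomes ``all $S_k$ weakly of one sign''. The single negative increment at $k=\rho(n-1)$ and the single flat step at $k=\rho(n-2)$ then give the four excluded cases at once. This is cleaner and more uniform: both implications fall out of the same partial-sum picture, with no auxiliary graph and no separate contrapositive argument. The paper's approach, on the other hand, is closer in spirit to the proof of Theorem~\ref{thm:rhomid} and does not require verifying that the one-parameter family actually satisfies \emph{all} $n$ column equations (your ``bookkeeping obstacle'', which you correctly anticipate but which does go through: e.g.\ column~$n-2$ gives $2z_{n-3}+z_{n-1}=0$, consistent with $z_{n-3}=(n-2)(n-3)/2$ and $z_{n-1}=-(n-2)(n-3)$ under the normalisation $z_1=1$).
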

\begin{proof}
We start proving that~\eqref{enu:hardfarkasone}$\implies$\eqref{enu:hardfarkastwo}. Let $\pi \in \PP_n$ be fixed and $\vc y \in \RR^{n-1}$ be a non-negative solution of the system~\eqref{eqn:farkas}: we will prove that $\vc y=\vc 0$. Using Corollary~\ref{cor:farkas}, the vector $\vc y$ must satisfy:
\begin{equation}
    \label{eqn:farkas-terapplied}
    \sum_{h:\ 0<\rho(h)} x_{h,i}y_{\rho(h)-1}  - \sum_{h:\ \rho(h)<n-1} x_{h,i}y_{\rho(h)} = 0 \qquad \forall i, 
\end{equation}
where $X$ represents either the distance-count matrix of $G'_n$ or, more generally, any matrix obtained by linearly combining the columns of the distance-count matrix. Index $i$ ranges over all the columns of $X$, and there is one equation per column. 

It is convenient to extend $\vc y$ adding one element to the left ($y_{-1}$) and one element to the right ($y_{n-1}$), with $y_{-1}=y_{n-1}=0$. This way, we can rewrite~\eqref{eqn:farkas-terapplied} in the simpler form
\begin{equation}
    \label{eqn:farkas-terappliedbis}
    \sum_{h=0}^{n-1} x_{h,i} \left(y_{\rho(h)-1}  - y_{\rho(h)}\right) = 0 \qquad \forall i.
\end{equation}

Instead of using directly the matrix $X=C_{G'_n}$, we prefer to use the following:
\[
    X = 
    \setlength{\arraycolsep}{3pt}
    \renewcommand\arraystretch{0.9}
    \begin{pNiceMatrix}
        n-1 &\Block[fill=black!25]{9-5}{} 0 & 0 & 0 & \dots & 0 & 0 & 0 \\
        3-n & n-2 & 0 & 0 & \dots & 0 & 0 & 0 \\
        0 & 4-n & n-3 & 0 & \dots & 0 & 0 & 0 \\
        0 & 0 & 5-n & n-4 & \dots & 0 & 0 & 0 \\
        \vdots & \vdots & \vdots & \vdots & \vdots & \ddots & \vdots & \vdots \\% & \vdots \\
        0 & 0 & 0 & 0 & \dots & 3 & 0 & 0 \\
        0 & 0 & 0 & 0 & \dots & -1 & 2 & 0 \\
        0 & 0 & 0 & 0 & \dots & 0 & 0 & 1 \\
        1 & 0 & 0 & 0 & \dots & 0 & 1 & 0 \\
    \end{pNiceMatrix}.
\]

This matrix has $n-1$ columns (i.e., it yields $n-1$ equations), and its columns are obtained  from $C_{G'_n}$ (Figure~\ref{fig:conn_graph}) starting from the one of index $1$, subtracting from each column the following one, except the last one that is left untouched. 

\paragraph{Inner columns.}
All the columns of $X$ except the first one and the last two contain exactly two non-zero entries, which are consecutive and with opposite sign. We call these ($n-4$) columns the ``inner columns'', highlighted in grey in the matrix above.
Equation~\eqref{eqn:farkas-terappliedbis}, when applied to these columns, becomes
\[    (n-i-1) \left(y_{\rho(i)-1} - y_{\rho(i)}\right) = (n-i-3)\left(y_{\rho(i+1)-1} - y_{\rho(i+1)}\right)
\]
for $i=1,\dots,n-4$, which implies that:
\[
    \sgn\left(y_{\rho(i)-1} - y_{\rho(i)}\right) = \sgn\left(y_{\rho(i+1)-1} - y_{\rho(i+1)}\right).    
\]
This is, in fact, a chain of equalities between signs: 
\begin{multline}
    \label{eqn:sign}
    \sgn\left(y_{\rho(1)-1} - y_{\rho(1)}\right)=\sgn\left(y_{\rho(2)-1} - y_{\rho(2)}\right)=\\
    =\sgn\left(y_{\rho(3)-1} - y_{\rho(3)}\right) = \dots = 
    \sgn\left(y_{\rho(n-3)-1} - y_{\rho(n-3)}\right).
\end{multline}

\paragraph{The last two columns.}
The very last column of $X$ yields the equation:
\begin{equation}
    \label{eqn:lastcolumn}
    y_{\rho(n-2)-1} = y_{\rho(n-2)}.
\end{equation}

The next-to-last column, instead, gives
\[
    2 \left(y_{\rho(n-3)-1} - y_{\rho(n-3)}\right) =  -\left(y_{\rho(n-1)-1} - y_{\rho(n-1)}\right)
\]
which entails
\begin{equation}
    \label{eqn:ntlcolumn}
    \sgn \left(y_{\rho(n-3)-1} - y_{\rho(n-3)}\right) =  -\sgn \left(y_{\rho(n-1)-1} - y_{\rho(n-1)}\right).
\end{equation}
Observe that the left-hand side corresponds to the last element of the chain~\eqref{eqn:sign}. 

\paragraph{The first column.}
Finally, the first column of $X$ yields:
\[   (n-1) \left(y_{\rho(0)-1}-y_{\rho(0)}\right) +
    (3-n) \left(y_{\rho(1)-1}-y_{\rho(1)}\right) +\\
    \left(y_{\rho(n-1)-1}-y_{\rho(n-1)}\right) = 0.    
\]
Since $y_{\rho(1)-1}-y_{\rho(1)}$  is part of the chain~\eqref{eqn:sign} and has a negative coefficient, the second summand has the same sign as the last one because of~\eqref{eqn:ntlcolumn}. So the first addend must have the same sign as the elements of the chain~\eqref{eqn:sign}:
\begin{equation}
    \label{eqn:firstcolumn}
    \sgn( \left(y_{\rho(0)-1}-y_{\rho(0)}\right) = \sgn \left(y_{\rho(1)-1}-y_{\rho(1)}\right). 
\end{equation}

\paragraph{The graph $H$.} To consolidate all the knowledge gained so far, consider the undirected graph $H$ whose $n$ vertices are the pairs $(p-1, p)$ for $p=0,\dots,n-1$, and with an edge between $(p-1, p)$ and $(q-1, q)$ whenever $p=\rho(i)$ and $q=\rho(i+1)$ for some $i \in \{0,\dots,n-4\}$. 
Each vertex $(p-1, p)$ is interpreted as representing the difference $y_{p-1}-y_p$: when talking about a vertex, we may say ``the sign of the vertex $(p-1,p)$'' to mean ``the sign of the difference $y_{p}-y_{p-1}$''.

The presence of an edge between $(p-1, p)$ and $(q-1, q)$ represents the fact that
\[
        \sgn\left(y_{p-1}-y_{p}\right) = \sgn\left(y_{q-1}-y_{q}\right),
\]
which we know by~\eqref{eqn:sign} and~\eqref{eqn:firstcolumn}.

Looking at $H$, we see that it is in fact a chain starting at vertex $(\rho(0)-1, \rho(0))$ and ending at $(\rho(n-3)-1, \rho(n-3))$. This chain includes $n-2$ vertices. All these vertices, hence, have the same sign. We refer to this sign as the ``sign of the chain''.

There are $2$ vertices left out of the chain (i.e., isolated in the graph), precisely the vertices in this set:
\[
    J=\{(\rho(n-2)-1, \rho(n-2)), (\rho(n-1)-1, \rho(n-1))\}.
\]
We have some information about these vertices, though. 
The first one is in fact simple, because it has null sign (i.e., $\sgn(y_{\rho(n-2)-1}-y_{\rho(n-2)})=0)$), as in~\eqref{eqn:lastcolumn}.
The second one instead has a sign \emph{opposite} to the sign of the chain, because of~\eqref{eqn:ntlcolumn}.

A graphical depiction of graph $H$ and of all the information we have gained about its vertices is given in Figure~\ref{fig:proof}.

\begin{figure}
    \centering
    \begin{tikzpicture}[main/.style=draw, rectangle]
        \node[draw, rectangle] (r0) at (0,0) {$y_{\rho(0)-1}-y_{\rho(0)}$};
        \node[draw, rectangle] (r1) at (0,-1) {$y_{\rho(1)-1}-y_{\rho(1)}$};
        \node[draw, rectangle] (r2) at (0,-2) {$y_{\rho(2)-1}-y_{\rho(2)}$};
        \node (rdots) at (0,-3) {\myvdots};
        \node[draw, rectangle] (rn4) at (0,-4) {$y_{\rho(n-4)-1}-y_{\rho(n-4)}$};
        \node[draw, rectangle] (rn3) at (0,-5) {$y_{\rho(n-3)-1}-y_{\rho(n-3)}$};
        \draw (r0) -- (r1);
        \draw (r1) -- (r2);
        \draw (r2) -- (rdots);
        \draw (rdots) -- (rn4);
        \draw (rn4) -- (rn3);
        \node [draw, rectangle, label=right:{$^0$}] (rn2) at (6,-2) {$y_{\rho(n-2)-1}-y_{\rho(n-2)}$};
        \node [draw, rectangle] (rn1) at (6,-3) {$y_{\rho(n-1)-1}-y_{\rho(n-1)}$};
        \draw [decorate, decoration=zigzag] (rn1) -- (rn3);
        \draw (r0) -- (r1);
    \end{tikzpicture}
    \caption{\label{fig:proof}The graph $H$. Node $(p-1, p)$ is labelled explicitly as the difference it represents (i.e., $y_{p-1}-y_p$). Edges should be read ``has-the-same-sign-as''. The zigzag line is read ``has-opposite-sign-than''. The isolated node marked with $0$ is known to have sign equal to zero (equation~\eqref{eqn:lastcolumn}).}
\end{figure}
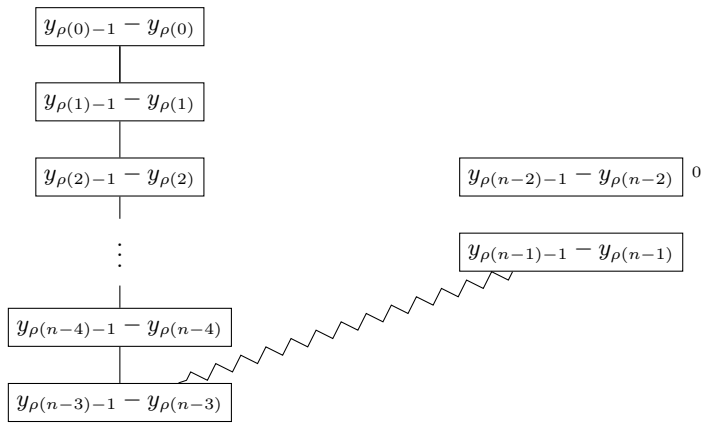

\paragraph{The special vertices.}
Observe that all variables $y_i$ (for $i=0,\dots,n-2$) appear in exactly two vertices, whereas the fake variables $y_{-1}$ and $y_{n-1}$ appear in exactly one vertex. We call the vertices containing $y_{-1}$ and $y_{n-1}$ ``special'' because the two variables $y_{-1}$ and $y_{n-1}$ are synthetic, they are not really part of $\vc y$. So the sign of $(-1,0)$ is $\leq 0$, whereas the sign of $(n-2,n-1)$ is $\geq 0$. 

\paragraph{Case-by-case analysis.} We end the proof with a case-by-case analysis, relying on the fact that $\pi \in \PP_n$.
\begin{enumerate}
    \item \emph{Both special vertices belong to the chain.} In this case, necessarily the sign of the chain is $0$ (all the vertices in the chain have the same sign, but there are two vertices of signs $\geq 0$ and $\leq 0$). So also the sign of node $(\rho(n-1)-1, \rho(n-1))$ (because of~\eqref{eqn:ntlcolumn}) is zero. In other words, all the differences appearing in all vertices are null.
    Also note that for the special vertices $y_{-1}-y_0=0$ implies $y_0=0$, and similarly $y_{n-2}=0$. Using all the other equalities, we obtain $\vc y=\vc 0$.
    \item \emph{Both special vertices are out of the chain.} This fact would imply $\rho(n-2)=0$ and $\rho(n-1)=n-1$, or the other way round. But, by definition of $\PP_n$, $\rho(n-1)$ cannot be $0$ or $n-1$. So this case is impossible. 
    \item \emph{Exactly one special vertex belongs to the chain.} Necessarily, it must be the case that $\rho(n-2)=0$ or $\rho(n-2)=n-1$, as the other vertex out of the chain cannot be special, as explained in the previous item.
    In the former case (the other one is analogous) we have that $y_{-1}-y_{0}=0$, hence $y_0=0$. Moreover, $y_0$ must appear also in $y_0-y_1$, and either this vertex belongs to the chain, or it does not.
    If not, we have $\rho(n-1)=1$, which is impossible by hypothesis, since $\rho(n-2)=0$.
    Thus, the vertex belongs to the chain. Additionally, since $y_0$ is zero, vertex $y_0-y_1$ must have sign $\leq 0$, but the other special vertex $(n-2,n-1)$ must be part of the chain too, and it has sign $\geq 0$.
    Again, these two facts together imply that the sign of the chain as well as the sign of the non-special vertices of $H$ are all null, hence $\vc y =\vc 0$.
\end{enumerate}
\medskip

We now prove that \eqref{enu:hardfarkastwo}$\implies$\eqref{enu:hardfarkasone}, by contraposition: we assume that \eqref{enu:hardfarkasone} does not hold (i.e., $\pi\not\in\PP_n$), and show that such a permutation is not representable by $G'_n$.

For $\pi$ not to be in $\PP_n$, one of the following must hold:
\begin{enumerate}[a)]
    \item\label{enu:converseone} $\rho(n-1)=0$;
    \item\label{enu:conversetwo} $\rho(n-1)=1 \land \rho(n-2)=0$;
    \item\label{enu:conversethree} $\rho(n-1)=n-2 \land \rho(n-2)=n-1$;
    \item\label{enu:conversefour} $\rho(n-1)=n-1$.
\end{enumerate}
If~(\ref{enu:converseone} holds (the case~(\ref{enu:conversefour} is symmetrical), then there must be a coefficient vector $\vc a$ such that $\celg{\vc a}{G'_n}(n-1)>\celg{\vc a}{G'_n}(i)$ for all $i\in[n-2]$.
More explicitly, we must have
\begin{align*}
    \celg{\vc a}{G'_n}(n-1) & = a_0 + 2 a_1 + a_2 + \dots + a_{n-2} \\
    %\celg{\vc a}{G'_n}(n-1) & > \celg{\vc a}{G'_n}(n-2) = a_0 + a_1 + \dots + a_{n-2} + a_{n-1} \\
    %\celg{\vc a}{G'_n}(n-1) & >  \celg{\vc a}{G'_n}(n-3) = a_0 + a_1 + \dots + 2 a_{n-2} \\
    %\celg{\vc a}{G'_n}(n-1) & > \celg{\vc a}{G'_n}(i) = a_0 + \dots + a_{i-1} + (n-i-1) a_{i+1} \quad \text{for all $1<i\leq n-2$}
    & > \sum_{j\le i} a_j + (n-i-1) \, a_{i+1} = \celg{\vc a}{G'_n}(i) \quad \text{for all } i\in[n-2].
\end{align*}
This yields the following system of inequalities:
\begin{align*}
    \celg{\vc a}{G'_n}(n-1) & > \celg{\vc a}{G'_n}(n-2) \implies a_1 > a_{n-1} \\
    \celg{\vc a}{G'_n}(n-1) & > \celg{\vc a}{G'_n}(n-3) \implies a_1 > a_{n-2} \\
    \celg{\vc a}{G'_n}(n-1) & > \celg{\vc a}{G'_n}(n-4) \implies a_1 + a_{n-2} > 2 \, a_{n-3} \\
    \celg{\vc a}{G'_n}(n-1) & > \celg{\vc a}{G'_n}(n-5) \implies a_1 + a_{n-3} + a_{n-2} > 3 \, a_{n-4} \\
    \celg{\vc a}{G'_n}(n-1) & > \celg{\vc a}{G'_n}(n-6) \implies a_1 + a_{n-4} + a_{n-3} + a_{n-2} > 4 \, a_{n-5} \\
    & \qquad\qquad\qquad\;\;\, \dots \\
    \celg{\vc a}{G'_n}(n-1) & > \celg{\vc a}{G'_n}(i) \implies a_1 + \sum_{j=i+2}^{n-2} a_j > (n-i-2) \, a_{i+1} \quad\text{for all } i\in[n-4].
    %\celg{\vc a}{G'_n}(n-1) > \celg{\vc a}{G'_n}(n-5) & \implies 3 \, a_1 > a_1 + a_{n-3} + a_{n-2} > 3 \, a_{n-4} \\
    %\celg{\vc a}{G'_n}(n-1) > \celg{\vc a}{G'_n}(n-6) & \implies \qquad 4 a_1 > a_1 + a_{n-4} + a_{n-3} + a_{n-2} > 4 \, a_{n-5} \\
    %& \;\, \dots
    %(n-i-2) \, a_1 > a_1 + \sum_{j=i+1}^{n-2} a_j & > (n-i-2) \, a_{i+1} \quad \text{for all } 0 \le i \le n-4.% \\
    %(n-4) \, a_1 > a_1 + a_3 + a_4 + \dots + a_{n-2} & > (n-4) \, a_{3} i=2 \\ 
    %(n-3) \, a_1 > a_1 + a_3 + \dots + a_{n-2} & > (n-3) \, a_{2} i=1
\end{align*}
Combining each inequality with the previous ones, we obtain:
\begin{equation}\label{eq:converseone}
\begin{split}
    a_1 & > a_{n-1} \\
    a_1 & > a_{n-2} \\
    2a_1 > a_1+a_{n-2}& > 2a_{n-3}  \implies a_1 > a_{n-3} \\
    3a_1 & > 3a_{n-4} \implies a_1 > a_{n-4}\\
    4a_1 & > 4a_{n-5} \implies a_1 > a_{n-5}\\
    & \dots \\
    (n-i-2) \, a_1 & > (n-i-2) \, a_{i+1} \quad \text{for all } 0 \le i \le n-4.
\end{split}
\end{equation}
For $i=0$ this leads to $a_1 > a_1$, which is clearly impossible.
Thus, we conclude that $\pi$ is not representable by $G'_n$ if $\rho(n-1)=0$ (or $\rho(n-1)=n-2$).

%Case~\eqref{enu:conversetwo}, i.e., $\rho(n-1)=1$ and $\rho(n-2)=0$ (as before, case~\eqref{enu:conversethree} is symmetric), implies that $\celg{\vc a}{G'_n}(n-1) > \celg{\vc a}{G'_n}(i)$ for all $i \in [n-3]$.
%Following a similar line of reasoning of~\eqref{enu:converseone}, we obtain a system of inequalities which is basically the same as~\eqref{eq:converseone} without the first inequality, i.e., $a_1 > a_{n-1}$, that in this case should not hold since $\celg{\vc a}{G'_n}(n-2)>\celg{\vc a}{G'_n}(n-1)$.
%Again, we end up with $a_1 > a_1$, which is impossible.

Finally, we are left to prove case~(\ref{enu:conversetwo} (again, case~(\ref{enu:conversethree} is symmetrical). 
It is clear to see that~(\ref{enu:conversetwo} implies that $\celg{\vc a}{G'_n}(n-1)>\celg{\vc a}{G'_n}(i)$ for all $i\in[n-3]$.
As before, this yields the following system of inequalities:
\begin{align*}
    a_1 & > a_{n-2} \\
    a_1 & > a_{n-3} \\
    a_1 & > a_{n-4} \\
    & \dots \\
    (n-i-2) \, a_1 & > (n-i-2) \, a_{i+1} \quad \text{for all } 0 \le i \le n-4.
\end{align*}
Hence, for $i=0$, this leads to the same contradiction $a_1 > a_1$.\qed
\end{proof}

As a consequence:
\begin{corollary}
    Almost all permutations are representable by the graph $G'_n$. More precisely
    \[
        \lim_{n \to \infty} R(G'_n)=1,
    \] 
\end{corollary}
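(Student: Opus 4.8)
The plan is essentially to harvest the two preceding results and put them together; no new machinery is needed. First I would unwind the definition of representativeness: by definition $R(G'_n)$ is the number of permutations $\pi \in S_n$ that are representable by $G'_n$, divided by $n!$. Theorem~\ref{thm:hardfarkas} (via the Farkas'-lemma criterion of Theorem~\ref{thm:reprfark}) identifies the set of permutations representable by $G'_n$ \emph{exactly} with $\PP_n$, so that
\[
    R(G'_n) = \frac{|\PP_n|}{n!}.
\]

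Next I would substitute the cardinality $|\PP_n| = n!\,(n-3)/(n-1)$ established in the lemma immediately preceding Theorem~\ref{thm:hardfarkas}, obtaining
\[
    R(G'_n) = \frac{n-3}{n-1} = 1 - \frac{2}{n-1},
\]
valid for every $n \ge 4$. Letting $n \to \infty$ gives $\lim_{n\to\infty} R(G'_n) = 1$, which is the claim; in fact this also makes quantitative how fast the convergence is.

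Since the combinatorial count of $\PP_n$ and the (harder) linear-algebraic characterization in Theorem~\ref{thm:hardfarkas} have already been carried out, there is no real obstacle at this stage. The only points worth a quick sanity check are bookkeeping ones: that $G'_n$ is well defined and strongly connected for $n \ge 4$ with distance-count matrix as displayed in Figure~\ref{fig:conn_graph}, and that the three clauses defining $\PP_n$ are genuinely mutually exclusive so the cardinality is not over- or under-counted. None of these require any new idea, so the corollary follows immediately.
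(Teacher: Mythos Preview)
Your proposal is correct and matches the paper's approach: the paper presents the corollary with the words ``As a consequence'' and no further proof, and your argument spells out exactly that consequence, combining Theorem~\ref{thm:hardfarkas} with the lemma counting $|\PP_n|=n!(n-3)/(n-1)$ to get $R(G'_n)=(n-3)/(n-1)\to 1$. The only superfluous detail is the parenthetical appeal to Theorem~\ref{thm:reprfark}: Theorem~\ref{thm:hardfarkas} already states representability directly, so you need not mention the Farkas criterion again.
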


\section{Conclusion and Open Problems}
This paper presented and studied for the first time the class of linear geometric centralities, providing general properties of expressivity and robustness.
Here is a list of problems this paper suggests but that we were unable to solve:
\begin{itemize}
    \item Is convexity necessary for a linear centrality to be rank monotone (Proposition~\ref{prop:rankmon})?
    \item Is it possible to find strongly connected graphs to distinguish pairs of non-proportional coefficients, in the cases of Theorem~\ref{thm:distinguish} where we had to rely on disconnected graphs?% $\widehat{G}_{h,k}$?\todo{slightly adapt this to the new proof}
    \item Is it possible to find some family $G''_n$ of strongly connected graphs such that $R(G''_n)=1$, at least for sufficiently large $n$?
\end{itemize}
A path that might be worth investigating is the study of distance-count matrices, as they vehiculate more information than degree sequences~\cite{BH90}, but have appeared much less in the literature.
For instance, it would be interesting to figure out if there exists a polynomial-time algorithm that decides whether a given matrix is the distance-count matrix of some graph;
as far as we know, this is still an open problem.

{\footnotesize
    \section*{\small Acknowledgements}
    
    This work was supported in part by project SERICS (PE00000014) under the NRRP
    MUR program funded by the EU - NGEU. Views and opinions expressed are however
    those of the authors only and do not necessarily reflect those of the European
    Union or the Italian MUR. Neither the European Union nor the Italian MUR can be
    held responsible for them.
}

\bibliographystyle{splncs04}
\bibliography{biblio,extra}

\begin{subappendices}

\section*{A. Proof of Theorem~\ref{thm:rhomid}}

For the reader's convenience we restate the theorem:
\begin{theorem-non}
    For $n>3$, consider the graph $G_n$ in Figure~\ref{fig:mdc_example}, and let $\pi \in S_n$ and $\rho=\pi^{-1}$. The following two statements are equivalent:
    \begin{enumerate}
        \item $\rho(0)<\rho(n-1)<\rho(1)$ or $\rho(1)<\rho(n-1)<\rho(0)$;
        \item the system~\eqref{eqn:farkas} has no non-negative non-trivial solution.
    \end{enumerate}
\end{theorem-non}
\begin{proof}
We start proving that (1)$\implies$(2). 
Let $\vc y$ be a non-negative solution to~\eqref{eqn:farkas-ter}; we will prove that necessarily $\vc y = \vc 0$, using Corollary~\ref{cor:farkas}.
The last $n-2$ columns of $C_{G_n}$ can be combined by subtracting from each column all the following ones, and then dividing by a suitable coefficient so that the topmost non-zero element is a $1$, yielding 
{
\begin{equation*} 
    \setlength{\arraycolsep}{3pt}
    \renewcommand\arraystretch{0.9}
    \begin{bmatrix}
    1 & n-1 & 0 & 0 & \dots & 0 & 0 & 0\\
    1 & 1 & 1 & 0 & \dots & 0 & 0 & 0\\
    1 & 1 & 0 & 1 & \dots & 0 & 0 & 0\\
    \vdots & \vdots & \vdots & \vdots & \ddots & \vdots & \vdots & \vdots \\
    1 & 1 & 0 & 0 & \dots & 1 & 0 & 0\\
    1 & 1 & 0 & 0 & \dots & 0 & 1 & 0\\
    1 & 1 & 0 & 0 & \dots & 0 & 0 & 1\\
    1 & 2 & \frac{n-3}{n-2} & 0 & \dots & 0 & 0 & 0\\
    \end{bmatrix}.
\end{equation*}
}
Now doing something similar for the first two columns (we subtract from each of them the last $n-2$ columns, and multiply by a suitable constant), we obtain
{
\begin{equation*} 
    \setlength{\arraycolsep}{3pt}
    \renewcommand\arraystretch{0.9}
    \begin{bmatrix}
    1 & 1 & 0 & 0 & \dots & 0 & 0 & 0\\
    0 & 0 & 1 & 0 & \dots & 0 & 0 & 0\\
    0 & 0 & 0 & 1 & \dots & 0 & 0 & 0\\
    \vdots & \vdots & \vdots & \vdots & \ddots & \vdots & \vdots & \vdots \\
    0 & 0 & 0 & 0 & \dots & 1 & 0 & 0\\
    0 & 0 & 0 & 0 & \dots & 0 & 1 & 0\\
    0 & 0 & 0 & 0 & \dots & 0 & 0 & 1\\
    \frac1{n-2} & \frac1{n-2} & \frac{n-3}{n-2} & 0 & \dots & 0 & 0 & 0\\
    \end{bmatrix}.
\end{equation*}
}
Let $T=\{\rho(i) \mid i=2,\dots,n-2\}$ (i.e., looking at the left-hand side of condition~\eqref{eqn:farkas-ter}, the indices of $\vc y$ involved in the last $n-3$ columns). Let also $p=\rho(0)$, $q=\rho(n-1)$, $r=\rho(1)$ (indices involved in the first three columns): by hypothesis, either $p<q<r$ or $r<q<p$. We assume $p<q<r$ (the other case is symmetrical). Note that necessarily $p<n-1$, $0<r$ and $0<q<n-1$.

\noindent\emph{Case 1.} If $0 \in T$, the condition imposed by the last $n-3$ columns imply
\begin{eqnarray*}
    y_t &=& y_{t-1} \quad \forall t \in T,\ 0<t<n-1\\
    y_0 &=& 0.
\end{eqnarray*}

On the other hand, for the first two (identical) columns condition~\eqref{eqn:farkas-ter} yields:
\[
        y_{p-1}-y_p+\frac1{n-2} \left(y_{q-1}-y_q\right) = 0,
\]
since $p,q\neq0,n-1$. Now start from $k=q-1$: if $k \in T$ then $y_k=y_{k-1}$, so we can iterate with a smaller value of $k$, until eventually $k=p$. Hence $y_{q-1}=y_{q-2}=\dots=y_p$.
Similarly, starting from $k=p-1$, if $k \in T$ we have $y_k=y_{k-1}$, and we can iterate until $k=0$. Hence $y_{p-1}=y_{p-2}=\dots=y_0=0$.

As a consequence, the last equation becomes
\begin{eqnarray*}
        -y_p+\frac1{n-2} \left(y_p-y_q\right) &=& 0\\
        -(n-2) y_p + y_p &=& y_q\\
        -(n-3) y_p &=& y_q.
\end{eqnarray*}
Since both $y_p$ and $y_q$ are non-negative, we must have $y_p=y_q=0$.

The equation implied by the third column is (using Iverson's bracket notation)
\[
    -y_r[r<n-1]+y_{r-1}+\frac{n-3}{n-2} \left(y_{q-1}-y_q\right) = 0,
\]
so plugging in $y_q=y_{q-1}=0$ we distinguish two cases:
\begin{itemize}
    \item if $r=n-1$, then $y_{n-2}=0$;
    \item if, instead $r<n-1$, then $y_r=y_{r-1}$, and starting from $k=n-2\in T$ (hence $y_k=y_{k-1}$), we can iterate with a smaller $k$ until reaching $r$, obtaining $y_{n-2}=y_{n-3}=\dots=y_{r}$.
\end{itemize}
In both cases, we can apply the same reasoning, starting from $k=r-1$, down to $q$, yielding $y_{r-1}=y_{r-2}=\dots=y_q=0$.

Finally, we notice that in all cases we have a set of equations involving all indices $t$, and having the form either $y_t=y_{t-1}$ or $y_t=0$ (and in particular $y_0=0$). So $\vc y=\vc 0$.

\noindent{\emph Case 2.} If $0 \not\in T$, then necessarily $p=0$. The equation for the first two columns this time is    
\[
        -y_0+\frac1{n-2} \left(y_{q-1}-y_q\right) = 0,
\]
and again starting from $k=q-1$ and iterating we obtain $y_{q-1}=\dots=y_0$, yielding
\[
        -(n-2)y_0 + y_0=y_q.
\]
Once more $y_0=y_q=0$, and we can proceed as above.

\medskip
We are now going to prove that (2)$\implies$(1), by contraposition.    
Let us assume that (1) does not hold, thus $\rho(n-1)>\rho(0)$ and $\rho(n-1)>\rho(1)$ (or, symmetrically, $\rho(0)>\rho(n-1)$ and $\rho(1)>\rho(n-1)$).

Under such conditions, given a coefficient vector $\vc a$, we have
\begin{align*}
    & \celg{\vc a}{G_n}(0) = a_0 + (n-1) a_1 \\
    & \celg{\vc a}{G_n}(1) = a_0 + a_1 + (n-2) a_2 \\
    & \celg{\vc a}{G_n}(n-1) = a_0 + 2 a_1 + (n-3) a_2,
\end{align*}
and, by hypothesis,
\begin{align*}
    a_0 + 2 a_1 + (n-3) a_2 & > a_0 + (n-1) a_1 && \implies a_2 > a_1 \\
    a_0 + 2 a_1 + (n-3) a_2 & > a_0 + a_1 + (n-2) a_2 && \implies a_1 > a_2,
\end{align*}
which is impossible.
Thus, $\pi$ is not representable by $G_n$, and by Theorem~\ref{thm:reprfark} system~\eqref{eqn:farkas} has a nonnegative non trivial solution $\vc y$, i.e., (2) cannot hold.\qed
\end{proof}
\end{subappendices}
\end{document}